\documentclass[a4paper]{article}
\usepackage{microtype}
\usepackage{bm}
\usepackage{fullpage}

\usepackage{amsmath, amssymb}
\usepackage{amsthm}

\bibliographystyle{plainurl}

\title{When the Optimum is also Blind: a New Perspective on Universal Optimization\footnote{An extended abstract of this paper appears in ICALP'17. The second author was partially supported by the ERC Starting Grant NEWNET 279352 and the SNSF Grant APPROXNET 200021\_159697/1.
The fourth author was partially supported by the National Science Centre of Poland Grant UMO-2016/21/N/ST6/00968.}}

\usepackage{color}

\def\DEBUG{true}

\ifdefined\DEBUG
\definecolor{marekgreen}{RGB}{0,185,0}
\definecolor{orange}{RGB}{255,128,0}

  \def\rem#1{{\marginpar{\raggedright\scriptsize #1}}}
  \newcommand{\fabr}[1]{\rem{\textcolor{red}{$\bullet$ #1}}}
  \newcommand{\marr}[1]{\rem{\textcolor{marekgreen}{$\bullet$ #1}}}
  \newcommand{\micr}[1]{\rem{\textcolor{blue}{$\bullet$ #1}}}
  \newcommand{\ster}[1]{\rem{\textcolor{orange}{$\bullet$ #1}}}
\else

  \newcommand{\fabr}[1]{ }
  \newcommand{\marr}[1]{ }
  \newcommand{\micr}[1]{ }
  \newcommand{\ster}[1]{ }

\fi

\author{Marek Adamczyk\thanks{University of Bremen, Germany, m.adamczyk@uni-bremen.de} \and Fabrizio Grandoni\thanks{IDSIA, USI-SUPSI, Lugano, Switzerland, fabrizio@idsia.ch} \and Stefano Leonardi\thanks{Sapienza University of Rome, Italy, leonardi@dis.uniroma1.it} \and Micha\l{} W\l{}odarczyk\thanks{University of Warsaw, Poland, m.wlodarczyk@mimuw.edu.pl}}





\newtheorem{theorem}{Theorem} 
\newtheorem{lemma}[theorem]{Lemma}
\newtheorem{corollary}[theorem]{Corollary}

\global\long\def\mP{\mathbb{P}}
\global\long\def\cC{{\cal C}}

\global\long\def\fs{{\cal S}}
\newcommand{\sm}{\setminus}
\newcommand{\eps}{\varepsilon}
\global\long\def\gpi{{g_\pi}}

\usepackage{verbatim, xspace}

\newcommand{\US}{\textsc{Universal Stochastic}\xspace}

\newcommand{\SC}{\textsc{Set Cover}\xspace}
\newcommand{\VC}{\textsc{Vertex Cover}\xspace}
\newcommand{\EC}{\textsc{Edge Cover}\xspace}
\newcommand{\CSM}{\textsc{Constrained Set Multicover}\xspace}

\newcommand{\NMFL}{\textsc{Non-Metric Facility Location}\xspace}
\newcommand{\MFL}{\textsc{Metric Facility Location}\xspace}

\newcommand{\MC}{\textsc{Multicut}\xspace}
\global\long\def\br#1{\left( #1 \right)}
\global\long\def\brq#1{\left[ #1 \right]}
\global\long\def\eps{\varepsilon}
\global\long\def\ex#1{\mathbb{E}\left[#1\right]}
\global\long\def\exls#1#2{\mathbb{E}_{#1}\left[#2\right]}
\global\long\def\ind#1{\mathbf{1}\left[ #1 \right]}
\global\long\def\chi#1{\chi\brq{#1}}
\global\long\def\pr#1{\mathbb{P}\left[ #1 \right]}
\global\long\def\prls#1#2{\mathbb{P}_{#1}\brq{#2}}
\global\long\def\setst#1#2{\left\{  #1\left|#2\right.\right\}  }
\global\long\def\set#1{\left\{  #1\right\}  }

\global\long\def\bigs{\mbox{\scriptsize big}}
\global\long\def\smalls{\mbox{\scriptsize sml}}
\global\long\def\poly#1{\mbox{poly}\br{#1}}

\begin{document}

\maketitle

\begin{abstract}
\noindent  Consider the following variant of the set cover problem. We are given a universe $U=\set{1,...,n}$ and a collection of subsets $\mathcal{C} = \set{S_1,...,S_m}$ where $S_i \subseteq U$. For every element $u \in U$ we need to find a set $\phi\br{u} \in \mathcal C$ such that $u\in \phi\br u$. Once we construct and fix the mapping $\phi:U \mapsto \mathcal C$ a subset $X \subseteq U$ of the universe is revealed, and we need to cover all elements from $X$ with exactly  $\phi(X):=\bigcup_{u\in X} \phi\br u$. The goal is to find a mapping such that the cover $\phi(X)$ is as cheap as possible.

This is an example of a universal problem where the solution has to be created before the actual instance to deal with is revealed.
Such problems appear naturally in some settings when we need to optimize under uncertainty and it may be actually too expensive to begin finding a good solution once the input starts being revealed.  A rich body of work was devoted to investigate such problems under the regime of worst case analysis, i.e., when we measure how good the solution is by looking at the worst-case ratio: universal solution for a given instance vs optimum solution for the same instance.

As the universal solution is significantly more constrained, it is typical that such a worst-case ratio is actually quite big. One way to give a viewpoint on the problem that would be less vulnerable to such extreme worst-cases is to assume that the instance, for which we will have to create a solution, will be drawn randomly from some probability distribution. In this case one wants to minimize the expected value of the ratio: universal solution vs optimum solution. Here the bounds obtained are indeed smaller than when we compare to the worst-case ratio.

But even in this case we still compare apples to oranges as no universal solution is able to construct the optimum solution for every possible instance.  What if we would compare our approximate universal solution against an optimal universal solution that obeys the same rules as we do?
We show that under this viewpoint, but still in the stochastic variant, we can indeed obtain better bounds than in the expected ratio model. For example, for the set cover problem we obtain $H_n$ approximation which matches the approximation ratio from the classic deterministic offline setup. Moreover, we show this for all possible probability distributions over $U$ that have a polynomially large carrier, while all previous results pertained to a model in which elements were sampled independently. Our result is based on rounding a proper configuration IP that captures the optimal universal solution, and using tools from submodular optimization.

The same basic approach leads to improved approximation algorithms for other related problems, including Vertex Cover, Edge Cover, Directed Steiner Tree, Multicut, and Facility Location.
\end{abstract}

\section{Introduction}

In a typical online problem part of the input is revealed gradually to an algorithm, which has to react to each new piece of the input by making irrevocable choices.
In an online covering problem the online input consists of a sequence of \emph{requests}, which have to be satisfied by the algorithm by buying \emph{items} at minimum total cost.

Some online applications have severe resource constraints, typically in terms of time and/or computational power. Hence even making an online (non-trivial) choice might be too costly. In these settings it makes sense to consider universal algorithms. Roughly speaking, the goal of these algorithms is to pre-compute a reaction to each possible input, so that the online choice can then be made very quickly (say, looking at some pre-computed table). Since the adversary has a lot of power in the universal setting, typically one assumes a stochastic input. In particular, the input is sampled according to some probability distribution $\pi$, which is either given in input or that can be sampled multiple times at polynomial cost per sample (\emph{oracle model}).

The most relevant prior work for this paper is arguably due to Grandoni et al.~\cite{GGLMSS13} (conference version in \cite{GGLMSS08}). The authors consider the universal stochastic version of some classical NP-hard covering problems such as set cover, non-metric facility location, multicut etc. They provide polynomial-time approximation algorithms for those problems in the \emph{independent activation model}, where each request $u$ is independently sampled with some known probability $p_u$. Crucially, in their work the approximation ratio is obtained by comparing the expected cost of the approximate solution with the expected cost of the optimal offline solution (that knows the future sampled input). For example, in the set cover case they present a polynomial-time algorithm that computes a mapping of expected cost at most $O(\log (nm)) \mathbb{E}[OPT_{off}(X)]$, where the expectation is taken over the sampling of $X$ according to $\pi$ and $OPT_{\mbox{\scriptsize off}}(X)$ is the minimum (offline) cost of a set cover of $X$. Here $n$ is size of the universe and $m$ the number of subsets. For $m \gg n$ this ratio becomes $O\br{\frac{\log m}{\log \log m}}$ and is tight. They also consider the universal (non-metric) facility location problem in the independent activation model, and provide a $O\br{\log n}$ approximation (in the above sense), where $n$ is the total number of clients and facilities. We remark that their method seems not to lead to any improved approximation factor in the metric version of the problem. We finally mention their $O(\log^2 n)$ approximation for universal multicut in the independent activation model, where $n$ is the number of nodes in the graph.

\subsection{Our Results and Techniques}
Comparing with the offline optimum as in \cite{GGLMSS13} might be too pessimistic. And often when we need to optimize under uncertainty we cannot really find a better benchmark, flagship example of it would be online problems. However, stochastic two-stage~\cite{IKMM04,RS04} and stochastic adaptive~\cite{MSU99,DGV08,CIKMR09,GN13} problems have proven that one can actually compare an approximate solution with an optimum algorithm that is not omnipotent but obeys the same rules of the model as the approximate one. This inspired us to ask the following question: 
\begin{quote} \emph{Is it also possible in a stochastic universal problem to compare our algorithm with an optimum solution that is restricted by the model in the same way as we are?}
\end{quote}
In this paper we show that we can do this indeed. In this way we manage to obtain tighter approximation ratios --- which of course are compared to a weaker benchmark, but this benchmark itself can be interpreted as more fair and meaningful --- and it also allows us to approach more general problems. 

\subsubsection{Universal Stochastic Set Cover}

We shall describe carefully the \US \SC problem in this section so that we will fully present the model. 
For the remaining problems their full statements will appear in appropriate sections.

In the \US \SC problem we are given in input a universe $U = \set{1,2,...,n}$, and a collection $\cC\subseteq 2^U$ of $m$ subsets $S\subseteq U$, each one with an associated cost $c(S)$. We need to a priori map each element $u\in U$ into some set $\phi(u)\in \cC$.
Then a subset $X\subseteq U$ is sampled according to some probability distribution $\pi$ (whose features are discussed later), and we have to buy the sets $\phi(X)=\bigcup_{u\in X}\phi(u)$ as the cover of $X$. Our goal is to minimize the expected value of the total cost, i.e.,~$\exls{X\sim \pi}{c\br{\phi(X)}}=\exls{X\sim \pi}{\sum_{S\in \phi(X)}c(S)}$. 
One of the most important aspects in our model is that we \underline{do not} compare ourselves against the expected value of an optimum offline solution for a given scenario, that is, \underline{not} against $\exls{X\sim \pi}{OPT_{\mbox{\scriptsize off}}\br{X}}$. 
What we compare ourselves with is
$$
\min_{\phi: \forall_{u\in U}u \in \phi\br{u} }\exls{X\sim \pi}{\sum_{S\in \phi(X)}c(S)},
$$
i.e., the expected cost of an \emph{optimal universal mapping}.

The results depend on the properties of the sampling process. Here we will focus on the most common models, which are defined as follows:
\begin{itemize}
\item \textbf{Scenario model:} Here we are given in input all the sets $X_1,...,X_N \subseteq U$ such that $\prls{X\sim \pi}{X = X_i} > 0$ with the associated probability. This model allows for explicit use of all the scenarios in the computations. For the \US \SC we obtain $O(\log n)$-approximation in this case even in the weighted case.
\item \textbf{Oracle model:}  This is the most general model. We have a black-box access to an oracle $\Pi$ from which we can sample a scenario from distribution $\pi$. We assume that taking a sample requires polynomial time. In this model we can find an $O\br{\log n}$-approximation for \US \SC in polynomial time only for the unweighted case; in the weighted case we achieve the same approximation factor in pseudo-polynomial time depending on the largest cost $\max_{S\in \mathcal{C}} c\br{S}$. We can also show that the same cannot be achieved in polynomial time.
\item \textbf{Independent activation model: } In this model we assume that every element $u \in U$ is independently sampled with some given probability $p_u$. This model does not capture correlations of elements, and therefore sometimes it is not fully realistic. Though it cannot be represented by a polynomial number of scenarios, its nice properties allow one to develop good approximation algorithms for several problems. In this setting we are able to approximate \US \SC within a factor $O\br{\log n}$ in polynomial time even in the weighted case.
\end{itemize}
Our results are obtained by defining a proper configuration LP (with an exponential number of variables) that captures the optimal mapping. We are able to solve this LP via the ellipsoid method using a separation oracle. Somehow interestingly, our separation oracle has to solve a submodular minimization problem. Then we can round the fractional solution in a standard way.

\subsubsection{Overview of the results}
The robustness of our framework allows us to address
universal extensions of several covering problems.
After expressing the goal as a true approximation task,
we can adapt tools from the rich theory of approximation algorithms.

Here we give an overview of our results. Detailed statements of the theorems appear in appropriate sections.

\paragraph*{Scenario model}
In this setting, we are able to construct an LP-based polynomial-time
$O(\log n)$-approximation to the universal stochastic version of \SC (Theorem~\ref{thm:sc}),
which generalizes to \NMFL and \CSM .
In fact, the latter algorithm achieves an approximation guarantee of exactly $H_n$.
Different rounding procedure leads to a 2-approximation for \US \textsc{Vertex Cover}.
All these approximation ratios match the best guarantees obtained in the
deterministic world.
What is more, the exact polynomial time algorithm for \textsc{Edge Cover} extends to the scenario model.

\paragraph*{Independent activation model}
In this setting, we are able to obtain several results in flavour of the $O(1)$-approximation for the \textsc{Maybecast} problem by Karger and Minkoff~\cite{KM00}.
We present a 6.33-approximation for \US \textsc{Metric Facility Location}
(Theorem~\ref{thm:fl}) and an $O(\log n)$-approximation
for \US \MC (Theorem~\ref{thm:mc}).
As an intermediate result, we obtain a 4.75-approximation for \US \MC on trees.

\paragraph*{Oracle model}
We can generalize most of our results for the scenario model to this setting, with the restriction that in the weighted case we get a pseudo-polynomial running time. This is discussed in Section \ref{sec:sc:oracle}.

\subsection{Related work}

Other universal-like problems have been addressed in the literature. For instance, in the universal
TSP problem one computes a permutation of the nodes that is then used to visit a given subset of nodes. This problem has been studied both in the worst-case scenario
for the Euclidean plane~\cite{PB89, BG89} and general
metrics~\cite{Jia05, GHR06, HKL06}, as well as in the
average-case~\cite{Jaillet88,BJO90,SS08,GGLS,ST08}. (For the
related problem of universal Steiner tree,
see~\cite{KM00,Jia05,GHR06,GGLS}.)  
Jia et al.~\cite{Jia05} introduced the 
universal set cover and universal facility location problems, and studied them in the
worst-case: they show that the adversary is very powerful in such models, and give
 nearly-matching $\Omega(\sqrt{n})$ and $O(\sqrt{n\log n})$ bounds on the competitive factor. These problems have been later studied by Grandoni et al. in the independent activation model \cite{GGLMSS13}, as already mentioned before. 

A somewhat related topic is \emph{oblivious
routing}~\cite{Rac02, HHR03,
  BKR03} (see, e.g.,~\cite{VB81, Vocking01} for special cases). A tight logarithmic competitive result as well as a
polynomial-time algorithm to compute the best routing is known in
the worst case for undirected graphs~\cite{ACFKR03,Raecke}. For \emph{oblivious
routing on directed graphs} in the worst case the lower bound of
$\Omega(\sqrt{n})$~\cite{ACFKR03} nearly matches upper
bounds in~\cite{HKLR05-dir} but for the average case. The authors of~\cite{HKLR05-randdir} give an $O(\log^2
n)$-competitive oblivious routing algorithm when demands are
chosen randomly from a known demand-distribution; they also use ``demand-dependent'' routings
and show that these are necessary.

Another closely related notion is the one of \emph{online problems}. These problems have a long
history (see, e.g.,~\cite{BEY98,FW98}), and there have been many
attempts to relax the strict worst-case notion of competitive
analysis: see, e.g.,~\cite{DorriLO05,AlbersLeon99,GGLS} and the
references therein. Online problems with stochastic inputs
(either i.i.d.\ draws from some distribution, or inputs arriving
in random order) have been studied, e.g., in the context of
optimization problems~\cite{Mey-OFL01, MMP01, GGLS, AGLS16}, secretary
problems~\cite{Freeman-survey}, mechanism
design~\cite{HajiaghayiKP04}, and matching problems in
Ad-auctions~\cite{MSVV07}.

Alon et al.~\cite{AAABN} gave the first online algorithm for set
cover with a competitive ratio of $O(\log m \log n)$; they used an
elegant primal-dual-style approach that has subsequently found
many applications (e.g.,~\cite{AAABN04,BuchbinderNaor}).
This ratio is the best possible under complexity-theoretic
assumptions~\cite{Feige-Korman}; even unconditionally, no
deterministic online algorithm can do much better than
this~\cite{AAABN}. Online versions of \emph{metric} facility
location are studied in both the worst
case~\cite{Mey-OFL01,fotakis03}, the average case~\cite{GGLS}, as
well as in the stronger \emph{random
  permutation model}~\cite{Mey-OFL01}, where the adversary chooses a set
of clients unknown to the algorithm, and the clients are presented
to us in a random order. It is easy to show that for our problems,
the random permutation model (and hence any model where elements
are drawn from an \emph{unknown} distribution) are as hard as the
worst case.

One can of course consider the (offline) stochastic version of optimization problems. For example, $k$-stage
stochastic set cover is studied in \cite{IKMM04,SwamyS04}, with an improved approximation factor (independent from $k$) later given in~\cite{Srinivasan07}.

The result for the Oracle model are based on the Sample Average Approximation approach, see~\cite{CCP05} for the application most relevant to our work.

As mentioned before, in two-stage stochastic problems~\cite{RS04,IKMM04} and stochastic adaptive problems~\cite{DGV08,MSU99,CIKMR09,GN13} it is possible to compare a given algorithm with an optimum algorithm which is similarly constrained, and this is what shed a light on the possibility of obtaining results in the same spirit for the universal stochastic optimization.

In two recent papers~\cite{EeIJS16,FMSSSSZ14}, the authors looked at universal optimization over scenarios, but compared against the average offline optimum, and not the optimum universal solution as we do.
All the properties of submodular functions used in our work can be found here~\cite{Vondrak:PhD}.

\section{Preliminaries}
Here we give some basic definitions and properties.
Given a universe $U$, we call a function $f:2^U \rightarrow \mathbb{R}$ submodular if $f(A) + f(B) \ge f(A\cap B) + f(A\cup B)$ for each pair of sets $A,B \subseteq U$. Function $f$ is monotone if $f(B)\geq f(A)$ for $A\subseteq B$. When considering a submodular function, we assume that it is implicitly given in the form of an oracle that can be queried on a specific $A\subseteq U$ and returns the value $f(A)$ in constant time.

\begin{theorem}[Iwata et al. \cite{iwata}]\label{th:sub-poly}
There is an algorithm to minimize a given submodular function $f:2^U \rightarrow \mathbb{N}$ in polynomial time in $|U|$ and in the number of bits needed to encode the largest value of $f$.
\end{theorem}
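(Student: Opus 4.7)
The plan is to invoke the classical framework of Grötschel, Lovász and Schrijver based on the Lovász extension, which was the first approach to yield a polynomial-time procedure for submodular minimization; the Iwata et al.\ reference gives a fully combinatorial strongly polynomial algorithm, but the ellipsoid-based proof captures the essential dependence stated in the theorem. First I would define the Lovász extension $\hat f:[0,1]^U \to \mathbb{R}$ by sorting the coordinates of $x$ as $x_{i_1}\geq x_{i_2}\geq \ldots \geq x_{i_n}$ and setting $\hat f(x) = \sum_{j=1}^n x_{i_j}\bigl(f(\{i_1,\ldots,i_j\}) - f(\{i_1,\ldots,i_{j-1}\})\bigr)$. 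Standard facts due to Lovász then give: (i) $\hat f(\chr{S})=f(S)$ for every $S\subseteq U$; (ii) $\hat f$ is convex if and only if $f$ is submodular; and (iii) $\hat f$ is piecewise linear with linear pieces over simplices determined by orderings of the coordinates, so its minimum over the cube is attained at a vertex, whence $\min_{x\in [0,1]^U}\hat f(x) = \min_{S\subseteq U} f(S)$.

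Next I would minimize the convex function $\hat f$ over the unit cube by the ellipsoid method. A subgradient at any $x$ is produced by the greedy algorithm: if $\sigma$ is a permutation that sorts the coordinates of $x$ in non-increasing order, the vector of marginal increments $g_j = f(S_j) - f(S_{j-1})$, with $S_j = \{\sigma(1),\ldots,\sigma(j)\}$, lies in the base polytope of $f$ and is a subgradient of $\hat f$ at $x$. Each subgradient query makes $O(|U|)$ calls to the value oracle. Since $f$ is integer-valued and bounded in absolute value by the quantity $M$ whose bit-length appears in the statement, the ellipsoid method returns, in time polynomial in $|U|$ and $\log M$, a point $x^{\star}\in[0,1]^U$ with $\hat f(x^{\star})$ within $1/3$ of the optimum.

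Finally, to extract an actual minimizing set I would round $x^{\star}$: picking a uniformly random threshold $\theta\in(0,1)$ and returning $S_\theta=\{u:x^{\star}_u\geq \theta\}$ gives $\mE[\hat f(\chr{S_\theta})]=\hat f(x^{\star})$, so some level set $S_\theta$ attains the minimum of $f$; since only $|U|$ distinct level sets arise, the best of them can be found deterministically by trying all $|U|+1$ breakpoints of $x^{\star}$ and returning the one of smallest value. The integrality of $f$ turns the $1/3$-approximate fractional optimum into an exact combinatorial minimum. The main obstacle, and the reason the Iwata et al.\ algorithm is cited rather than re-proved, is precisely this last step: ensuring that the ellipsoid-based fractional optimum rounds to an exact integral minimizer in strongly polynomial time, which requires careful handling of precision and of the extreme-point structure of the base polytope; the combinatorial augmenting-path machinery of Iwata--Fleischer--Fujishige circumvents these issues entirely.
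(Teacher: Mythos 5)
The paper does not give a proof of this statement at all: it is quoted as an external result and cited to Iwata, Fleischer and Fujishige, whose algorithm is a fully combinatorial, augmenting-path style procedure. Your proposal instead reconstructs the classical Gr\"otschel--Lov\'asz--Schrijver route through the Lov\'asz extension and the ellipsoid method, which is a genuinely different argument. Your sketch is essentially sound: the Lov\'asz extension is convex exactly for submodular $f$, the greedy permutation yields a subgradient, the ellipsoid method over the cube finds a fractional point within additive $1/3$ of the optimum in time polynomial in $|U|$ and $\log M$, and because $f$ is integer-valued the best of the $|U|+1$ level sets of $x^{\star}$ is an exact minimizer. Since the theorem as stated is only a weakly polynomial claim (polynomial in $|U|$ and in the bit-length of the largest value of $f$), the GLS argument you give suffices for it, and you are correct that the only real delicacy is the ellipsoid precision and termination bookkeeping. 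What the combinatorial Iwata et al.\ algorithm buys, and the reason the paper cites it, is that it avoids the ellipsoid machinery entirely and (in its strongly polynomial variants) removes the dependence on the encoding length of $f$'s values; but for the purposes of this paper either reference would do. In short: correct, but a different proof than the one the paper points to.
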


Let us introduce a function $g_{\pi}:2^U \rightarrow \mathbb{R}^+$, $g_{\pi}(A) = \mP[A \cap X \neq \emptyset]$
where $X$ is drawn from the distribution $\pi$. Our framework exploits crucially the fact that $g_{\pi}$ is a submodular function.

\begin{lemma}\label{lem:sub}
Function $g_{\pi}$ is submodular and monotone.
\end{lemma}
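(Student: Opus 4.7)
The plan is to verify monotonicity directly from the definition of $g_\pi$, and then establish submodularity through the equivalent characterization via diminishing marginal returns, which is convenient here because single-element marginals collapse to clean probabilities.

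For monotonicity, I would observe that if $A \subseteq B$, then the event $\{A \cap X \neq \emptyset\}$ is contained in the event $\{B \cap X \neq \emptyset\}$ pointwise over realizations of $X$, so $g_\pi(A) \leq g_\pi(B)$ is immediate. No use of the structure of $\pi$ beyond monotonicity of probability is needed.

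For submodularity, I would prove the equivalent diminishing-returns inequality: for all $A \subseteq B \subseteq U$ and $u \in U \setminus B$,
\[
g_\pi(A \cup \{u\}) - g_\pi(A) \;\geq\; g_\pi(B \cup \{u\}) - g_\pi(B).
\]
The key step is to rewrite each marginal in closed form. Since $\{X \cap A \neq \emptyset\} \subseteq \{X \cap (A \cup \{u\}) \neq \emptyset\}$, the left-hand side equals
\[
\mathbb{P}\bigl[X \cap (A \cup \{u\}) \neq \emptyset,\; X \cap A = \emptyset\bigr] \;=\; \mathbb{P}[u \in X,\; X \cap A = \emptyset],
\]
and analogously the right-hand side equals $\mathbb{P}[u \in X,\; X \cap B = \emptyset]$. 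Because $A \subseteq B$, the event $\{X \cap B = \emptyset\}$ is contained in $\{X \cap A = \emptyset\}$, so intersecting both with $\{u \in X\}$ preserves this inclusion and gives the desired inequality. Invoking the standard equivalence between the diminishing-returns property and the lattice definition of submodularity finishes the proof.

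I do not expect any real obstacle: the argument is essentially a tautology on events, and the only subtlety is choosing the right formulation (marginals versus the symmetric lattice inequality). An alternative, equally short route would be to set $h(A) = 1 - g_\pi(A) = \mathbb{P}[X \cap A = \emptyset]$, note that $\{X \cap A = \emptyset\} \cap \{X \cap B = \emptyset\} = \{X \cap (A \cup B) = \emptyset\}$, and use inclusion–exclusion together with $\{X \cap A = \emptyset\} \cup \{X \cap B = \emptyset\} \subseteq \{X \cap (A \cap B) = \emptyset\}$ to show $h$ is supermodular, hence $g_\pi$ is submodular. Either route fits in a few lines.
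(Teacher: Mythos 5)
Your proof is correct, but it follows a different route than the paper's. The paper writes $g_\pi(A) = \sum_{X \subseteq U}\pi(X)\cdot\ind{A\cap X \neq \emptyset}$ and observes that each summand $A \mapsto \ind{A\cap X\neq\emptyset}$ is a coverage (hence submodular) function, so $g_\pi$ is a non-negative combination of submodular functions and therefore submodular. You instead work at the level of probabilities: in your main argument you compute the marginal $g_\pi(A\cup\{u\}) - g_\pi(A) = \mathbb{P}[u\in X,\, X\cap A=\emptyset]$ and use monotonicity of events in $A$ to get diminishing returns; in your alternative you pass to $h(A)=\mathbb{P}[X\cap A=\emptyset]$ and use inclusion--exclusion on events to show supermodularity of $h$. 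The paper's decomposition is the shortest once you accept that coverage functions are submodular and that the cone of submodular functions is closed under non-negative combinations, while your argument is more self-contained --- it does not presuppose those two facts and instead derives everything from basic set/event manipulations. Both are clean and essentially equally short; the difference is where the ``standard facts'' are outsourced.
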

\begin{proof}
 Observe that $g_{\pi}(A) = \sum_{X \subseteq U} \pi(X) \cdot \ind{A \cap X \neq \emptyset}$.
Function $A \rightarrow \ind{A \cap X \neq \emptyset}$ is submodular
and a combination of such functions with positive coefficients is submodular. The monotonicity holds trivially by definition.
\end{proof}

\section{Universal Set Cover}
\label{sec:sc}

In this section we present our approximation algorithm for \US\SC. We start by presenting an $O(\log n)$ approximation in the scenario and independent activation model (Section \ref{sec:sc:scenario}). Then we achieve the same approximation factor in the oracle case (Section \ref{sec:sc:oracle}) though in pseudo-polynomial time (polynomial time for the cardinality case). In Section \ref{sec:sc:lb} we argue that pseudo-polynomial time is indeed needed in order to get a sub-polynomial approximation factor (for the weighted case). 
In the appendix we present approximation algorithms for some special cases (Section \ref{sec:vertexCover}) and some generalizations (Sections \ref{sec:nmfl}, \ref{sec:csm}) of \US\SC.

\subsection{The Scenario and Independent Activation Model}
\label{sec:sc:scenario}


We let $n=|U|$ be the number of elements in the universe. Recall that, for $B\subseteq U$, $g_{\pi}(B)=\mP_{X\sim \pi}[B\cap X\neq \emptyset]$. As mentioned before, $g_{\pi}$ is a submodular function over the universe $U$.
For our goals it is sufficient that $g_\pi$ can be evaluated in polynomial time. This clearly holds both in the scenario model and in the independent activation model.

We start by expressing our problem as the following integer program.
\begin{align}
\min\quad & \sum_{S \in \cC}c(S)\sum_{B\subseteq S}y_{B}^{S}\cdot g_{\pi}(B) & \text{(CONF-IP-SC)}\nonumber \\
\mbox{s.t.} \quad & \sum_{B\ni u}\sum_{S\supseteq B}y_{B}^{S}\geq1 & \forall_{u \in U} \nonumber\\
 & y_{B}^{S} \in \{0,1\} & \forall_{S \in \cC} \forall_{B\subseteq S}  \label{cons:integrality}.
\end{align}
Intuitively, $y_{B}^{S}=1$ means that exactly the elements $B$ of $S$ are mapped into $S$, i.e. $B=\{u\in U: \phi(u)=S\}$.

\begin{lemma}
Integer program (CONF-IP-SC) is equivalent to \US\SC.
\end{lemma}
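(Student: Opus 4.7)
The plan is to establish the equivalence by showing the two optima coincide, proving each direction separately. The bridge between the two objectives is the identity
\begin{equation*}
\exls{X\sim\pi}{c(\phi(X))} \;=\; \sum_{S\in\cC} c(S)\cdot \pr{S \in \phi(X)} \;=\; \sum_{S\in\cC} c(S)\cdot g_\pi(\phi^{-1}(S)),
\end{equation*}
which holds because $S\in\phi(X)$ exactly when some element of $\phi^{-1}(S)$ is sampled, an event of probability $g_\pi(\phi^{-1}(S))$ by definition of $g_\pi$.

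For the direction ``IP optimum $\le$ US SC optimum,'' I would take any mapping $\phi$ and set $y^S_{\phi^{-1}(S)} := 1$ for every $S\in\cC$ (and all other variables to $0$). For each $u\in U$, the unique nonzero term of $\sum_{B\ni u}\sum_{S\supseteq B} y_B^S$ is the one with $S = \phi(u)$ and $B=\phi^{-1}(\phi(u))$, so the covering constraint is satisfied. The resulting IP objective evaluates to $\sum_S c(S)\, g_\pi(\phi^{-1}(S))$, which by the identity above equals $\exls{X\sim\pi}{c(\phi(X))}$. Empty preimages contribute $c(S)\cdot g_\pi(\emptyset)=0$ and cause no harm.

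For the reverse direction, given any feasible $0/1$ solution $\{y^S_B\}$, for each $u\in U$ I would pick some pair $(B_u,S_u)$ with $u\in B_u\subseteq S_u$ and $y^{S_u}_{B_u}=1$ (which exists by feasibility) and define $\phi(u):=S_u$. Then $\phi^{-1}(S) \subseteq \bigcup_{B:\, y^S_B=1} B$, and monotonicity of $g_\pi$ combined with the union bound
\begin{equation*}
\pr{\textstyle\bigcup_i (B_i\cap X)\neq \emptyset} \;\le\; \sum_i \pr{B_i\cap X\neq \emptyset}
\end{equation*}
yield $g_\pi(\phi^{-1}(S)) \le \sum_{B\subseteq S} y^S_B\cdot g_\pi(B)$ for every $S\in\cC$. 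Weighting by $c(S)\ge 0$ and summing gives $\exls{X\sim\pi}{c(\phi(X))} \le \sum_S c(S)\sum_{B\subseteq S} y^S_B\, g_\pi(B)$.

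The only subtle point is the second direction: a feasible IP solution may cover an element $u$ by several $(B,S)$ configurations simultaneously, so extracting a mapping forces an arbitrary choice, and one has to control the resulting ``over-coverage'' when translating to an expected cost. Subadditivity of the probability $g_\pi$ (a direct union bound, or equivalently a consequence of Lemma~\ref{lem:sub}) is exactly what charges this over-coverage back to the IP objective, and no further combinatorial argument is needed.
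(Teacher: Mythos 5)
Your proof is correct, and the second direction takes a genuinely different route from the paper's. Where you directly define $\phi$ from an arbitrary feasible $0/1$ solution and charge the over-coverage via subadditivity of $g_\pi$ (i.e.\ $g_\pi\bigl(\bigcup_i B_i\bigr)\le\sum_i g_\pi(B_i)$, applied after the monotone inclusion $\phi^{-1}(S)\subseteq\bigcup_{B:\,y^S_B=1}B$), the paper instead runs an iterative normalization: it repeatedly replaces one of several covering pairs $(B,S)$ for an over-covered element $u$ by $(B\setminus\{u\},S)$, argues via monotonicity alone that the objective cannot increase, and concludes that some \emph{optimal} IP solution has exactly one covering pair per element, from which $\phi$ is read off. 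Your version is shorter and bounds the cost of \emph{every} feasible IP solution, not just an optimal one, at the modest price of invoking subadditivity on top of monotonicity; the paper's version uses only monotonicity but yields a slightly weaker structural statement (existence of a well-shaped optimum) rather than a pointwise bound. Both rely on the same identity $\exls{X\sim\pi}{c(\phi(X))}=\sum_{S}c(S)\,g_\pi(\phi^{-1}(S))$, and your first direction is essentially identical to the paper's.
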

\begin{proof}
It is easy to translate a mapping $\phi: U \rightarrow \cC$ into some feasible solution to (CONF-LP-SC).
All variables are zeros by default and for each $S \in \cC$ such that $\phi^{-1}(S)$ is non-empty,
we set $y_{\phi^{-1}(S)}^S = 1$.
Note that always $\phi^{-1}(S) \subseteq S$.
In that setting the objective value equals the expected cost of the covering.

Let us fix some feasible solution $\{y_B^S\}_{S \in \cC,\,B\subseteq S}$.
We know that for each $u \in U$ there is some pair $(B,S)$
so that $u\in B$ and $y_B^S = 1$ (we will call it a \textit{covering pair}).
As long as there are many covering pairs for some $u$, we replace one of them
with $(B \sm \{u\}, S)$.
The new solution is still feasible and the objective value is no greater as function
$g_{\pi}$ is monotone.
Therefore there exists an optimal solution so that each $u \in U$ admits exactly one
covering pair $(B_u, S_u)$.
We can define $\phi(u) = S_u$ to obtain a covering with expected cost equal to
the value of the objective function.
\end{proof}

We obtain a linear relaxation (CONF-LP-SC) of (CONF-IP-SC) by replacing the integrality constraints \eqref{cons:integrality} with $y^{S}_{B}\geq 0$. (CONF-LP-SC) has an exponential number of variables: in order to solve it we consider its dual, and provide a separation oracle to solve it. Interestingly, our separation oracle uses submodular minimization.
\begin{lemma}\label{lem:LPsolution}
(CONF-LP-SC) can be solved in polynomial time when $g_\pi$ can be evaluated in polynomial time.
\end{lemma}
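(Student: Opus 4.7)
The plan is to solve (CONF-LP-SC) by the ellipsoid method applied to its dual, for which I will exhibit a polynomial-time separation oracle based on submodular minimization.

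First I would write the dual. Introducing a variable $x_u$ for each covering constraint indexed by $u \in U$ and noting that each primal variable $y_B^S$ has coefficient $c(S)\cdot g_\pi(B)$ in the objective and appears in the constraints of precisely the elements $u \in B$, the dual reads
\begin{align*}
\max \quad & \sum_{u \in U} x_u \\
\mathrm{s.t.} \quad & \sum_{u \in B} x_u \leq c(S)\cdot g_\pi(B) & \forall\, S \in \cC,\ \forall\, B \subseteq S \\
& x_u \geq 0 & \forall\, u \in U.
\end{align*}
This has polynomially many variables and exponentially many constraints, so it suffices to design a separation oracle: given $(x_u)_{u\in U}$, decide whether some constraint is violated, or equivalently determine whether
$$
\min_{S \in \cC}\ \min_{B \subseteq S}\ \Big( c(S)\cdot g_\pi(B) - \sum_{u \in B} x_u \Big)
$$
is strictly negative, and in that case return a witness pair $(S,B)$.

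Next I would argue this minimum can be computed in polynomial time. Iterating over the $m = |\cC|$ choices of $S$ is fine, so fix $S$ and consider the set function $h_S : 2^S \to \mathbb{R}$ defined by $h_S(B) = c(S) \cdot g_\pi(B) - \sum_{u \in B} x_u$. By Lemma~\ref{lem:sub} the function $g_\pi$ is submodular, so $c(S)\cdot g_\pi$ is submodular (nonnegative scaling preserves submodularity), and subtracting the modular function $\sum_{u\in B} x_u$ leaves a submodular function. The ground set $S$ has size at most $n$, and since $g_\pi$ is evaluable in polynomial time (by hypothesis) and $c(S), x_u$ are part of the input, $h_S$ is oracle-evaluable in polynomial time. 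Thus by Theorem~\ref{th:sub-poly} we can minimize $h_S$ in polynomial time, after rescaling $h_S$ by the common denominator of the rationals appearing in its definition so as to apply the stated integer-valued version. Repeating over all $S$ yields the separation oracle.

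With the separation oracle in hand, the ellipsoid method solves the dual in polynomial time, and by the standard primal-recovery argument (solve the primal restricted to the polynomial set of dual constraints generated during the ellipsoid run) we obtain an optimal primal solution of (CONF-LP-SC) in polynomial time. The only conceptual step that is not a direct application of known machinery is recognizing that the separation problem reduces to submodular minimization; the main technical point to verify is precisely the submodularity claim for $h_S$, which is immediate from Lemma~\ref{lem:sub} together with the fact that submodularity is closed under nonnegative scaling and addition of modular functions.
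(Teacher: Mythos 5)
Your proof is correct and follows essentially the same route as the paper: dualize (CONF-LP-SC), observe the dual has polynomially many variables and exponentially many constraints, and separate by minimizing the submodular function $h_S(B)=c(S)\,g_\pi(B)-\sum_{u\in B}\alpha_u$ over $B\subseteq S$ for each $S\in\cC$ via Theorem~\ref{th:sub-poly}. The extra details you supply (rescaling to integer values and the standard primal-recovery step after the ellipsoid run) are consistent with, and slightly more explicit than, the paper's exposition.
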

\begin{proof}
We show how to solve the dual of (CONF-LP-SC), which is as follows:
\begin{align}
\max\quad & \sum_{u\in U}\alpha_{u} &  \text{(DP-SC)} \nonumber\\
\mbox{s.t.}\quad & \sum_{u\in B}\alpha_{u}\leq c(S)\cdot \gpi(B) & \forall_{S\in\cC} \forall_{B\subseteq S} \nonumber\\
 & \alpha_{u}\geq 0 & \forall_{u \in U} \nonumber.
\end{align}
Observe that (DP-SC) has a polynomial number of variables and an exponential number of constraints. In order to solve (DP-SC), it is sufficient to provide a (polynomial-time) separation oracle, i.e. a procedure that, given a tentative solution $\{\alpha_u\}_{u\in U}$, either determines that it is feasible or provides a violated constraint.

This reduces to check, for each given $S\in \cC$, whether there exists $B\subseteq S$ such that
$\sum_{u\in B}\alpha_{u} > c(S)\cdot g_{\pi}(B)$.
In other terms, we wish to determine whether the minimum of function $h_S(B) := c(S)\cdot g_{\pi}(B) - \sum_{u\in B}\alpha_{u}$ is negative. Observe that the value of $h_S(B)$ can be computed in polynomial time for a given $B$. Note also that $h_S$ is submodular: indeed $g_{\pi}$ is submodular, costs $c(S)$ are non-negative by assumption, and $-\sum_{u\in B}\alpha_{u}$ is linear (hence submodular). Hence we can minimize $h_S$ over $B\subseteq S$ in polynomial time via Theorem \ref{th:sub-poly}.\footnote{
In order to solve the configuration LP, there is an alternative to finding a separation oracle for the dual. We can transform the configuration LP into a  optimization program where we need to minimize a sum of Lovasz's extensions~\cite{Vondrak:PhD}, which are convex functions, over a convex region. This approach would be possibly more efficient, but we have chosen the one above for a simpler presentation.
}
\end{proof}

Given the optimal solution to (CONF-LP-SC), it is sufficient to round it with the usual randomized rounding algorithm for set cover.

\begin{lemma}\label{lem:round}
The optimal solution to (CONF-LP-SC) can be rounded to an integer feasible solution while increasing the cost by a factor $O(\log n)$ in expected polynomial time.
\end{lemma}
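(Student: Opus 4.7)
The plan is to apply the textbook randomized rounding for set cover, with all LP values scaled up by a factor $\alpha = \Theta(\log n)$, directly to the variables $\{y^S_B\}$. Concretely, for each pair $(S,B)$ with $S \in \cC$ and $B \subseteq S$, independently put $(S,B)$ into a collection $\mathcal{F}$ of ``selected configurations'' with probability $\min(1, \alpha \cdot y^S_B)$. From $\mathcal{F}$ one extracts a candidate mapping $\phi$ exactly as in the equivalence lemma: for every $u \in U$ for which there exists $(S,B) \in \mathcal{F}$ with $u \in B$, set $\phi(u) := S$ (breaking ties arbitrarily).

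First I would handle feasibility. The LP constraint $\sum_{B \ni u, S \supseteq B} y^S_B \geq 1$ together with the standard $1-x \leq e^{-x}$ inequality gives
\[
\Pr[u \text{ is not covered}] \;\leq\; \prod_{(S,B):\,u\in B}\bigl(1-\min(1,\alpha y^S_B)\bigr) \;\leq\; e^{-\alpha}.
\]
Choosing $\alpha = \ln(4n)$ and taking a union bound over the $n$ elements shows that with probability at least $3/4$ every $u$ is covered, so a valid mapping $\phi$ exists.

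Next I would bound the expected cost of $\phi$. This is the step where some care is needed: a feasible mapping $\phi$ has cost $\sum_{S} c(S)\,g_\pi(\phi^{-1}(S))$, but randomized rounding most naturally controls the LP-style quantity $\sum_{S,B:\,(S,B)\in\mathcal{F}} c(S)\,g_\pi(B)$. The bridge is that $\phi^{-1}(S) \subseteq \bigcup_{B:(S,B)\in\mathcal{F}} B$, and $g_\pi$ is monotone and submodular by Lemma~\ref{lem:sub}, hence subadditive, so
\[
g_\pi(\phi^{-1}(S)) \;\leq\; g_\pi\!\Bigl(\bigcup_{B:\,(S,B)\in\mathcal{F}} B\Bigr) \;\leq\; \sum_{B:\,(S,B)\in\mathcal{F}} g_\pi(B).
\]
Taking expectations yields
\[
\mathbb{E}\Bigl[\sum_{S} c(S)\,g_\pi(\phi^{-1}(S))\Bigr] \;\leq\; \alpha \sum_{S}c(S)\sum_{B\subseteq S} y^S_B\, g_\pi(B) \;=\; O(\log n)\cdot \text{OPT}_{\text{LP}}.
\]

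Finally, to produce the integral solution in expected polynomial time, I would combine the two bounds via Markov: declare a trial \emph{good} if it both covers every $u$ and has cost at most, say, $8\alpha \cdot \text{OPT}_{\text{LP}}$, and otherwise discard it and redraw $\mathcal{F}$. By the union bound above and Markov's inequality each trial is good with constant probability, giving the claimed $O(\log n)$-approximation in expected polynomial time. The main obstacle is precisely the cost step: unlike ordinary set cover, the cost of the rounded solution is not additive over selected configurations but is the value of the submodular function $g_\pi$ on the pre-images, and it is the subadditivity consequence of Lemma~\ref{lem:sub} that makes the analysis go through cleanly.
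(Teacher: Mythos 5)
Your proposal is correct and rests on the same crucial insight as the paper's proof: since $g_\pi$ is monotone and submodular (hence subadditive), the cost of the rounded mapping $\sum_S c(S)\,g_\pi(\phi^{-1}(S))$ can be upper-bounded by the ``LP-style'' sum over selected configurations $\sum_{(S,B)} c(S)\,g_\pi(B)$, and from there the standard set-cover rounding analysis goes through. The only difference is the rounding variant: the paper samples $q=2\ln n$ independent configurations per set $S$ from the (sub)distribution $\{y^S_B\}_{B\subseteq S}$ and takes their union, whereas you include each pair $(S,B)$ independently with probability $\min(1,\alpha\,y^S_B)$; both are textbook-equivalent, and yours has the small advantage of not implicitly needing $\sum_{B\subseteq S} y^S_B\le 1$ to make the per-$S$ sampling well-defined. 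One point worth stating explicitly for polynomial running time is that you only flip coins for the polynomially many pairs with $y^S_B>0$ in the basic optimal solution returned by the ellipsoid method; otherwise the ``for each $(S,B)$'' loop would be over exponentially many pairs.
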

\begin{proof}
In the optimal solution, for each $S\in \cC$, variables $\{y^S_B\}_{B\subseteq S}$ define a probability distribution. We sample from this distribution (independently for each $S$) for $q=2\ln n$ many times. Let $B^S_1,\ldots,B^S_q$ be the sets sampled for $S$: we let $B^S:=\cup_i B^S_i$ and tentatively map elements of $B^S$ into $S$. In case the same element $u$ belongs to $B^{S'}$ and $B^{S''}$ for $S'\neq S''$, we replace $B^{S'}$ with $B^{S'}\sm \{u\}$ and iterate: this way each element is mapped into exactly one set. The final sets $B^S$ induce our approximate mapping.

We can upper bound the expected cost of the solution by $\sum_{S\in \cC}c(S)\sum_{i=1}^{q}g_{\pi}(B^S_i)$. Indeed, by the subadditivity of $g_{\pi}$ (which is implied by submodularity and non-negativity), one has $g_{\pi}(B^S)\leq \sum_{i=1}^{q}g_{\pi}(B^S_i)$. Furthermore, $g_{\pi}(B^{S'}\sm \{u\})\leq g_{\pi}(B^{S'})$ since $g_{\pi}$ is monotone. Trivially
\begin{align*}
\ex{\sum_{S\in \cC}c(S)\sum_{i=1}^{q}g_{\pi}(B^S_i)}&=2\ln n \cdot \ex{ \sum_{S\in \cC}c(S)g_{\pi}(B^S_1)} \\
&= 2\ln n \cdot \sum_{S\in \cC}c(S)\sum_{B\subseteq S}y^S_B\cdot g_{\pi}(B^S) \leq 2\ln n \cdot OPT.
\end{align*}
And from Markov's inequality $$\pr{\sum_{S\in \cC}c(S)\sum_{i=1}^{q}g_{\pi}(B^S_i) > 4\ln n \cdot OPT} < \frac{1}{2}.$$

The probability that an element $u \in U$ is not covered with a single sampling over $y^S_B$ is
$$
\prod_{B\ni u}\prod_{S\supseteq B}(1 - y_{B}^{S}) \le \prod_{B\ni u}\prod_{S\supseteq B}e^{-y_{B}^{S}}
= e^{-\sum_{B\ni u}\sum_{S\supseteq B}y_{B}^{S}} \le \frac{1}{e}.
$$
Therefore, by the independence of the sampling and the union bound, the probability that at least one element is not covered is at most $n\cdot \frac{1}{e^{2\ln n}}=\frac{1}{n}$. 

Altogether this gives a Monte-Carlo algorithm. As usual, this can be turned into a Las-Vegas algorithm with expected polynomial running time by repeating the procedure when some element is not covered or the cost of the solution is greater than $4\ln n \cdot OPT$.
\end{proof}

The following theorem and corollary are a straight-forward consequence of Lemmas \ref{lem:LPsolution} and \ref{lem:round}.

\begin{theorem}\label{thm:sc}
\US\SC admits a polynomial-time Las Vegas $O(\log n)$ approximation algorithm w.r.t. the optimal universal solution when $g_{\pi}$ can be evaluated in polynomial time.
\end{theorem}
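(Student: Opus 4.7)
My plan is to combine the two lemmas stated just before the theorem (Lemma~\ref{lem:LPsolution} and Lemma~\ref{lem:round}). Concretely, the proof will proceed in three phases: (i) exact reformulation of the universal stochastic problem as the configuration IP (CONF-IP-SC), (ii) solving its LP relaxation in polynomial time, and (iii) rounding the fractional solution with expected polynomial-time blow-up $O(\log n)$ on the objective and expected polynomial time to succeed in covering every element.

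For phase (i), I would use the already-established equivalence between (CONF-IP-SC) and \US\SC: a mapping $\phi$ is encoded by setting $y^{S}_{\phi^{-1}(S)}=1$ for each $S$ that is used, and the objective $\sum_S c(S)\, g_{\pi}(\phi^{-1}(S))$ is precisely $\exls{X\sim\pi}{c(\phi(X))}$ because $g_{\pi}(B)$ is exactly the probability that at least one element of $B$ is sampled (which is exactly when $S$ must be bought). For phase (ii), I would invoke Lemma~\ref{lem:LPsolution}: its proof passes to the dual (DP-SC), which has $|U|$ variables, and uses Theorem~\ref{th:sub-poly} applied to $h_S(B)=c(S)\cdot g_{\pi}(B)-\sum_{u\in B}\alpha_u$, which is submodular as the sum of the submodular $g_\pi$ (Lemma~\ref{lem:sub}) scaled by $c(S)\geq 0$ and the linear term $-\sum_{u\in B}\alpha_u$. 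This yields a polynomial separation oracle, and hence the ellipsoid method solves (CONF-LP-SC) in polynomial time provided $g_\pi$ can be evaluated in polynomial time, which is the standing hypothesis.

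For phase (iii), I would round via the standard ``sample $q=2\ln n$ independent configurations per set'' trick of Lemma~\ref{lem:round}. For each $S\in\cC$ the fractional values $\{y^S_B\}_{B\subseteq S}$ form a probability distribution, from which I draw $B^S_1,\dots,B^S_q$ independently. The tentative mapping assigns each $u\in B^S:=\bigcup_i B^S_i$ to $S$; conflicts (an element placed in several sets) are broken arbitrarily by removing $u$ from all but one chosen set. Monotonicity of $g_\pi$ guarantees that such removals only decrease the contribution of each $B^S$, and subadditivity of $g_\pi$ (which follows from submodularity and $g_\pi(\emptyset)=0$) gives $g_\pi(B^S)\leq \sum_{i=1}^q g_\pi(B^S_i)$. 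Hence the expected cost of the produced mapping is at most $q\cdot OPT_{\text{LP}}=2\ln n \cdot OPT_{\text{LP}}\leq 2\ln n\cdot OPT$. By Markov, the cost exceeds $4\ln n\cdot OPT$ with probability at most $1/2$; and by the standard LP-rounding calculation (each individual sample leaves an element uncovered with probability $\leq 1/e$, so after $q=2\ln n$ independent samples the probability is $\leq 1/n^2$, and a union bound gives $\leq 1/n$), the probability that any element is uncovered is at most $1/n$.

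The main obstacle, if any, is conceptual rather than technical: one has to trust that the ``conflict resolution'' step does not harm the objective (handled by monotonicity of $g_\pi$) and that the LP value is a valid lower bound on the optimal \emph{universal} solution (handled by the equivalence of (CONF-IP-SC) with \US\SC). Once these are in place, a standard Monte-Carlo-to-Las-Vegas conversion — repeat the rounding until both the coverage condition and the cost bound $4\ln n\cdot OPT$ hold, which occurs with probability at least $1/2-1/n$ per attempt — produces the claimed Las Vegas $O(\log n)$-approximation in expected polynomial time, completing the proof.
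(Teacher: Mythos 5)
Your proposal is correct and follows essentially the same approach as the paper: the theorem is presented there as an immediate consequence of Lemma~\ref{lem:LPsolution} (solving the configuration LP via a separation oracle that minimizes the submodular function $h_S$) and Lemma~\ref{lem:round} (randomized rounding with $q=2\ln n$ independent samples per set, conflict resolution justified by monotonicity of $g_\pi$, cost bound by subadditivity and Markov, coverage by a Chernoff-style bound and union bound, and a Monte-Carlo-to-Las-Vegas repetition). You simply unfold those two lemmas explicitly; no new ideas or departures from the paper's argument.
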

\begin{corollary}\label{cor:sc}
\US\SC admits a polynomial-time Las Vegas $O(\log n)$ approximation algorithm w.r.t. the optimal universal solution in the scenario model and in the independent activation model.
\end{corollary}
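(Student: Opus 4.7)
The plan is to derive the corollary directly from Theorem~\ref{thm:sc} by verifying its sole hypothesis, namely that the submodular function $g_{\pi}(A) = \mP_{X\sim\pi}[A \cap X \neq \emptyset]$ can be evaluated in polynomial time in each of the two models. Since Theorem~\ref{thm:sc} already packages the LP-based $O(\log n)$-approximation guarantee, no new algorithmic machinery is needed; the corollary is essentially a model-dependent check on the tractability of $g_{\pi}$.

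In the scenario model, we are explicitly given the list of scenarios $X_1,\dots,X_N \subseteq U$ with positive probability together with the values $\pi(X_i)$. Then for any $A \subseteq U$ we have
\[
g_{\pi}(A) \;=\; \sum_{i=1}^{N} \pi(X_i) \cdot \ind{A \cap X_i \neq \emptyset},
\]
which requires at most $N \cdot |A|$ elementary operations to compute, hence polynomial time in the input size. In the independent activation model, where each $u \in U$ is included in $X$ independently with probability $p_u$, the event $A \cap X = \emptyset$ factorises over $u \in A$, so
\[
g_{\pi}(A) \;=\; 1 - \prod_{u \in A}(1 - p_u),
\]
again computable in $O(|A|)$ arithmetic operations.

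Having verified that $g_{\pi}$ admits a polynomial-time evaluation oracle in both models, one simply invokes Theorem~\ref{thm:sc} to obtain the polynomial-time Las Vegas $O(\log n)$-approximation algorithm with respect to the optimal universal mapping. I do not anticipate any real obstacle here: the only point worth being careful about is that the separation oracle in Lemma~\ref{lem:LPsolution} and the rounding in Lemma~\ref{lem:round} both only access $\pi$ through evaluations of $g_{\pi}$ on subsets of the universe, so no further assumption on $\pi$ (for instance, a succinct listing of scenarios, or independence across $U$) is required beyond the evaluation oracle that we have just exhibited.
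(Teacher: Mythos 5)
Your proposal is correct and takes essentially the same route as the paper: the corollary is obtained by invoking Theorem~\ref{thm:sc} after observing that $g_{\pi}(A)$ is computable in polynomial time in both models (via the explicit scenario sum, and via the product formula $1-\prod_{u\in A}(1-p_u)$, respectively). The paper treats this as immediate, and your fleshed-out check matches its intent exactly.
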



It turns out that not only the randomized rounding technique but also the dual fitting technique can be adapted to the universal stochastic model.
This allows us to improve the above randomized algorithm to a purely deterministic greedy algorithm with slightly improved approximation ratio $H_n$\footnote{Recall that $H_n=1+\frac{1}{2}+\ldots+\frac{1}{n}$ is the $n$-th harmonic number.}. 
The high-level idea is to use a greedy strategy: At each step we select a pair $(B,S)$, $B\subseteq S$, that minimizes $h_S(B):=\frac{c(S)g_\pi(B)}{|B|}$, and remove $B$ from the universe and from all sets. It turns out that finding such pair can also be reduced to submodular minimization (though in a slightly more complicated way). The analysis then follows using standard arguments for greedy set cover. All the details are given in Section~\ref{sec:csm}, where we apply this alternative approach to the more general \US\CSM problem where each element $u$ has to be covered by at least $r(u)$ distinct sets for given positive integer values $\{r(u)\}_{u\in U}$.  

%
%
%

\subsection{The Oracle Model}
\label{sec:sc:oracle}

We next assume that the expected cost of the optimal solution is at least $1$. This is w.l.o.g. (by scaling the minimum set cost to $1$) if $\pi$ does not output empty sets. 

In this case we are not able to compute $g_\pi(B)$ directly for a given $B$, hence we rather try to estimate its value. In more detail, we sample 
$N$ sets $A_{1},...,A_{N}$ from $\pi$, for a sufficiently large $N$ to be fixed later. Then we run the algorithm for the scenario case over the sampled scenarios\footnote{We remark that the latter algorithm is forced to assign all elements to some set, even the elements that happen not to appear in any $A_i$}.

We next analyze the above algorithm, starting from the simpler cardinality case (where all sets have cost $1$). Let us define 
$$\hat{g}\br B=\frac{1}{N}\sum_{i=1}^{N}\ind{A_{i}\cap B\neq\emptyset}.$$ 
Observe that for any feasible integer solution $\{y^S_B\}_{S\in \cC,\,B\subseteq S}$ it holds
\[
\exls{A_{i}\sim\pi}{\sum_{S}c\br S\sum_{B\subseteq S}{y}_{B}^{S}\cdot\ind{A_{i}\cap B\neq\emptyset}}=\sum_{S}c\br S\sum_{B\subseteq S}{y}_{B}^{S}\cdot g_\pi(B).
\]



%

We want to keep track of deviation of such random variables.
We will exploit Chernoff inequality, which states that for i.i.d. variables $X_1,\dots,X_N$
over $[0,1]$ we can estimate
\begin{eqnarray*}
\pr{\frac{1}{N}\br{X_{1}+\dots+X_{N}} > \br{1+\eps}\ex X}\leq\exp\br{-\frac{\eps^{2}}{3}N\cdot\ex X}, \\
\pr{\frac{1}{N}\br{X_{1}+\dots+X_{N}} < \br{1-\eps}\ex X}\leq\exp\br{-\frac{\eps^{2}}{2}N\cdot\ex X}.
\end{eqnarray*}

The value of the objective function for a fixed solution $\{y^S_B\}$ is a random variable over $[1,n]$
so after scaling it to interval $[0,1]$ the expected value is at least $\frac{1}{n}$.
This gives the following bounds
\begin{align*}
\pr{\sum_{S}c\br S\sum_{B\subseteq S}{y}_{B}^{S}\cdot\hat{g}\br B>\br{1+\eps}\sum_{S}c\br S\sum_{B\subseteq S}y_{B}^{S}\cdot \gpi \br B} & \leq\exp\br{-\frac{\eps^{2}}{3}\cdot\frac{N}{n}}
\end{align*}
\begin{align*}
\pr{\sum_{S}c\br S\sum_{B\subseteq S}{y}_{B}^{S}\cdot\hat{g}\br B<\br{1-\eps}\sum_{S}c\br S\sum_{B\subseteq S}y_{B}^{S}\cdot \gpi \br B} & \leq\exp\br{-\frac{\eps^{2}}{2}\cdot\frac{N}{n}}
\end{align*}


If we choose $N \ge \frac{6}{\eps^2}\cdot n(n\log{m} + 2\log{n})$, then
the probability that any of these two bounds does not hold is at most
$\exp(-n\log{m} - 2\log{n})$.
Observe now that the number of possible assignments in \US\SC is $m^n = \exp(n\log{m})$
(we omit solutions in which some element is covered many times).
By the union bound we can estimate that the probability of any bound for any solution
being exceeded is at most $\frac{1}{n^2}$.
Let us name this event as $\mathcal{E}_0$.

Let $Opt$ be the optimal integral solution for (CONF-IP-SC). Let also $\hat{Opt}$ be the optimal integral solution for the set of sampled scenarios, and $\hat{Apx}$ our approximate solution in the same setting. We let $c(y',g')$ denote the cost of a solution $y'$ when in the objective function we replace $\gpi$ with $g'$. So for example $c(\hat{Apx},\gpi)$ is the actual cost of the approximate solution, $c(Opt,\gpi)$ is the optimal cost etc. One has
\begin{align*}
\mathbb{E}[c(\hat{Apx},\gpi)\vert \neg\mathcal{E}_0] & 
\leq \frac{1}{1-\eps}\mathbb{E}[c(\hat{Apx},\hat{g})\vert \neg\mathcal{E}_0]
\leq \frac{O(\log n)}{1-\eps}\mathbb{E}[c(\hat{Opt},\hat{g})\vert \neg\mathcal{E}_0]\\
& \leq \frac{O(\log n)}{1-\eps}\mathbb{E}[c(Opt,\hat{g})\vert \neg\mathcal{E}_0]\leq \frac{O(\log n)(1+\eps)}{1-\eps}\mathbb{E}[c(Opt,\gpi)\vert \neg\mathcal{E}_0]\\
& =\frac{O(\log n)(1+\eps)}{1-\eps}\mathbb{E}[c(Opt,\gpi)].
\end{align*}
Above in the first and fourth inequality we used the event $\neg\mathcal{E}_0$, in the second one the approximation factor of our algorithm for the scenario case, in the third one the optimality of $\hat{Opt}$ w.r.t. $\hat{g}$. Using the fact that $c(\hat{Apx},\gpi)\leq n$ deterministically, one obtains
\begin{align*}
\mathbb{E}[c(\hat{Apx},\gpi)] & =\mathbb{P}[\mathcal{E}_0]\cdot \mathbb{E}[c(\hat{Apx},\gpi)\vert \mathcal{E}_0]+\mathbb{P}[\neg\mathcal{E}_0]\cdot\mathbb{E}[c(\hat{Apx},\gpi)\vert \neg\mathcal{E}_0]\\
& \leq \frac{1}{n^2}n+O(\log n)(1+O(\eps))E[c(Opt,\gpi)] = O(\log n)E[c(Opt,\gpi)].
\end{align*}

\begin{theorem}
Cardinality \US\SC in the oracle model admits a polynomial-time  $O(\log n)$ approximation algorithm.
\end{theorem}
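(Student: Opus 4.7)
The plan is to instantiate the Sample Average Approximation (SAA) framework, using the scenario-model algorithm from Corollary~\ref{cor:sc} as a black box. First I would draw $N$ independent samples $A_1,\dots,A_N$ from the oracle (with $N$ polynomial in $n,m,1/\eps$, to be fixed), store them explicitly, and consider the empirical distribution, which has the submodular ``hitting'' function $\hat g(B)=\frac{1}{N}\sum_{i=1}^N \ind{A_i\cap B\neq \emptyset}$. Because $\hat g$ is evaluable in polynomial time, Corollary~\ref{cor:sc} yields in polynomial time a universal mapping $\hat{Apx}$ whose cost under $\hat g$ is within a factor $O(\log n)$ of the optimal cost under $\hat g$.

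The crux is to transfer this guarantee from $\hat g$ to $g_\pi$. For any \emph{fixed} integer solution $\{y^S_B\}$, the cost $\sum_S c(S)\sum_B y^S_B\,\hat g(B)$ is an average of $N$ i.i.d.\ variables that, in the cardinality case, lie in $[0,n]$ with expectation at least $1$ (using the normalization $\mathbb{E}[c(Opt,g_\pi)]\geq 1$). After rescaling into $[0,1]$ the expectation is at least $1/n$, so the multiplicative Chernoff bound gives a failure probability of $\exp(-\Omega(\eps^2 N/n))$ on each side. Picking $N=\Theta\!\br{\frac{n(n\log m+\log n)}{\eps^2}}$ and union-bounding over the $m^n$ possible integer assignments controls the probability of the ``bad'' event $\mathcal{E}_0=\{\exists y\colon |c(y,\hat g)-c(y,g_\pi)|>\eps\, c(y,g_\pi)\}$ by $1/n^2$.

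Conditioning on $\neg\mathcal{E}_0$ I would then chain the four inequalities exactly as outlined before the theorem statement: $c(\hat{Apx},g_\pi)\leq \frac{1}{1-\eps}c(\hat{Apx},\hat g)\leq \frac{O(\log n)}{1-\eps}c(\hat{Opt},\hat g)\leq \frac{O(\log n)}{1-\eps}c(Opt,\hat g)\leq \frac{(1+\eps)O(\log n)}{1-\eps}c(Opt,g_\pi)$, where $\hat{Opt}$ denotes the optimum for the sampled instance. For the residual event $\mathcal{E}_0$ I would use the deterministic upper bound $c(\hat{Apx},g_\pi)\leq n$ (each of the $n$ elements is assigned to a unit-cost set), so its contribution to the expectation is at most $n\cdot \frac{1}{n^2}=\frac{1}{n}$, which is dominated by $\mathbb{E}[c(Opt,g_\pi)]\geq 1$. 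Putting these together yields the desired $O(\log n)$ approximation.

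The main obstacle is precisely calibrating the sample size: we need concentration to hold \emph{uniformly} over the entire (exponentially large) family of feasible integer assignments, rather than just for one prescribed solution. This forces the union-bound term $\log(m^n)=n\log m$ to appear in $N$, and the small expected value (as small as $1/n$ after normalization) forces an additional factor of $n$. The resulting polynomial sample size is what keeps the whole procedure polynomial time, and what distinguishes the cardinality case from the weighted case, where costs could blow the dynamic range of the random variable and break the argument, motivating the pseudo-polynomial treatment in the following subsection.
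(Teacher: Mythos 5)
Your proposal is correct and reproduces the paper's argument almost verbatim: the same Sample Average Approximation reduction to the scenario-model algorithm, the same empirical function $\hat g$, the same Chernoff-plus-union-bound over all $m^n$ assignments yielding $N=\Theta(n(n\log m+\log n)/\eps^2)$, the same four-step conditional chain of inequalities, and the same use of the deterministic bound $c(\hat{Apx},g_\pi)\le n$ to absorb the bad event $\mathcal E_0$. No meaningful deviation from the paper's proof.
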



The only difference in the derivation for the weighted case is
that the cost of a solution lies within $[1, Wn]$, where $W$ is the largest set cost (recall that we assume that the minimum set cost is $1$).
Therefore, it suffices to choose $N \ge \frac{6}{\eps^2}\cdot Wn(n\log{m} + 2\log{n} + \log{W})$
to obtain $\pr{\mathcal{E}_0} \le \frac{1}{Wn^2}$ and the rest of the analysis remains the same.
\begin{theorem}\label{thr:sc:oracle:weighted}
\US\SC in the oracle model admits a pseudo-polynomial-time $O(\log n)$ approximation algorithm.
\end{theorem}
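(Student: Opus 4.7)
The plan is to adapt the Sample Average Approximation argument used for the cardinality case, noting that essentially the only place the cardinality assumption entered was in bounding the range of the objective. As in the cardinality case, I would draw $N$ i.i.d.\ scenarios $A_1,\dots,A_N$ from $\pi$, define the empirical estimator $\hat g(B) = \frac{1}{N}\sum_{i=1}^N \ind{A_i \cap B \neq \emptyset}$, and solve the scenario-model instance built from these samples using the algorithm of Theorem~\ref{thm:sc}. Let $\hat{Apx}$ denote the output of this scenario-model algorithm and $\hat{Opt}$ the optimum of the sampled instance. Scaling so that the minimum set cost is $1$, the integer optimum $Opt$ has expected cost in $[1,Wn]$, and for any candidate solution $\{y_B^S\}$ the random variable $\sum_S c(S)\sum_{B\subseteq S}y_B^S \cdot \ind{A_i \cap B \neq \emptyset}$ lies in $[0,Wn]$ with expectation $\sum_S c(S)\sum_{B\subseteq S} y_B^S \cdot g_\pi(B)$.

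The first key step is a Chernoff bound applied to each fixed solution after rescaling to $[0,1]$: since the expected (scaled) value is now at least $\frac{1}{Wn}$ rather than $\frac{1}{n}$, I would pick $N \ge \frac{6}{\eps^2} Wn(n\log m + 2\log n + \log W)$ so that for a single solution the probability of $\hat g$ deviating from $g_\pi$ by more than a $(1\pm\eps)$ factor on the objective is at most $\exp(-n\log m - 2\log n - \log W)$. Taking a union bound over the $m^n$ possible mappings $\phi: U \to \cC$ bounds the probability of the bad event $\mathcal{E}_0$ (some solution's empirical value deviates) by $\frac{1}{Wn^2}$.

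The second step is the same chain of inequalities as in the cardinality case. Conditioning on $\neg\mathcal{E}_0$,
\[
\mathbb{E}[c(\hat{Apx},g_\pi) \mid \neg\mathcal{E}_0] \le \tfrac{1}{1-\eps} \mathbb{E}[c(\hat{Apx},\hat g) \mid \neg\mathcal{E}_0] \le \tfrac{O(\log n)}{1-\eps}\mathbb{E}[c(\hat{Opt},\hat g) \mid \neg\mathcal{E}_0] \le \tfrac{O(\log n)(1+\eps)}{1-\eps}\mathbb{E}[c(Opt,g_\pi)],
\]
using Theorem~\ref{thm:sc} in the middle inequality and optimality of $\hat{Opt}$ on the sampled instance for $c(\hat{Opt},\hat g)\le c(Opt,\hat g)$. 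Combining this with the trivial deterministic bound $c(\hat{Apx}, g_\pi) \le Wn$ on the bad event and the failure probability $\mathbb{P}[\mathcal{E}_0] \le \frac{1}{Wn^2}$ yields
\[
\mathbb{E}[c(\hat{Apx},g_\pi)] \le \tfrac{1}{Wn^2}\cdot Wn + O(\log n)(1+O(\eps))\mathbb{E}[c(Opt,g_\pi)] = O(\log n)\mathbb{E}[c(Opt,g_\pi)],
\]
which proves the claim. The running time is polynomial in $n$, $m$, and $W$, hence pseudo-polynomial.

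The main subtlety I expect is making sure the rescaling and union bound are carried out consistently: the Chernoff guarantee is only meaningful once the $[0,1]$-normalized mean is $\Omega(1/(Wn))$, which is exactly what forces $N$ to scale linearly in $W$ and makes the algorithm only pseudo-polynomial. Everything else is a direct transcription of the cardinality argument, since the approximation ratio of the scenario-model subroutine does not degrade with weights.
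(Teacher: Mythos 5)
Your proposal is correct and follows essentially the same route as the paper: the paper's own proof of the weighted case consists precisely of observing that the objective now ranges over $[1,Wn]$, enlarging the sample size to $N \ge \frac{6}{\eps^2}\cdot Wn(n\log m + 2\log n + \log W)$ so that $\pr{\mathcal{E}_0}\le \frac{1}{Wn^2}$, and rerunning the cardinality-case chain of inequalities with the deterministic bound $Wn$ on the bad event. No gaps.
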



We remark that the same reduction from the oracle to the scenario case also works for the other variants of set cover discussed in the appendix. The only difference is in the number $N$ of samples which are required. The simple technical details are left to the reader.



\subsection{A Lower Bound for the Oracle Model}
\label{sec:sc:lb}

The algorithm from Theorem \ref{thr:sc:oracle:weighted} runs in pseudo-polynomial time. We next show that any polynomial-time algorithm for the same setting has approximation ratio $\Omega(\sqrt{n})$. Remarkably, this matches (hence strengthening) the best known lower bound for the deterministic version of \US\SC, even when comparing with the optimal offline solution. 

The basic idea is simple, and can be probably applied to several other problems in the same framework. Intuitively, we consider a lower bound instance for the determistic version of the problem (where the input is adversarially chosen), and \emph{embed} it into the unknown probability distribution with super-polynomially small probability. The rest of the probability mass is assigned to a low-cost dummy subproblem. With large probability any polynomial-time approximation algorithm will not be able to \emph{see} the lower bound instance, and will therefore make \emph{blind} decisions on how to address it. Thus this approximation algorithm in some sense behaves like an algorithm for the deterministic setting. On the other hand the optimal universal solution can be constructed w.r.t. the whole probability distribution, hence achieving the performance of the optimal offline solution in the deterministic setting.   
 
\begin{theorem}\label{thr:sc:lb}
Any (possibly randomized) polynomial-time algorithm for \US\SC in the weighted case has approximation ratio $\Omega(\sqrt{n})$. 
\end{theorem}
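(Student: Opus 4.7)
The plan is to combine Yao's minimax principle with the worst-case $\Omega(\sqrt{n})$ competitive bound of Jia et al.\ for universal set cover, realising the embedding idea sketched just above the statement: we hide a hard set-cover instance inside the oracle distribution with super-polynomially small probability, so that a polynomial-time algorithm cannot even detect it, yet the optimal universal solution can tailor its mapping to it.

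The instance is built from two disjoint universes $U_E$ and $U_H$, each of size $\Theta(n)$. On $U_E$ we have a single cheap set $S_E\supseteq U_E$ of cost $1$; on $U_H$ we place the family $\mathcal{C}_H$ of the Jia et al.\ construction, scaling every cost by a large parameter $T$ chosen below. By applying Yao's principle to the Jia et al.\ lower bound (which already exhibits an explicit polynomial-sized family of bad scenarios $\{X_1,\ldots,X_k\}\subseteq 2^{U_H}$), we obtain a distribution $\mu$ on this family such that every deterministic universal mapping $\phi\colon U_H\to\mathcal{C}_H$ satisfies
\[
\mathbb{E}_{X\sim\mu}[c(\phi(X))] \;\geq\; \Omega(\sqrt{n})\cdot T\cdot \mathbb{E}_{X\sim\mu}[OPT_{\mathrm{off}}(X)].
\]
For each $X^*\in\mathrm{supp}(\mu)$ we define an oracle distribution $\pi_{X^*}$ that outputs $X^*$ with probability $\eps=n^{-\omega(1)}$ and otherwise outputs the easy scenario $U_E$; the adversary draws $X^*\sim\mu$ and hands $\pi_{X^*}$ to the oracle.

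The key claim is that any algorithm making $q=\mathrm{poly}(n)$ oracle queries is statistically blind to $X^*$. The event $\mathcal{E}$ that no query returns a hard sample has probability at least $1-q\eps=1-n^{-\omega(1)}$, and conditional on $\mathcal{E}$ the transcript consists of i.i.d.\ copies of $U_E$ regardless of $X^*$; hence the algorithm's output mapping $\phi_{\mathrm{alg}}$, conditional on $\mathcal{E}$, has a distribution independent of $X^*$. Averaging the displayed Yao bound over the random $\phi_{\mathrm{alg}}$ gives
\[
\mathbb{E}_{X^*\sim\mu}\bigl[c(\phi_{\mathrm{alg}}(X^*))\bigm|\mathcal{E}\bigr] \;\geq\; \Omega(\sqrt{n})\cdot T\cdot \mathbb{E}_{X^*\sim\mu}[OPT_{\mathrm{off}}(X^*)].
\]
For the upper bound, the universal mapping that sends $U_E$ to $S_E$ and each $u\in X^*$ to a set of an offline-optimum cover of $X^*$ (and the rest of $U_H$ arbitrarily, since those elements never appear) achieves expected cost at most $1+\eps T\cdot OPT_{\mathrm{off}}(X^*)$. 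Choosing $T$ so that $\eps T\cdot \mathbb{E}_\mu[OPT_{\mathrm{off}}]\gg 1$ makes the additive $1$ negligible in both bounds; since the algorithm's total cost on $\pi_{X^*}$ is at least $(1-o(1))\,\eps\cdot \mathbb{E}[c(\phi_{\mathrm{alg}}(X^*))\mid\mathcal{E}]$, the ratio of $\mathbb{E}_\mu[\text{alg cost}]$ to $\mathbb{E}_\mu[OPT_U(\pi_{X^*})]$ is $\Omega(\sqrt{n})$. Because $\max_{X^*} a(X^*)/o(X^*)\geq \mathbb{E}_\mu[a]/\mathbb{E}_\mu[o]$ for any non-negative $a,o$ with $o>0$, some specific $\pi_{X^*}$ witnesses the claimed $\Omega(\sqrt n)$ ratio.

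The main obstacle is formalising blindness for randomized algorithms: one must verify that, conditional on $\mathcal{E}$, the joint distribution of the sample transcript and the algorithm's internal coins has exactly the same law under every $\pi_{X^*}$, so that integrating the deterministic Yao bound over the algorithm's randomness is legitimate and the expectation really splits as claimed. A secondary subtlety is producing the distributional Yao form of the $\Omega(\sqrt n)$ lower bound from the worst-case statement of Jia et al.; this is a standard minimax application to the two-player game between mappings and scenarios, using the explicit bad family of scenarios already produced by their construction.
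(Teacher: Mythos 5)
Your overall plan matches the informal roadmap the paper gives just before the theorem (embed a hard instance in a super-polynomially improbable event so a polynomial-time algorithm stays ``blind'' to it), but the paper realizes it quite differently, and your black-box route has a genuine gap at the Yao step.

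The gap: you want to pass from Jia et al.'s \emph{worst-case} lower bound --- for every deterministic mapping $\phi$ there is a scenario $X$ with $c(\phi(X))\geq r\cdot OPT_{\mathrm{off}}(X)$ --- to a \emph{distributional} statement $\exists\,\mu:\forall \phi:\ \mathbb{E}_\mu[c(\phi(X))]\geq r\,\mathbb{E}_\mu[OPT_{\mathrm{off}}(X)]$, and you call this a ``standard minimax application.'' It is not. Writing $u(\phi,X)=c(\phi(X))-r\cdot OPT_{\mathrm{off}}(X)$, the hypothesis is $\min_\phi\max_X u\geq 0$, but the von Neumann value of the game is $v=\max_\mu\min_\phi\mathbb{E}_\mu[u]=\min_\lambda\max_X\mathbb{E}_\lambda[u]$, and one only has the \emph{one-sided} inequality $\min_\phi\max_X u \geq v$, which is the wrong direction for your conclusion. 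What you actually need is $\min_\lambda\max_X\mathbb{E}_\lambda[u]\geq 0$, i.e.\ a lower bound against \emph{randomized} universal mappings, which is strictly stronger than Jia et al.'s statement and can genuinely fail for an arbitrary worst-case hard instance (a symmetric $k\times k$ configuration in which each mapping is bad on exactly one scenario yields a mixed mapping beating the deterministic bound by a factor $k$). So either you must re-open Jia et al.'s construction and verify that their bound is really distributional / holds against mixed mappings, or you must swap in a different hard gadget --- at which point you are back to building the instance explicitly.

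The paper sidesteps all of this. Its construction is a tiny, explicit, self-contained instance ($n-1$ elements with expensive singletons and a single expensive ``umbrella'' set, plus one dummy element) with exactly one cheap scenario and \emph{two} candidate hard scenario types (a random singleton, or the whole block). The adversary looks at a single bit of the algorithm's behaviour conditioned on seeing only the cheap scenario in the sampling phase --- the quantity $p_W$, whether the algorithm is biased toward the umbrella set --- and picks whichever hard scenario punishes that bias. This threshold argument is directly valid for randomized algorithms and never invokes minimax or an external lower bound. A further small point: your ``blindness'' reduction and the final $\max_{X^*}a/o\geq \mathbb{E}[a]/\mathbb{E}[o]$ averaging are fine, and the paper does also make explicit the observation you would need, that the number of samples must be polynomial in $M$ to beat $\Omega(\sqrt n)$; but the centerpiece of your reduction, the Yao step, has to be repaired before the argument is complete.
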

\begin{proof}
Let $T=\poly{n,m,\log M}$ be the running time of the considered approximation algorithm, where $n$ is the size of the universe, $m$ the number of sets, and $M$ the largest set weight. Here we assume that $T$ is not a random variable: otherwise a similar argument works with $T$ replaced by, say, $10\cdot \ex{T}$. 

Consider the following input instance. The universe is $U=W\cup \{d\}$. The (dummy) element $d$ is covered only by a singleton set $S_d=\{d\}$ of cost $1$ (in particular, every feasible solution assigns $d$ to $S_d$). Each element $w_i\in W$ is covered by a singleton set $S_i$ of cost $M/\sqrt{n}$. Furthermore there is a set $S_W=W$ of cost $M$ covering precisely $W$. We choose $M$ large enough so that $T/\sqrt{M}=o(1)$.

There are $3$ possible scenarios $X_d=\{d\}$, $X_W=\{W\}$, and $X_w=\{w\}$, where $w$ is an element of $W$ chosen uniformly at random by the adversary. 
Scenario $X_d$ happens with probability $1-1/\sqrt{M}$. Exactly one of the scenarios $X_W$ and $X_w$ happens with the residual probability $1/\sqrt{M}$ according to the following rule. Let $p_W$ denote the probability that, in the cases when the algorithm samples only scenario $X_d$ in the offline stage, then it assigns at least one half of the elements of $W$ to $S_W$ with probability at least $1/2$. Note that $p_W$ is well defined also for randomized algorithms. If $p_W\geq 1/2$ the adversary chooses $\pr{X_w}=1/\sqrt{M}$, and otherwise $\pr{X_W}=1/\sqrt{M}$. Note that in any case the algorithm will sample a scenario different from $X_d$ with probability at most $T/\sqrt{M}=o(1)$

Suppose first that $p_W\geq 1/2$. In this case with probability at least $1/4-o(1)$ the algorithm assigns $w$ to $S_W$. Therefore the expected cost of the approximate solution is at least 
$$
(1/4-o(1))\cdot \frac{1}{\sqrt{M}}\cdot M=\Omega(\sqrt{M}).
$$
A feasible universal solution is $\phi(d)=S_d$ and $\phi(w_i)=S_i$ for all $w_i\in W$. The expected cost of this solution is at most 
$$
(1-\frac{1}{\sqrt{M}})\cdot 1+\frac{1}{\sqrt{M}}\cdot \frac{M}{\sqrt{n}}=O(\sqrt{M/n}).
$$

Suppose next that $p_W<1/2$. In this case with probability at least $1/2-o(1)$ the algorithm assigns at least one half of the elements of $W$ to singleton sets $S_i$, hence paying in expectation at least
$$
(1/2-o(1))\cdot \frac{1}{\sqrt{M}}\cdot \frac{n-1}{2}\cdot \frac{M}{\sqrt{n}}=\Omega(\sqrt{Mn}).
$$
A feasible universal solution is $\phi(d)=S_d$ and $\phi(w_i)=S_W$ for all $w_i\in W$. The expected cost of this solution is at most
$$
(1-\frac{1}{\sqrt{M}})\cdot 1+\frac{1}{\sqrt{M}}\cdot M = O(\sqrt{M}).
$$

In both cases the approximate solution is a factor $\Omega(\sqrt{n})$ worse than the optimal universal solution. 
\end{proof}
We remark that, by the above construction, a polynomial dependence on $M$ in the number of samples (hence in the running time) is needed in order to achieve an $o(\sqrt{n})$ approximation ratio.

\section{Metric Facility Location in the Independent Activation Model}
\label{sec:fl}

In the stochastic universal variant of the uncapacitated facility location problem, we are given a set of clients $C$ and a set of facilities $F$. For each client
$c\in C$ and facility $f\in F$, there is a cost
$d(c,f)\geq 0$ paid if $c$ is connected to $f$; furthermore,
there is a cost $o_{f}\geq 0$ associated with opening facility $f\in F$. In the non-metric version of the problem (considered in Section \ref{sec:nmfl}) we let $d$ be arbitrary, while in the metric case (considered in this section) $d$ induces a metric.
In the universal solution we need to assign every client $c\in C$ to a facility
$\phi(c)\in F$. Then a set $X\subseteq C$ is sampled according to some distribution $\pi$ and we need to open all facilities $\phi(X):=\cup_{c\in X}\{\phi(c)\}$ and connect each $c \in X$ to $\phi(c)$. The goal is to minimize the expected total cost of opening the facilities and connecting clients to facilities, i.e.:
\begin{center}$
\exls{X\sim\text{\ensuremath{\pi}}}{\sum_{f\in \phi(X)}o_{f}+\sum_{c\in X}d\br{c,\phi\br c}}.\label{eq:facility-location}
$\end{center}

In this section we present a constant approximation for the independent activation case where each client $c$ is independently chosen into the scenario with probability $p_{c}$. 
\begin{theorem}\label{thm:fl}
\US \MFL admits a deterministic polynomial-time $\frac{4e}{e-1}$ approximation algorithm w.r.t. the optimal universal solution in the independent activation model.
\end{theorem}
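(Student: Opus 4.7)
The plan is to transplant the configuration-LP technique of Section~\ref{sec:sc:scenario} to \US~\MFL and, replacing the logarithmic set-cover rounding by a filter-and-cluster scheme in the style of Shmoys--Tardos--Aardal and Chudak--Shmoys, exploit the metric structure to lose only a constant factor.

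In the independent activation model the expected cost of any mapping $\phi$ can be written as
\[
\sum_{f\in F} o_f\cdot g_\pi\br{\phi^{-1}(f)} + \sum_{c\in C} p_c\cdot d\br{c,\phi(c)},
\]
since $f$ is opened iff at least one of its assigned clients is active, an event of probability $g_\pi(\phi^{-1}(f))$ by Lemma~\ref{lem:sub}. This motivates the configuration LP with variables $y^f_B\geq 0$ for $f\in F$, $B\subseteq C$:
\begin{align*}
\min\quad & \sum_{f\in F}\sum_{B\subseteq C} y^f_B\br{o_f\cdot g_\pi(B) + \sum_{c\in B} p_c\cdot d(c,f)}, \\
\text{s.t.}\quad & \sum_{f\in F}\sum_{B\ni c} y^f_B\geq 1,\quad \forall c\in C.
\end{align*}
Its equivalence with \US~\MFL follows by the same subset-shrinking argument used for (CONF-IP-SC), relying on monotonicity of $g_\pi$. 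Its dual has one constraint $\sum_{c\in B}\alpha_c\leq o_f g_\pi(B)+\sum_{c\in B} p_c d(c,f)$ per pair $(f,B)$; separation reduces to minimising the submodular function $B\mapsto o_f g_\pi(B)+\sum_{c\in B}(p_c d(c,f)-\alpha_c)$ over $B\subseteq C$, solvable in polynomial time by Theorem~\ref{th:sub-poly}, so the LP is polynomial-time solvable exactly as in Lemma~\ref{lem:LPsolution}.

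For the rounding, denote the optimum by $y^*$, define the marginals $x_{cf}:=\sum_{B\ni c} y^{f*}_B$ (satisfying $\sum_f x_{cf}\geq 1$) and the fractional connection cost $D^*_c:=\sum_f x_{cf} d(c,f)$. Fix a tunable parameter $\beta>1$. By Markov, $F^\beta(c):=\setst{f}{d(c,f)\leq\beta D^*_c}$ carries mass $\sum_{f\in F^\beta(c)}x_{cf}\geq 1-1/\beta$. I would then cluster: process clients in nondecreasing order of $D^*_c$, and for each still-unclustered $c^*$ build $T_{c^*}=\set{c^*}\cup\setst{c\text{ unclustered}}{F^\beta(c)\cap F^\beta(c^*)\neq\emptyset}$. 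Triangle inequality together with $D^*_{c^*}\leq D^*_c$ gives $d(c,f)\leq 3\beta D^*_c$ for every $c\in T_{c^*}$ and every $f\in F^\beta(c^*)$. For each cluster I would then pick one $f\in F^\beta(c^*)$ at random with probability $x_{c^*f}/(1-1/\beta)$ and assign all of $T_{c^*}$ to it.

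The connection cost is then bounded by $3\beta\sum_c p_c D^*_c$ in one line. The main obstacle is the opening-cost analysis: the rounded solution pays $\sum_{T_{c^*}} o_f g_\pi(T_{c^*})$ for the chosen $f$ of each cluster, whereas the LP pays $\sum_f o_f O^f$ with $O^f:=\sum_B y^{f*}_B g_\pi(B)$, and $g_\pi(T_{c^*})$ may exceed any single $O^f$, so the bound cannot be taken facility-by-facility. The two ingredients I would combine are (i) the inequality $g_\pi(B)\geq p_c$ valid whenever $c\in B$, which gives $O^f\geq p_c x_{cf}$ and accumulates substantial LP opening mass across the surviving facilities of every cluster, and (ii) the subadditivity and monotonicity of $g_\pi$ from Lemma~\ref{lem:sub}, which let $g_\pi(T_{c^*})$ be decomposed and charged across the fractional support of $y^{f*}$ on $F^\beta(c^*)$. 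A suitable choice of $\beta$ then balances the resulting filter-loss factor against the $3\beta$ connection-side factor to yield the stated $\tfrac{4e}{e-1}$ ratio, and a final derandomisation by the method of conditional expectations over the per-cluster facility choices upgrades the algorithm to deterministic polynomial time.
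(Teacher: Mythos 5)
The formulation, the equivalence argument, and the observation that the configuration LP is solvable via submodular minimization are all correct and match the paper's viewpoint. However, there is a genuine gap, and you have in fact put your finger on it yourself: the opening-cost analysis in your filter-and-cluster rounding is left unresolved, and the two ``ingredients'' you gesture at do not constitute an argument. The difficulty is structural. In a standard facility-location rounding the LP opening term is $\sum_f o_f y^f$ with $y^f\ge x_{cf}$ for every $c$, so each cluster centroid's filtered facility set, carrying $x$-mass at least $1-1/\beta$, pays for a $\frac{1}{1-1/\beta}$-approximate random opening via disjointness of the $F^\beta(c^*)$. In the configuration LP the corresponding quantity is $O^f=\sum_B y^{f*}_B\, g_\pi(B)$, and this is \emph{not} bounded below by $\max_c x_{cf}$: if the fractional mass of $f$ is spread over many singletons $\{c\}$ with tiny $p_c$, then $O^f=\sum_c p_c x_{cf}$ can be arbitrarily small compared to $\max_c x_{cf}$, yet your rounded solution may have to pay the full $o_f$ (when $g_\pi(T_{c^*})$ is close to $1$). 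Conversely your ingredient (i), $O^f\ge p_c x_{cf}$, can only accumulate opening mass of order $\sum_{c} p_c x_{cf}$, which is exactly the ``linear'' regime and says nothing when $g_\pi$ saturates at $1$. These are two genuinely different regimes, and any charging argument has to treat them separately.

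That separation is precisely what the paper's proof does and what your sketch omits. The paper does \emph{not} round the configuration LP directly. It first applies the Karger--Minkoff inequality
\begin{equation*}
\min\Bigl(1,\textstyle\sum_{c\in B}p_c\Bigr)\ \ge\ g_\pi(B)\ \ge\ \bigl(1-\tfrac1e\bigr)\min\Bigl(1,\textstyle\sum_{c\in B}p_c\Bigr),
\end{equation*}
losing a factor $e/(e-1)$ and replacing the submodular opening term by $\min(1,\sum p_c)$, which is either $1$ (``big'' sets) or $\sum p_c$ (``small'' sets). Splitting the configuration variables accordingly collapses the exponential LP to the polynomial LP-FL whose opening objective is $\sum_f o_f\max_c x_c^f+\sum_f o_f\sum_c p_c\bar x_c^f$; the first term is rounded as a (distorted-metric) standard facility location instance via a primal-dual $3$-approximation, and the second term has an integral optimum by inspection. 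A threshold of $3/4$ versus $1/4$ in the client split yields the overall factor $4$, and $4\cdot\frac{e}{e-1}=\frac{4e}{e-1}$. If you want to keep your filter-and-cluster route, you would still need to first perform this big/small decomposition of the opening term (or an equivalent linearisation of $g_\pi$); without it the charging simply does not close, and no choice of $\beta$ will rescue it.
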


Just by direct modeling of the above formula with the configuration LP we can see that the following is a relaxation of an integer program that solves the problem: 
\begin{align*}
\min & \sum_{f\in F}o_{f}\sum_{B\subseteq C}y_{B}^{f}\cdot  g_{\pi}(B)+\sum_{c\in C}\sum_{f\in F}\gpi(c)\cdot  d\br{c,f}\cdot \sum_{B\subseteq C: c\in B}y_{B}^{f} &   \mbox{(CONF-LP-FL)}\\
\mbox{s.t.} & \ \forall_{c\in C}: \sum_{f\in F}\sum_{B \subseteq C: c\in B}y_{B}^{f}\geq1    \qquad \mbox{and} \qquad \forall_{f\in F}\forall_{B\subseteq C}:  y_{B}^{f}\geq0 .
\end{align*}
Similarly to the set cover case, the interpretation of $y_B^f=1$ is that facility $f$ serves precisely clients $B$. A similar argument also shows that the associated integer program correctly encodes the input problem. Let $OPT_{\text{CONF-LP-FL}}$ be the optimal solution to the above LP. It holds that $OPT_{\text{CONF-LP-FL}}$ is a lower-bound on the expected cost of an optimal universal solution.

We remark that we are able to solve (CONF-LP-FL) also in the scenario case, however in that case we are missing a good rounding procedure: this is left as an interesting open problem.  


Consider the following alternative LP:
\begin{align*}
\min & \sum_{f\in F}o_{f}\cdot\max_{c\in C}\br{x_{c}^{f}}+\sum_{f\in F}o_{f}\cdot\sum_{c\in C}p_{c}\cdot\bar{x}_{c}^{f}+\sum_{c\in C}\sum_{f\in F}p_{c}\cdot d\br{c,f}\cdot\br{x_{c}^{f}+\bar{x}_{c}^{f}} &  & \emph{(LP-FL)}\\
\mbox{\emph{s.t.}} & \ \forall_{c\in C}: \sum_{f\in F}x_{c}^{f}+\bar{x}_{c}^{f}\geq1
\qquad \mbox{\emph{and}} \qquad \forall_{f\in F}\forall_{c\in C}: x_{c}^{f},\bar{x}_{c}^{f}\geq 0,
\end{align*}
and let $OPT_{\text{LP-FL}}$ be the optimum solution cost of (LP-FL).

\begin{lemma}\label{lem:fl:cost1}
It holds that $OPT_{\text{LP-FL}} \leq \frac{e}{e-1}\cdot OPT_{\text{CONF-LP-FL}}$
\end{lemma}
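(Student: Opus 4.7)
The plan is to convert an optimal solution $\{y_B^f\}$ of (CONF-LP-FL) into a feasible solution $\{x_c^f,\bar{x}_c^f\}$ of (LP-FL) whose objective is at most $\frac{e}{e-1}$ times the original objective. In the independent activation model we have $g_\pi(c)=p_c$ and $g_\pi(B)=1-\prod_{c\in B}(1-p_c)$. I will pick $\{x_c^f,\bar{x}_c^f\}$ so that $x_c^f+\bar{x}_c^f=\sum_{B\ni c}y_B^f$; this automatically yields feasibility (from the original constraint summed over $f$) and makes the connection-cost terms of the two LPs agree term by term. All that remains is to apportion each $y_B^f$ between $x$ and $\bar{x}$ so that the two opening-cost pieces of (LP-FL) together cost at most $\frac{e}{e-1}\, o_f \sum_B y_B^f g_\pi(B)$.

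The apportionment will be driven by a size dichotomy. The two inequalities I will use are: (a) $g_\pi(B)\ge(1-1/e)\sum_{c\in B}p_c$ whenever $\sum_{c\in B}p_c\le 1$, which is the concavity estimate $1-e^{-s}\ge(1-1/e)s$ on $[0,1]$ applied to $s=\sum_{c\in B}p_c$ together with $1-\prod(1-p_c)\ge 1-e^{-\sum p_c}$; and (b) $g_\pi(B)\ge 1-1/e$ whenever $\sum_{c\in B}p_c>1$, which is immediate from the same exponential estimate. Accordingly, partition all configurations globally into $\mathcal{B}_s=\{B:\sum_{c\in B}p_c\le 1\}$ and $\mathcal{B}_b$ (its complement), and set $\bar{x}_c^f:=\sum_{B\in\mathcal{B}_s,\,c\in B}y_B^f$ and $x_c^f:=\sum_{B\in\mathcal{B}_b,\,c\in B}y_B^f$. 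Intuitively, small configurations are routed through the linear $\sum_c p_c\bar{x}_c^f$ term (which linearizes $g_\pi$), while big configurations are routed through the $\max_c x_c^f$ term (which exploits the uniform lower bound $1-1/e$).

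With this choice the accounting is routine. The small-set contribution to opening cost is $o_f\sum_c p_c \bar{x}_c^f = o_f\sum_{B\in\mathcal{B}_s}y_B^f\sum_{c\in B}p_c$, which inequality (a) bounds by $\frac{e}{e-1}o_f\sum_{B\in\mathcal{B}_s}y_B^f g_\pi(B)$. For big configurations, the blunt bound $\max_c x_c^f\le\sum_{B\in\mathcal{B}_b}y_B^f$ already suffices, since inequality (b) upgrades it to $\max_c x_c^f\le\frac{e}{e-1}\sum_{B\in\mathcal{B}_b}y_B^f g_\pi(B)$. Summing over the two classes of $B$'s and adding the (identical) connection-cost term yields the $\frac{e}{e-1}$ factor.

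The main conceptual obstacle is choosing the dichotomy threshold. The value $\sum_{c\in B}p_c=1$ is natural precisely because it is the crossover point at which the two lower bounds on $g_\pi(B)$ coincide, and it is the unique choice that makes both accountings lose the same factor $\frac{e}{e-1}$; any other threshold would degrade one of the two estimates and worsen the overall ratio. Apart from this design choice, the remaining work is bookkeeping and the two elementary estimates on $g_\pi$ described above.
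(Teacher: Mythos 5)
Your proposal is correct and follows essentially the same route as the paper's proof: both split the configurations $B$ at the threshold $\sum_{c\in B} p_c = 1$, set $\bar{x}_c^f$ from the small configurations and $x_c^f$ from the big ones, use the lower bound $g_\pi(B) \ge (1-\tfrac{1}{e})\min(1,\sum_{c\in B} p_c)$, and bound the $\max_c x_c^f$ term by $\sum_{B \text{ big}} y_B^f$. The only cosmetic difference is that the paper routes the argument through an explicit intermediate LP with $\min(1,\sum p_c)$ in the objective, whereas you fold the same two estimates directly into the accounting.
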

\begin{proof}
We exploit the following simple inequality (see, e.g., \cite{KM00} for a proof):
\begin{equation}
\min\br{1,\sum_{t\in S}p_{t}} \geq \gpi \br S=1-\prod_{t\in S}(1-p_{t})\geq\br{1-\frac{1}{e}}\min\br{1,\sum_{t\in S}p_{t}}. \label{eq:indineq}
\end{equation}
The lower bound in \eqref{eq:indineq} implies that the solution of the following LP is at most $\frac{e}{e-1}$ times bigger than the solution of (CONF-LP-FL):
\begin{align}
\min & \sum_{f\in F}o_{f}\sum_{B\subseteq C}y_{B}^{f}\cdot\min\br{1,\sum_{c\in B}p_{c}}+\sum_{c\in C}\sum_{f\in F}\gpi(c)\cdot  d\br{c,f}\cdot \sum_{B\subseteq C: c\in B}y_{B}^{f} &  & \mbox{}\label{eq:ind-fl-lp} \\
\mbox{s.t. } &  \sum_{f\in F}\sum_{B \subseteq C: c\in B}y_{B}^{f}\geq1 &\forall_{c\in C}\nonumber\\ 
& y_{B}^{f}\geq0 &\forall_{f\in F}\forall_{B\subseteq C}\nonumber.
\end{align}
Let $Big$ be the collection of sets $B\subseteq C$ such that $\sum_{t\in B}p_{t}>1$, and let $Sml$ be the collection of remaining sets $B\subseteq C$ with $\sum_{t\in B}p_{t}\leq 1$.
Define $x_{c}^{f}$ to be the extent to which $c$ was assigned to
$f$ via sets from $Big$, and $\bar{x}_{c}^{f}$ the extent to which $c$
was assigned to $f$ via sets from $Sml$, i.e.,
$x_{c}^{f}=\sum_{B \in Big: c \in B}y_{B}^{f} \mbox{\quad
and \quad}\bar{x}_{c}^{f}=\sum_{B\in Sml: c\in B}y_{B}^{f}.$ Also, for every client $c$ there is an obvious inequality 
$\sum_{B\in Big}y_{B}^{f}  \geq\sum_{B\in Big: c\in B}y_{B}^{f}=x_{c}^{f}$. 
Now we can lower bound~\eqref{eq:ind-fl-lp}:
\begin{align*}
&\sum_{f\in F}o_{f}\br{\sum_{B \in Big}y_{B}^{f}+\sum_{B\in Sml}y_{B}^{f}\cdot\br{\sum_{j\in B}p_{j}}}+\sum_{c\in C}\sum_{f\in F}p_c\cdot  d\br{c,f}\cdot \sum_{B\subseteq C: c\in B}y_{B}^{f}\nonumber \\
= & \sum_{f\in F}o_{f}\br{\sum_{B\in Big}y_{B}^{f}}+\sum_{f\in F}\sum_{c\in C}o_{f}\cdot p_{c}\br{\sum_{B\in Sml: c \in B}y_{B}^{f}}+\sum_{c\in C}\sum_{f\in F}p_c\cdot  d\br{c,f}\cdot \sum_{B\subseteq C: c\in B}y_{B}^{f} \\ \label{eq:bigsmall}
\geq&\sum_{f\in F}o_{f}\cdot\max_{c\in C}\br{x_{c}^{f}}+\sum_{f\in F}\sum_{c\in C}o_{f}\cdot p_{c}\cdot\bar{x}_{c}^{f}+\sum_{c\in C}\sum_{f\in F}p_{c}\cdot d\br{c,f}\cdot\br{x_{c}^{f}+\bar{x}_{c}^{f}}.
\end{align*}
Observe that $\sum_{f\in F}\sum_{B \subseteq C: c\in B}y_{B}^{f}=\sum_{f\in F}x_{c}^{f}+\bar{x}_{c}^{f}\geq 1$ for all $c\in C$. The claim follows.
\end{proof}

Next lemma shows how to round a solution to (LP-FL) into a feasible solution for the original problem.
\begin{lemma}\label{lem:fl:cost2}
There is a polynomial-time deterministic algorithm that computes a solution of cost at most $APX_{\text{LP-FL}}\leq 4 OPT_{\text{LP-FL}}$ with respect to the objective function of (LP-FL).
\end{lemma}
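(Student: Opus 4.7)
The plan is to adapt the classical Shmoys--Tardos--Aardal filtering-and-clustering scheme for metric \FL to the two-layered structure of (LP-FL), in which each facility $f$ is paid for once through the ``hard'' term $o_f\max_c x_c^f$ and once through the ``probabilistic'' term $o_f\sum_c p_c\bar{x}_c^f$. Let $(x^*,\bar{x}^*)$ be an optimum fractional solution, set $z_c^f := x_c^{f*}+\bar{x}_c^{f*}$, and let $D_c := \sum_f d(c,f)\, z_c^f$ be the fractional connection cost of client $c$; by feasibility $\sum_f z_c^f \ge 1$.

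First, I would \emph{filter}: for each $c$ zero out every $x_c^{f*},\bar{x}_c^{f*}$ with $d(c,f) > 2D_c$ and multiply the surviving entries by $2$. By Markov's inequality at most half of the $z_c^f$-mass is discarded, so the filtered solution $(\tilde{x},\tilde{\bar{x}})$ stays feasible; inspecting (LP-FL) term by term, each of its three summands grows by a factor of at most $2$. Next, I would \emph{cluster}: sort clients by increasing $D_c$, maintain a set $U$ of unclustered clients, and when the next $c \in U$ is picked define its filtered neighborhood $F_c := \{f : \tilde{x}_c^f+\tilde{\bar{x}}_c^f > 0\} \subseteq \{f : d(c,f)\le 2D_c\}$, form the cluster $K_c := \{c\} \cup \{c' \in U : F_{c'} \cap F_c \ne \emptyset\}$, and remove $K_c$ from $U$. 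A standard argument shows that for distinct cluster centers $c \ne c'$ the neighborhoods $F_c$ and $F_{c'}$ are disjoint, so the opening cost decouples across clusters. Inside $F_c$ I would pick $f_c$ minimizing $o_{f_c}$ weighted by an appropriate distribution on $F_c$, branching on whether $\sum_{f\in F_c}\tilde{x}_c^f \ge 1/2$ or $\sum_{f\in F_c}\tilde{\bar{x}}_c^f \ge 1/2$ and using the corresponding conditional distribution; finally I set $\phi(c'):=f_c$ for every $c' \in K_c$.

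For the cost analysis, the connection cost of each non-center $c' \in K_c$ is bounded, via any $f \in F_c \cap F_{c'}$, by $d(c',f_c) \le d(c',f)+d(f,c)+d(c,f_c) \le 2D_{c'}+2D_c+2D_c \le 6D_{c'}$, where $D_c \le D_{c'}$ by the ordering, while the cluster center itself pays $d(c,f_c) \le 2D_c$. For the opening cost arising from $K_c$, the assignment is charged in (LP-FL) either as the hard term $o_{f_c}$ or as the probabilistic term $o_{f_c}\sum_{c'\in K_c}p_{c'}$, whichever is smaller, and this has to be compared with the fractional contribution $\sum_{f\in F_c} o_f\bigl(\max_{c'\in K_c}\tilde{x}_{c'}^f + \sum_{c'\in K_c}p_{c'}\tilde{\bar{x}}_{c'}^f\bigr)$ that lives in the disjoint neighborhood $F_c$. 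Combined with the factor~$2$ lost in filtering and with a slight reshuffling between connection and opening terms, summing over all clusters should yield the claimed bound of $4\cdot OPT_{\text{LP-FL}}$.

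The main obstacle will be exactly the choice of $f_c$: the nonlinear $\max_c x_c^f$ inside (LP-FL) prevents treating the hard and the probabilistic layers as two independent metric \FL instances, so the branching on which of the two layers dominates at the cluster center has to be done carefully, and squeezing the final constant down to exactly $4$ (rather than some larger constant) will require a tight joint accounting of the two opening terms against $o_{f_c}\cdot\min\bigl(1,\sum_{c'\in K_c}p_{c'}\bigr)$.
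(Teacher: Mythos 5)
Your proposal tries to transplant the Shmoys--Tardos--Aardal filter-and-cluster scheme onto (LP-FL), whereas the paper takes a much more direct route: it splits the clients into $C_{\bigs}=\{c:\sum_f x_c^f\geq 3/4\}$ and $C_{\smalls}=\{c:\sum_f \bar{x}_c^f> 1/4\}$ (every client lands in at least one group by feasibility), scales $\{\tfrac43 x\}$ to a feasible fractional solution of a metric UFL LP with client weights $p_c$ and rounds it with the folklore primal-dual $3$-approximation, and for $C_{\smalls}$ scales $\{4\bar{x}\}$ and assigns each such client to the facility $f$ minimizing $o_f+d(c,f)$, which is an integral optimum of the induced LP. Summing gives $4\cdot OPT_{\text{LP-FL}}$ with no filtering, clustering, or per-center facility choice.

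The obstacle you flag at the end of your writeup is not just a matter of squeezing a constant; it is a structural gap. STA clustering is built for a one-shot \emph{buy} opening cost: once $f_c$ is open it serves the whole cluster $K_c$ for free, so the cluster center's fractional opening budget $\sum_{f\in F_c}o_f z_c^f$ can pay for $o_{f_c}$. The term $\sum_f o_f\sum_{c}p_{c}\bar{x}_{c}^f$ in (LP-FL) is instead a \emph{rent} cost charged per client: when all of $K_c$ is routed to $f_c$ you must pay $o_{f_c}\cdot\min\bigl(1,\sum_{c'\in K_c}p_{c'}\bigr)$, yet the only $\bar{x}$-mass you control inside $F_c$ is the center's own $\sum_{f\in F_c}\tilde{\bar{x}}_c^f\geq 1/2$. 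Non-center clients $c'\in K_c$ were attached to the cluster only because $F_{c'}\cap F_c\neq\emptyset$ and may have essentially all of their $\bar{x}$-mass (and all their $p_{c'}$-weight) outside $F_c$, so $\sum_{c'\in K_c}p_{c'}$ is not controlled by $\sum_{f\in F_c}\sum_{c'}p_{c'}\tilde{\bar{x}}_{c'}^f$ and the charge does not close. Moreover, with the filter threshold $2D_c$ your own triangle-inequality bound already gives connection cost $6D_{c'}$ for non-centers, i.e.\ a factor $6$ against the LP connection term before opening costs are even touched, so these parameters cannot yield $4$. The paper avoids both issues simultaneously by never mixing one client's buy-style mass with another client's rent-style mass: each group is handed wholesale to a rounding routine whose cost model matches that group's dominant term.
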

\begin{proof}
(LP-FL) has a polynomial number of variables and constraints, and
so it can be solved in polynomial time: denote by $\br{x_{c}^{f},\bar{x}_{c}^{f}}_{c\in C,f\in F}$ the corresponding optimal solution. We split the clients into two
groups:
\[
C_{\bigs}:=\setst{c\in C}{\sum_{f\in F}x_{c}^{f}\geq\frac{3}{4}}\mbox{ and }C_{\smalls}=\setst{c\in C}{\sum_{f\in F}\bar{x}_{c}^{f}>\frac{1}{4}}.
\] 
We assign the two groups to facilities separately. Consider first clients $C_{\bigs}$. From the definition we get that $\br{\frac{4}{3}x_{c}^{f}}_{c\in C_{\bigs},f\in F}$
is a feasible solution to the following LP:
\begin{align*}
\min\  & \sum_{f\in F}o_{f}\cdot\max_{c\in C_{\bigs}}\br{z_{c}^{f}}+\sum_{c\in C}\sum_{f\in F}p_{c}\cdot d\br{c,f}\cdot z_{c}^{f} &   \\
\mbox{s.t. } & \ \forall_{c\in C_{\bigs}}: \sum_{f\in F}z_{c}^{f}\geq1 \qquad \mbox{and}
 \qquad \forall_{f\in F} \forall_{c\in C_{\bigs}}: z_{c}^{f}\geq0  .
\end{align*}
The corresponding integer program can be interpreted as a variant of standard uncapacitated metric facility location where the underlying metric $d$ is distorted by a factor $p_c$ that depends on client $c$. A folklore result is a primal-dual $3$-approximation algorithm for this problem (this is for example given as an exercise in Vazirani's book \cite{vazirani}). The idea is to modify the classical primal-dual $3$-approximation algorithm so that the dual variable associated to client $c$ grow at speed $p_c$ rather than at uniform speed. We let $\{z_{c}^{f}\}_{c\in C_{big},f\in F}$ be the solution returned by the above rounding algorithm. Thus we have
\begin{align}
& \sum_{f\in F}o_{f}\cdot\max_{c\in C_{\bigs}}\br{z_{c}^{f}}+ \sum_{c\in C_{\bigs}}\sum_{f\in F}p_{c}\cdot d\br{c,f}\cdot z_{c}^{f} \nonumber \\
   \leq  3\cdot&\sum_{f\in F}o_{f}\cdot\max_{c\in C_{\bigs}}\br{\frac{4}{3}x_{c}^{f}}+3\cdot \sum_{c\in C_{\bigs}}\sum_{f\in F}p_{c}\cdot d\br{c,f}\cdot \frac{4}{3}x_{c}^{f} \nonumber \\ 
      \leq  4\cdot&\sum_{f\in F}o_{f}\cdot\max_{c\in C}\br{x_{c}^{f}}+4\cdot \sum_{c\in C}\sum_{f\in F}p_{c}\cdot d\br{c,f}\cdot x_{c}^{f}. \label{eq:boundbiglp}
\end{align}

We next consider clients $C_{\smalls}$. Observe that $\br{4\cdot \bar{x}_{c}^{f}}_{c\in C_{\smalls},f\in F}$ is a feasible solution to the following LP: 
\begin{align*}
\min\ & \sum_{c\in C_{\smalls}}\sum_{f\in F}p_{c}\br{o_{f}+d\br{c,f}}\cdot z_{c}^{f} &   \\ 
\mbox{s.t.} & \ \forall_{c\in C_{\smalls}}:  \sum_{f\in F}z_{c}^{f}\geq1 \qquad \mbox{and} \qquad \forall_{f\in F}\forall_{c\in C_{\smalls}}: z_{c}^{f}\geq0 .
\end{align*}
The above LP has an optimal integral solution: just assign every
client $c\in C_{\smalls}$ to the facility $f$ that minimizes $o_{f}+d(c,f)$.
If $\{z_{c}^{f}\}_{c\in C_{\smalls},f\in F}$
is such an integral solution, then we have that
\begin{eqnarray}
\sum_{c\in C_{\smalls}}\sum_{f\in F}p_{c}\br{o_{f}+d\br{c,f}}\cdot z_{c}^{f}\leq\sum_{c\in C_{\smalls}}\sum_{f\in F}p_{c}\br{o_{f}+d\br{c,f}}\cdot 4\bar{x}_{c}^{f}.\label{eq:boundsmalllp}
\end{eqnarray}
It is then sufficient to return the union of the two mentioned solutions, where $z^f_c=1$ means that $\phi(c)=f$ (breaking ties arbitrarily in case of multiple assignments of the same client). The cost of the overall solution w.r.t. the objective value of (LP-FL) is at most $4\cdot OPT_{\text{LP-FL}}$.
\end{proof}
Now Theorem \ref{thm:fl} follows easily since the cost $APX_{\text{CONF-LP-FL}}$ of the approximate solution with respect to the original objective function satisfies 
$$
APX_{\text{CONF-LP-FL}}\leq APX_{\text{LP-FL}} \overset{Lem. \ref{lem:fl:cost2}}{\leq} 4\cdot OPT_{\text{LP-FL}} \overset{Lem. \ref{lem:fl:cost1}}{\leq} 4\frac{e}{e-1}\cdot OPT_{\text{CONF-LP-FL}},
$$
where in the first inequality above we used the upper bound in \eqref{eq:indineq}.

\section{Multicut in the Independent Activation Model}
\label{sec:mc}

In the (classical version of the) \MC problem we are given an undirected $n$-node graph $G$
with non-negative costs $c_e$ for all $e \in E(G)$ and
a set of pairs of vertices $(s_1,t_1),\dots,(s_k,t_k)$.
The goal is to erase a subset of edges $F$ of minimum cost so that there is no path connecting $s_c$ with $t_c$ for any $c$.
\MC may be considered a covering problem in which the client $c$ is covered
if $F$ contains some $(s_c,t_c)$-cut.

In the universal stochastic setting we are also given a probability distribution $\pi$ over the subsets of $C = [1,k]$
and the solution is a mapping $\phi: C \rightarrow 2^E$ such that
$\phi(c)$ forms a $(s_c,t_c)$-cut.
The expected cost of the solution induced by a mapping $\phi$  equals 
$\exls{X\sim\text{\ensuremath{\pi}}}{\sum_{e\in \phi(X)}c_e}$, where $\phi(X)=\{e\in E: e\in \phi(c)\text{ for some }c\in X\}$.

We express the problem with a configuration integer program.
Let $\mathcal{P}_c$ denote the family of all paths connecting $s_c$ with $t_c$
and let $y_{B}^{e} = 1$ mean that $B = \{c \in C: e \in \phi(c)\}$.
\begin{align}
\min & \sum_{e\in E}c_e\sum_{B\subseteq C}y_{B}^{e}\cdot  \gpi\br B &  & \mbox{(CONF-IP-MC)} 
\\
\mbox{s.t.} & \sum_{e\in P}\sum_{B\ni c}y_{B}^{e}\geq1 &  & \forall_{c\in C}\,\forall_{P\in \mathcal{P}_c}. \nonumber \\
 & y_{B}^{e}{\in \{0,1\}} &  & \forall_{e\in E}\,\forall_{B\subseteq C}. \nonumber
\end{align}

We next use (CONF-LP-MC) to denote the linear relaxation of (CONF-IP-MC). Likewise for \textsc{Facility location}, we will show that in the independent activation
model we can reduce the universal stochastic setting to the \emph{rent-or-buy} setting,
where each edge $e$ can be either bought for price $c_e$ to serve all the clients
or be rented by client $c$ for price $c_e\cdot p_c$\footnote{In the facility location case we did not use the rent-or-buy interpretation explicitly: we do that here in order to give a different viewpoint.}.
We define variables $x^e$ to indicate that $e$ has been bought
and variables $\bar{x}_{c}^{e}$ to express the event of $e$ being {rented} by $c$.
This transition simplifies the linear program CONF-LP-MC to the following one by sacrificing {an} approximation factor {of} $\frac{e}{e-1}$.
\begin{align*}
\min & \sum_{e\in E}c_{e}\cdot x^e + \sum_{e\in E}c_e\sum_{c\in C}p_{c}\cdot\bar{x}_{c}^{e} &  & (\mbox{LP-MC}) 
\\
\mbox{s.t.} & \sum_{e\in P}(x^{e}+\bar{x}_{c}^{e})\geq 1 &  & \forall_{c\in C}\forall_{P\in \mathcal{P}_c} \\
 & x^{e},\bar{x}_{c}^{e}\geq0 &  & \forall_{e\in E}\forall_{c\in C}.
\end{align*}

The proofs of the following two lemmas and corollary are similar to derivations in Section \ref{sec:fl} so we placed them in Appendix~\ref{app:mc}.

\begin{lemma}\label{lem:mc1}
If $\pi$ is an independent activation distribution, then the optimal value of (LP-MC) is at most $\frac{e}{e-1}$
times larger than the optimal value of (CONF-LP-MC).
\end{lemma}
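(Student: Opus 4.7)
The plan is to mirror the argument of Lemma \ref{lem:fl:cost1}, transferring the ``big vs.\ small set'' decomposition to the multicut setting while carefully checking feasibility of the path-cut constraints of (LP-MC).

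First I would apply the lower bound in \eqref{eq:indineq} to the objective of (CONF-LP-MC), namely $\gpi(B)\geq (1-\tfrac{1}{e})\min(1,\sum_{c\in B}p_c)$. This means that replacing $\gpi(B)$ by $\min(1,\sum_{c\in B}p_c)$ in the objective inflates it by at most a factor $\frac{e}{e-1}$, so it suffices to bound the optimum of (LP-MC) by the optimum of the resulting ``capped linear'' configuration LP. Let $\{y_B^e\}$ be an optimal solution to that modified LP, and partition the configurations into $Big=\{B\subseteq C:\sum_{c\in B}p_c>1\}$ and $Sml=\{B\subseteq C:\sum_{c\in B}p_c\le 1\}$.

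Next I would set
\[
x^e \;=\; \sum_{B\in Big} y_B^e \qquad\text{and}\qquad \bar{x}_c^e \;=\; \sum_{B\in Sml,\,c\in B} y_B^e,
\]
and check that $(x^e,\bar{x}_c^e)$ is feasible for (LP-MC). For any client $c$ and path $P\in\mathcal{P}_c$, the (CONF-LP-MC) constraint gives
\[
1 \;\leq\; \sum_{e\in P}\sum_{B\ni c} y_B^e \;=\; \sum_{e\in P}\Bigl(\sum_{B\in Big,\,c\in B} y_B^e + \sum_{B\in Sml,\,c\in B} y_B^e\Bigr) \;\leq\; \sum_{e\in P}(x^e+\bar{x}_c^e),
\]
where the last step uses $\sum_{B\in Big,\,c\in B}y_B^e\leq \sum_{B\in Big} y_B^e=x^e$. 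So the path-cut feasibility is immediate.

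It then remains to bound the capped objective from below by the (LP-MC) objective evaluated at $(x^e,\bar{x}_c^e)$. For each edge $e$, the big sets contribute $\sum_{B\in Big} y_B^e\cdot 1 = x^e$ (since $\min(1,\cdot)=1$ there), while the small sets contribute $\sum_{B\in Sml} y_B^e\cdot \sum_{c\in B} p_c = \sum_{c\in C} p_c \sum_{B\in Sml,\,c\in B} y_B^e = \sum_{c\in C} p_c\,\bar{x}_c^e$. Multiplying by $c_e$ and summing over $e$ reproduces exactly the objective of (LP-MC). Combining this identity with the $\frac{e}{e-1}$ factor incurred by the first step yields the claim. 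The main obstacle, as in Lemma~\ref{lem:fl:cost1}, is being careful that the big-set contribution correctly dominates each $x^e$ rather than only its client-restricted version; but since the path constraint is summed over $e\in P$ and each big-set mass is counted at most once via $x^e$, no extra loss arises.
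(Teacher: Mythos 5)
Your proof is correct and follows essentially the same approach as the paper's: apply the lower bound in \eqref{eq:indineq} to pass to the capped objective, partition configurations into $Big$ and $Sml$, and project down to $(x^e,\bar{x}^e_c)$. The only cosmetic difference is that you set $x^e=\sum_{B\in Big}y^e_B$ directly, which makes the capped objective coincide exactly with the (LP-MC) objective, whereas the paper introduces client-indexed variables $x^e_c$ and takes $x^e=\max_c x^e_c$, settling for an inequality; both routes establish feasibility and the bound in the same way.
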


\begin{lemma}\label{lem:mc2}
If $G$ is a tree, then one can round a fractional solution
to (LP-MC) to an integral one of cost at most 3 times larger.
The procedure runs in polynomial time.
\end{lemma}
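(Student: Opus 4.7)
The plan is to decouple the rounding of the ``bought'' variables $x^{e}$ from the ``rented'' variables $\bar{x}_{c}^{e}$ by splitting the clients according to which of the two fractional masses dominates on the unique $s_c$--$t_c$ path $P_c$ in the tree. Given the optimal fractional solution $(x^{e}, \bar{x}_{c}^{e})$, I would set $X_c := \sum_{e\in P_c} x^{e}$ and $Y_c := \sum_{e\in P_c} \bar{x}_{c}^{e}$; LP feasibility gives $X_c+Y_c\geq 1$, so I partition $C=C_1\cup C_2$ with $C_1=\{c:X_c\geq 2/3\}$ and $C_2=C\setminus C_1$, which automatically satisfies $Y_c>1/3$.

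For the clients in $C_1$, the scaled vector $(\tfrac{3}{2}x^{e})_{e\in E}$ is a feasible fractional multicut on the tree since $\sum_{e\in P_c}\tfrac{3}{2}x^{e}\geq 1$. I would feed it to the classical $2$-approximation for multicut on trees of Garg--Vazirani--Yannakakis, producing an integral edge set $F$ of cost at most $2\cdot\tfrac{3}{2}\sum_e c_e x^{e} = 3\sum_e c_e x^{e}$; these edges are bought (i.e., the corresponding integral $x^e$ is set to one), so the cut of every client inherits $F$. For each $c\in C_2$ whose path $P_c$ is not already severed by $F$, I rent the cheapest edge $e^{\star}_c\in P_c$ (with respect to $c_e$). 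Because $Y_c>1/3$, a one-line averaging argument yields $c_{e^{\star}_c}\leq \tfrac{\sum_{e\in P_c} c_e \bar{x}_{c}^{e}}{Y_c}<3\sum_{e\in P_c} c_e \bar{x}_{c}^{e}$, so the rental charge contributed by $c$ is at most $3 p_c\sum_e c_e \bar{x}_{c}^{e}$. Clients in $C_1$ are by construction covered by $F$, and clients in $C_2$ are covered either by $F$ or by their rented edge, so the resulting integral $(x,\bar x)$ is feasible for (LP-MC).

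Summing, the total cost is $\leq 3\sum_e c_e x^{e} + 3\sum_c p_c\sum_e c_e \bar{x}_{c}^{e}$, i.e., at most three times the value of (LP-MC). The only genuine design choice is the partition threshold: a generic threshold $\alpha$ would incur a factor $2/\alpha$ on the bought side (GVY applied to $\tfrac{1}{\alpha}x^{e}$) and $1/(1-\alpha)$ on the rented side, and equalizing these forces $\alpha=2/3$ and the clean factor $3$ stated in the lemma. Polynomial runtime is immediate: on a tree each $P_c$ is unique, so (LP-MC) has polynomially many explicit constraints and can be solved directly, GVY runs in polynomial time, and the per-client cheapest-edge step is trivial. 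The main conceptual obstacle is really just trusting the tree multicut $2$-approximation as a black box; after that, the rounding and the accounting are straightforward.
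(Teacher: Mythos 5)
Your proposal is correct and follows essentially the same route as the paper's proof: same threshold split at $2/3$, same use of the Garg--Vazirani--Yannakakis $2$-approximation for tree multicut relative to the LP bound on the bought side, and the same cheapest-edge-on-$P_c$ choice on the rented side (the paper phrases the rental bound as an auxiliary LP with an integral optimum, while you give an equivalent one-line averaging argument). The only cosmetic difference is that you use a genuine partition $C_1\sqcup C_2$ whereas the paper's $C_{\mbox{\scriptsize big}}$ and $C_{\mbox{\scriptsize sml}}$ may overlap, but both are sound.
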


\begin{corollary}\label{cor:mct}
\US\MC on trees admits a $\frac{3e}{e-1}$-approximation w.r.t. the optimal universal solution in the independent activation model.
\end{corollary}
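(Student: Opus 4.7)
\medskip
\noindent\textbf{Proof plan.}
The plan is to mimic the chain of inequalities used for metric facility location in Section~\ref{sec:fl}, exploiting the fact that for multicut on trees we already have the two essential ingredients: Lemma~\ref{lem:mc1} (passing from the configuration LP to the rent-or-buy LP loses a factor $e/(e-1)$) and Lemma~\ref{lem:mc2} (the rent-or-buy LP on trees can be rounded within a factor $3$). The only extra work is to link the final integer rent-or-buy solution back to the true expected cost of a universal mapping.

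First, I would solve the polynomial-size LP (LP-MC) in polynomial time by standard LP solvers and let $(x^e,\bar x_c^e)_{e,c}$ denote its optimum. Applying the rounding of Lemma~\ref{lem:mc2} produces an integral $(X^e,\bar X_c^e)$ with
\[
\sum_{e\in E} c_e X^e + \sum_{e\in E}c_e\sum_{c\in C} p_c\,\bar X_c^e \;\le\; 3\cdot OPT_{\text{LP-MC}}.
\]
From this integral solution I define the mapping $\phi(c):=\{e\in E : X^e=1 \text{ or } \bar X_c^e=1\}$. The feasibility constraints of (LP-MC), namely $\sum_{e\in P}(X^e+\bar X_c^e)\ge 1$ for every $P\in \mathcal P_c$, together with integrality, imply that $\phi(c)$ hits every $s_c$--$t_c$ path and is therefore a valid cut, so $\phi$ is a feasible universal mapping.

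Next I would bound the expected cost of $\phi$. For each edge $e$ let $B^e=\{c:e\in\phi(c)\}$. If $X^e=1$ then $g_\pi(B^e)\le 1=X^e$; otherwise $B^e=\{c:\bar X_c^e=1\}$ and the upper bound in~\eqref{eq:indineq} (together with $g_\pi(B)\le \sum_{t\in B}p_t$, i.e.\ subadditivity of $g_\pi$ in the independent activation model) gives $g_\pi(B^e)\le \sum_{c\in C} p_c\,\bar X_c^e$. Summing over $e$ weighted by $c_e$ yields
\[
\exls{X\sim\pi}{\sum_{e\in\phi(X)}c_e}=\sum_{e\in E}c_e\,g_\pi(B^e)\;\le\;\sum_{e\in E}c_eX^e+\sum_{e\in E}c_e\sum_{c\in C}p_c\,\bar X_c^e.
\]

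Finally I would chain everything. Since (CONF-LP-MC) is a relaxation of (CONF-IP-MC), its optimum is a lower bound on the optimal universal cost $OPT$. Combining with Lemma~\ref{lem:mc1} and the two displays above,
\[
\exls{X\sim\pi}{\sum_{e\in\phi(X)}c_e}\;\le\;3\cdot OPT_{\text{LP-MC}}\;\le\;\tfrac{3e}{e-1}\cdot OPT_{\text{CONF-LP-MC}}\;\le\;\tfrac{3e}{e-1}\cdot OPT,
\]
which is the claimed approximation guarantee; the whole procedure runs in polynomial time because both the LP and the rounding of Lemma~\ref{lem:mc2} do. No step is a real obstacle here: the entire content of the corollary already sits in Lemmas~\ref{lem:mc1} and~\ref{lem:mc2}, and the mildly subtle point I have to be careful about is simply using the right direction of~\eqref{eq:indineq} when passing from the integral rent-or-buy solution back to the actual expected cost $g_\pi(B^e)$.
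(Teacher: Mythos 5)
Your argument is correct and is exactly what the paper means when it states that the corollary ``directly follows from the above two lemmas'': you solve (LP-MC), round via Lemma~\ref{lem:mc2}, interpret the integral rent-or-buy solution as a mapping $\phi$, and use the upper bound of~\eqref{eq:indineq} to relate the rent-or-buy cost back to $\sum_e c_e\, g_\pi(B^e)$, then chain through Lemma~\ref{lem:mc1}. The only step the paper leaves fully implicit, and which you spell out appropriately, is the final translation from the integral $(X^e,\bar X_c^e)$ to a feasible universal mapping together with the bound $g_\pi(B^e)\le X^e+\sum_c p_c\bar X_c^e$.
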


In order to solve the problem on general graphs,
we will embed the graph into a tree
that approximately preserves the structure of cuts.
The following construction has been introduced by R{\"a}cke~\cite{Raecke}.
We call a tree $T$ a decomposition tree of $G$ if
\begin{enumerate}
\item there is a bijection between the leaves of $T$ and the vertices of $G$,
\item each edge $e_t$ in $T$ has capacity $c^T_{e_t}$ equal to the weight of {the} cut it induces on $V(G)$ (we call this cut $m_T(e_t)$).
\end{enumerate}
For an edge $e \in E(G)$ and a decomposition tree $T$ we define {the} \emph{relative load} of $e$ as
$$
rload_T(e) = \bigg(\sum_{e_t \in E(T): \atop e \in m_T(e_t)} c^T_{e_t} \bigg)\, /\, c_e.
$$

{The main result in \cite{Raecke} concerns the relation between multicommodity flows in $G$ and $T_i$. As our LP formulation is slightly more sophisticated we need to exploit this result in more detail.} Section 2.1 in~\cite{Raecke} describes how to find (in polynomial time) a convex combination
of decomposition trees $\{\lambda_iT_i\}_{i=1}^q$ for a graph $G$, such that $M:=\max_{e \in E(G)}\left[\sum_{i=1}^q \lambda_irload_{T_i}(e)\right] = O(\log n)$.

\begin{theorem}\label{thm:mc}
\US\MC in the independent activation model admits a polynomial-time $O(\log n)$ expected approximation w.r.t. the optimal universal solution.
\end{theorem}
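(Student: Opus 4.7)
The plan is to reduce \US\MC on $G$ to the tree case via R\"acke's tree embedding and then invoke Lemma~\ref{lem:mc2}. First I would solve (LP-MC) on $G$ in polynomial time to obtain a fractional solution $(x^e, \bar{x}_c^e)$ of value $L^*$; by Lemma~\ref{lem:mc1}, $L^* \leq \tfrac{e}{e-1}\cdot OPT$. In parallel, I would compute R\"acke's convex combination $\{\lambda_i T_i\}_{i=1}^q$ with $M = O(\log n)$, and sample a tree $T = T_i$ with probability $\lambda_i$.

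Then I would lift $(x^e, \bar{x}_c^e)$ to a fractional solution on $T$ by the natural aggregation
\begin{align*}
x_T^{e_t} \;=\; \sum_{e \in m_T(e_t)} x^e, \qquad \bar{x}_{c,T}^{e_t} \;=\; \sum_{e \in m_T(e_t)} \bar{x}_c^e.
\end{align*}
Feasibility follows from max-flow/min-cut duality applied to the (LP-MC) path constraints: for every client $c$ and every $e_t$ on the unique $(s_c, t_c)$-path in $T$, the set $m_T(e_t)$ is an $(s_c, t_c)$-cut in $G$, hence $x_T^{e_t} + \bar{x}_{c,T}^{e_t} = \sum_{e \in m_T(e_t)}(x^e + \bar{x}_c^e) \geq 1$. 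Swapping sums, the lifted cost on $T$ equals $\sum_e c_e\,x^e\,rload_T(e) + \sum_c p_c \sum_e c_e\,\bar{x}_c^e\,rload_T(e)$, so by the guarantee on $M$ its expectation over the random choice of $T$ is at most $M\cdot L^*$. Next I would apply Lemma~\ref{lem:mc2} to round the lifted solution on $T$ to an integral one of cost at most three times larger, and set $\phi(c)\subseteq E(G)$ to be the union of $m_T(e_t)$ over tree edges $e_t$ in $c$'s integral multicut on $T$; the latter is a valid $(s_c, t_c)$-cut in $G$ by construction of $m_T$.

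Finally, I would bound the actual expected universal cost of $\phi$ by the integral tree cost. Call $e \in E(G)$ \emph{bought} if $e \in m_T(e_t)$ for some $e_t$ with $x_T^{e_t} = 1$, and otherwise \emph{rented by} $c$ whenever $e \in \phi(c)$. Using both $g_\pi(B) \leq 1$ and $g_\pi(B) \leq \sum_{c\in B} p_c$ (the two inequalities from \eqref{eq:indineq}), the universal cost $\sum_e c_e\,g_\pi(\{c : e \in \phi(c)\})$ is at most $\sum_e c_e \cdot [e\text{ bought}] + \sum_e c_e \sum_c p_c \cdot [e\text{ rented by } c]$, and the identity $\sum_{e \in m_T(e_t)} c_e = c^T_{e_t}$ then bounds this by the integral tree-LP cost. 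Chaining the inequalities yields expected cost at most $3M\cdot L^* = O(\log n)\cdot OPT$.

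The hard part will be this last step: translating an integral multicut on the surrogate tree back to a universal mapping on $G$ whose expected cost is comparable to the rent-or-buy surrogate, rather than degrading by an extra per-client factor. The two complementary upper bounds on $g_\pi$ that already powered Lemma~\ref{lem:mc1} resolve this without additional slack, and the rest reduces to accounting already encoded in R\"acke's definition of $rload$.
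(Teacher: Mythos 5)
Your per-edge feasibility claim --- that $x_T^{e_t}+\bar x^{e_t}_{c,T}=\sum_{e\in m_T(e_t)}(x^e+\bar x^e_c)\geq 1$ for every tree edge $e_t$ on the $(s_c,t_c)$-path in $T$ --- is false, and this breaks the whole lifting. The (LP-MC) path constraints on $G$ only assert that the shortest $(s_c,t_c)$-path under the lengths $x^e+\bar x^e_c$ has length at least $1$; this is a metric statement and does \emph{not} imply that every $(s_c,t_c)$-cut has total weight at least $1$. Max-flow/min-cut duality relates min-cut to max-flow; it does not turn a fractional shortest-path lower bound into a fractional cut lower bound, so it does not bridge this gap. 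Concretely, take $G$ to be the path $s,v_1,\dots,v_{n-1},t$ with unit costs and $x^e+\bar x^e_c=1/n$ on each edge (feasible for (LP-MC)), and let $T$ be the depth-one star on the $n+1$ leaves; this is a valid decomposition tree with $rload_T(e)=O(1)$ on every edge, yet the $(s,t)$-tree path has just two edges with $m_T$-cuts $\{sv_1\}$ and $\{v_{n-1}t\}$, so even the path-sum of your lifted weights is $2/n\ll 1$. No $O(1)$ rescaling can repair this. The conceptual issue is that R\"acke's convex combination bounds \emph{congestion} (how flows on the trees embed into $G$), not \emph{distance stretch} from $G$ into an individual tree; your primal lifting needs the latter, which is an FRT-type guarantee and not what $rload$-bounded trees provide. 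Consequently the fractional solution you hand to Lemma~\ref{lem:mc2} need not be feasible, and the rounding step has nothing to act on.

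The paper sidesteps this by working on the \emph{dual} (DP-MC), which is a multicommodity flow LP: it takes an optimal dual flow $(\beta^{T_i})$ on each tree $T_i$ of value $L_i$, maps it into a flow $(\alpha^{T_i})$ on $G$ of the same value --- this is exactly the embedding direction R\"acke's decomposition supplies --- with congestion inflated by at most $rload_{T_i}$, and then observes that the convex combination $\sum_i\lambda_i\alpha^{T_i}$, scaled down by $M=O(\log n)$, is dual-feasible for $G$. This yields $\sum_i\lambda_i L_i\leq M\cdot L$, so the \emph{best} tree (not a randomly sampled one) satisfies $L_i\leq M\cdot L$, after which Lemma~\ref{lem:mc1} and Lemma~\ref{lem:mc2} and the cut-translation back to $G$ finish the argument. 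Your final paragraph --- mapping a rounded tree multicut to a universal $\phi$ on $G$ and charging the bought and rented edges against the two sides of \eqref{eq:indineq}, using $\sum_{e\in m_T(e_t)}c_e=c^T_{e_t}$ --- is sound and matches the paper's accounting; the proof just cannot reach it until the feasibility step is fixed, and the fix is to argue on the flow side of the duality rather than the cut side.
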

\begin{proof}
Lemma \ref{lem:mc2} implies that a fractional solution to (LP-MC) over $T_i$
can be rounded to an integer solution with an increase of the cost by most a factor $3$.
Observe that each tree edge on a path between terminals corresponds to some cut between these terminals
so any solution to \US\MC on a decomposition tree induces a solution on the original graph
of at most the same cost.

Let $L_i$ denote the optimal value of (LP-MC) over $T_i$
and $L$ denote the same for $G$.
We are going to show that $\sum_{i=1}^q \lambda_iL_i = O(L\log n)$.
In particular this means that $\min_{i=1}^q L_i = O(L\log n)$.
After rounding the fractional solution on $T_i$ of the smallest value,
we will obtain an integral solution for $G$ of cost $O(L\log n)$,
what entails an $O(\log n)$ approximation.

Let us consider the dual linear program of (LP-MC).
\begin{align}
\max & \sum_{c\in C}\sum_{P\in \mathcal{P}_c}\alpha_P &  & \text{(DP-MC)}\nonumber \\
\mbox{s.t.} & \sum_{c\in C}\sum_{P\in \mathcal{P}_c\atop e \in P}\alpha_P \le c_e &  & \forall_{e\in E} \label{eq:mc-lp6-2} \\
 & \sum_{P\in \mathcal{P}_c\atop e \in P}\alpha_P\le c_e\cdot p_c &  & \forall_{c\in C}\,\forall_{e\in E} \label{eq:mc-lp6-3} \\
 & \alpha_P\geq0 &  & \forall_{c\in C}\,\forall_{P\in \mathcal{P}_c}.\nonumber
\end{align}

Feasible solutions to (DP-MC) are just multicommodity flows
satisfying conditions \eqref{eq:mc-lp6-2}-\eqref{eq:mc-lp6-3}.
Let $(\beta^{T_i})$ be an optimal solution to (DP-MC) on a decomposition tree $T_i$ (of value $L_i$)
and let $(\alpha^{T_i})$ be a flow on $G$ where a unit flow over $e_t$ in $(\beta^{T_i})$ translates into a unit flow over $m_T(e_t)$.
Note that the value of shipped commodities remains the same.
Consider $(\alpha) = \sum_{i=1}^q \lambda_i(\alpha^{T_i})$.
This is a (not necessarily feasible) solution of value $\sum_{i=1}^q \lambda_iL_i$.

As $(\beta^{T_i})$ routes at most $c^{{T_i}}_{e_t}$ flow through an edge $e_t$,
then $(\alpha^{T_i})$ routes at most 
$\sum_{e_t \in E(T_i):\atop e \in m_{{T_i}}(e_t)} c^{{T_i}}_{e_t}$
flow through an edge $e$.
Therefore constraint (\ref{eq:mc-lp6-2}) is exceeded at most $rload_{{T_i}}(e)$ times
in $(\alpha^{T_i})$ and $\sum_{i=1}^q \lambda_irload_{T_i}(e)$ times in $(\alpha)$.
If we consider vectors $(\beta^{T_i}_c)$ given by flow routed between terminals of client $c$,
then we conclude the same for constraint (\ref{eq:mc-lp6-3}).

After scaling $(\alpha)$ down times $M = \max_{e \in E(G)}\left[\sum_{i=1}^q \lambda_irload_{T_i}(e)\right]$ we obtain
a feasible solution to (DP-MC) for $G$.
This means that $L$ is no less
than $\sum_{i=1}^q \lambda_iL_i$ divided by $M$.
As we know from \cite{Raecke} that $M = O(\log n)$, the claim follows.
\end{proof}

\bibliographystyle{plainnat}
\bibliography{setcover.bib}

\appendix

\section{Vertex Cover and Edge Cover}
\label{sec:vertexCover}

In this section we consider the universal stochastic versions of Vertex Cover and Edge Cover, which are both special cases of \US\SC. In particular, \US\VC is the special case induced by letting the edges of an undirected graph be the elements of the universe, and the vertices of the same graph be the sets (covering all the edges incident to them). Note that costs are placed on the nodes. Observe also that each vertex might be assigned only a subset of the edges incident to it.
One obtains the \US\EC problem by interchanging the roles of edges and vertices in the above construction (in particular, each edge is a set that covers its two endpoints). We recall that standard \VC admits a $2$ approximation while standard \EC can be solved exactly in polynomial time. Here we show that the same can be achieved in the universal stochastic setting. 

Let us start with \US\VC. We first need the following lemma. Recall that in a set cover instance the \emph{frequency} $f$ of an element $u$ is the number of sets containing $u$.
\begin{lemma}
Under the assumption that each element has frequency at most $f$,
any feasible solution to linear program (CONF-IP-SC) (see Section~\ref{sec:sc})
can be rounded to an integer solution of cost at most $f$ times larger in deterministic polynomial time.
\end{lemma}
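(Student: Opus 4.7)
The plan is to adapt the classical $f$-approximation LP rounding for bounded-frequency set cover to the configuration LP. For every $u \in U$ and every set $S \in \cC$ with $u \in S$, I would define the total fractional assignment weight
\[
\alpha_u^S \;:=\; \sum_{B \subseteq S,\; u \in B} y_B^S.
\]
The covering constraint of (CONF-LP-SC) then reads $\sum_{S \ni u} \alpha_u^S \geq 1$, and since this sum has at most $f$ non-zero terms by the frequency hypothesis, at least one set $S_u \ni u$ must satisfy $\alpha_u^{S_u} \geq 1/f$. Picking one such $S_u$ arbitrarily for every $u$ is deterministic and runs in polynomial time.

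Given these choices, I would round by setting, for every $S \in \cC$, $B_S := \brw{u \in U : S_u = S}$, $\hat y^S_{B_S} = 1$, and all remaining variables to zero. Feasibility is immediate since every $u$ lies in $B_{S_u}$ by construction. The heart of the cost analysis is the pointwise inequality
\[
\ind{B_S \cap X \neq \emptyset} \;\leq\; f \cdot \sum_{B \subseteq S} y_B^S \cdot \ind{B \cap X \neq \emptyset},
\]
valid for every realization $X \subseteq U$. The only nontrivial case is $B_S \cap X \neq \emptyset$: fixing any $u \in B_S \cap X$, one has $\alpha_u^S \geq 1/f$, and every $B$ with $u \in B \subseteq S$ and $y_B^S > 0$ satisfies $\ind{B \cap X \neq \emptyset} = 1$ (because $u \in B \cap X$), so the right-hand side is at least $f \cdot \alpha_u^S \geq 1$. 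Taking expectation over $X \sim \pi$ yields $g_{\pi}(B_S) \leq f \cdot \sum_{B \subseteq S} y_B^S \cdot g_{\pi}(B)$; multiplying by $c(S)$ and summing over $S \in \cC$ then gives an integer solution whose cost is at most $f$ times the LP value.

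The main obstacle I anticipate is arriving at the pointwise inequality above. A naive attempt via subadditivity of $g_{\pi}$ bounds $g_{\pi}(B_S) \leq \sum_{u \in B_S} g_{\pi}\br{\brw{u}}$ and, combined with monotonicity to replace $g_{\pi}\br{\brw{u}}$ by $g_{\pi}(B)$ for $B \ni u$, double-counts by a factor of $|B_S \cap B|$, blowing up the bound by $|B|$ rather than merely $f$. The fix is to argue at the level of the scenario indicator $\ind{B \cap X \neq \emptyset}$ rather than $g_{\pi}$ itself: for each fixed $X$ the constraint $\alpha_u^S \geq 1/f$ survives intact and delivers the clean $f$-factor, which only afterwards is averaged over $\pi$. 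Once this observation is in place, the remainder of the argument is routine and deterministic.
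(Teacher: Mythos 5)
Your proof is correct and follows essentially the same route as the paper's: pick for each $u$ a set $S_u$ carrying at least a $1/f$-fraction of its coverage, round to the sets $B_S = \{u : S_u = S\}$, and bound the cost loss by arguing scenario-by-scenario on the indicator $\ind{B \cap X \neq \emptyset}$ before averaging over $\pi$. Your final paragraph correctly identifies the key insight — the pointwise argument over $X$ rather than applying subadditivity of $g_\pi$ — which is precisely the ``count the contribution of each $X$'' step in the paper.
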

\begin{proof}
Fix a solution $\{y^S_B\}_{S\in \cC,\,B\subseteq S}$.
As for each $u\in U$ we have $\sum_{B: u \in B}\sum_{S: B \subseteq S}y_{B}^{S}\geq 1$,
then there is a set $S_u$ such that $\sum_{B: u \in B,\, B \subseteq S_u}y_{B}^{S_u}\geq \frac{1}{f}$
(if there are many such sets, we pick an arbitrary one).
We want to argue that such an assignment provides an integer soution of desirable cost.
Recall that the cost of the solution equals $\sum_{S \in \cC}c(S)\sum_{B\subseteq S}y_{B}^{S}\cdot g_\pi(B)$.
Let $B_S = \{u \in S: S_u = S\}$.
We need to prove that for each set $S$ it holds
\begin{eqnarray*}
g_\pi(B_S) &\le& f\cdot \sum_{B \subseteq S}y^S_B\cdot g_\pi(B), \text{\quad or equivalently} \\
\sum_{X\subseteq U}\pi(X)\cdot\ind{X \cap B_S \neq \emptyset} &\le& f\cdot \sum_{B \subseteq S}y^S_B\cdot \left(\sum_{X\subseteq U}\pi(X)\cdot\ind{X \cap B \neq \emptyset}\right).
\end{eqnarray*}

We will count the contribution of each set $X\subseteq U$ to the left and the right side of the inequality.
If $X \cap B_S = \emptyset$ then $X$ does not influence the left side and may add something to the right side.
Otherwise it adds exactly $\pi(X)$ to the left side.
In this case, let $u$ be any element in $X \cap B_S$.
The contribution of $X$ to the right side equals $\pi(X)\cdot f\cdot \sum_{B \subseteq S,\, B \cap X \neq \emptyset}y^S_B$
which is at least $\pi(X)\cdot f\cdot \sum_{B \subseteq S,\, u \in B}y^S_B \ge \pi(X)\cdot f \cdot \frac{1}{f} = \pi(X)$.
The claim follows by summing over all sets $X$.
\end{proof}

As the linear program (CONF-IP-SC) can be solved in polynomial time (see Lemma~\ref{lem:LPsolution}),
we obtain an algorithm that matches the approximation ratio for deterministic \textsc{Vertex Cover}.

\begin{theorem}\label{thm:vc}
\US\VC admits a polynomial-time deterministic 2-approximation w.r.t. the optimal universal solution when $g_{\pi}$ can be evaluated in polynomial time.
\end{theorem}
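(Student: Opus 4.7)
The plan is to cast \US\VC directly as an instance of \US\SC and then invoke the two main ingredients already developed: the polynomial-time solvability of (CONF-LP-SC) from Lemma \ref{lem:LPsolution}, and the deterministic $f$-approximate rounding from the lemma immediately preceding the theorem statement.

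First, I would unwind the reduction explicitly. Let $G=(V,E)$ be the input graph with vertex weights $c(v)$, and a distribution $\pi$ over subsets of edges. Build the \US\SC instance by taking $U=E$ as the universe and, for each $v\in V$, a set $S_v=\delta(v)$ of cost $c(v)$. A universal mapping $\phi:E\to V$ in \US\VC corresponds exactly to a mapping of elements to sets in \US\SC with $\phi(e)\in\{u,v\}$ whenever $e=uv$; the induced cover $\phi(X)$ opens precisely the vertex cover needed by the sampled edges $X$. Hence the two problems coincide, and $g_\pi$ computed for this set-cover instance is exactly what governs the expected vertex-cover cost.

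Next, I would observe that every edge $e=uv$ belongs to exactly two of the sets $S_v$, namely $S_u$ and $S_v$. Therefore the frequency parameter in the rounding lemma is $f=2$. Since $g_\pi$ is assumed evaluable in polynomial time, Lemma \ref{lem:LPsolution} yields an optimal fractional solution $\{y_B^S\}$ to (CONF-LP-SC) in polynomial time; its value is a lower bound on the cost of the optimal universal mapping, by the same equivalence argument used in the \US\SC lemma.

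Finally, I would feed this fractional solution into the deterministic rounding procedure from the previous lemma: for each element (edge) $e$, pick the set (endpoint) $S_e$ that accumulates mass at least $\tfrac{1}{f}=\tfrac{1}{2}$ in the constraint for $e$, and define the integral mapping $\phi$ accordingly. The lemma guarantees that the resulting cost is at most $f\cdot\mathrm{OPT}_{\mathrm{LP}}=2\cdot\mathrm{OPT}_{\mathrm{LP}}\le 2\cdot\mathrm{OPT}$, and the construction is deterministic and polynomial. I do not foresee a real obstacle here; the only thing worth double-checking is that the rounding lemma's guarantee on $g_\pi(B_S)$ (aggregating edges rounded to each vertex) still gives the desired bound on the original objective, which it does because (CONF-LP-SC) is literally the LP whose optimum we round. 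Collecting these observations delivers the claimed deterministic $2$-approximation.
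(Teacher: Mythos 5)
Your proof is correct and follows the same route as the paper: you reduce \US\VC to \US\SC with $U=E$ and $S_v=\delta(v)$, solve (CONF-LP-SC) via Lemma~\ref{lem:LPsolution}, note that every edge has frequency $f=2$, and invoke the frequency-$f$ deterministic rounding lemma to obtain the factor-$2$ guarantee. This matches the paper's argument essentially step for step.
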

\begin{corollary}\label{cor:vc}
\US\VC admits a polynomial-time deterministic 2-approximation w.r.t. the optimal universal solution in the scenario model and in the independent activation model.
\end{corollary}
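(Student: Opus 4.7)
The plan is to view \US\VC as the natural special case of \US\SC in which the universe $U$ is the edge set $E(G)$ of the input graph, and each vertex $v$ is identified with the set $S_v := \{e \in E(G) : v \in e\}$ of edges incident to it, with cost $c(S_v)$ equal to the cost of $v$. A universal mapping $\phi$ assigns each edge to one of its two endpoints, and the stochastic cost is exactly the \US\SC objective on this instance. Crucially, under this encoding every element (edge) has \emph{frequency} exactly $2$, since an edge is contained in precisely the two sets corresponding to its endpoints.

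With this reduction in hand, the proof is essentially a two-line composition of results already proved in the excerpt. First I would invoke Lemma~\ref{lem:LPsolution} to solve the configuration LP (CONF-LP-SC) on this instance in polynomial time, which is legitimate because $g_\pi$ is computable in polynomial time by hypothesis. Second, I would feed the resulting fractional solution $\{y_B^S\}$ into the deterministic frequency-$f$ rounding lemma stated immediately above Theorem~\ref{thm:vc}, specialized to $f = 2$. That lemma produces, in deterministic polynomial time, an integer feasible solution whose cost (measured with the submodular $g_\pi$-weighted objective) is at most $2$ times the LP cost, hence at most $2 \cdot OPT$ where $OPT$ is the expected cost of the optimal universal mapping. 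Since the integer solution to (CONF-IP-SC) is, as shown in the equivalence lemma at the start of Section~\ref{sec:sc}, nothing other than a universal mapping $\phi$ (each edge is assigned to exactly one of its endpoints in the rounded solution), we directly obtain a $2$-approximate universal vertex cover.

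For Corollary~\ref{cor:vc}, the only missing ingredient is polynomial-time evaluability of $g_\pi$ in the two concrete sampling models. In the scenario model one has $g_\pi(B) = \sum_{i=1}^N \pi(X_i)\,\mathbf{1}[X_i \cap B \neq \emptyset]$, computable directly in $O(N \cdot |U|)$ time. In the independent activation model one uses $g_\pi(B) = 1 - \prod_{u \in B}(1-p_u)$, also computable in linear time. Hence the hypothesis of Theorem~\ref{thm:vc} is met in both settings, and the corollary follows immediately.

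I do not expect any real obstacle: the two substantive tools (LP solvability via submodular separation and the deterministic frequency-based rounding) are already established, and the only step particular to vertex cover is the elementary observation that edges have frequency $2$. The one mild sanity check worth performing is to confirm that the rounding lemma's guarantee, stated with respect to the submodular $g_\pi$-weighted objective, really upper bounds the true expected cost of the induced mapping $\phi$; this is immediate because the equivalence lemma shows the (CONF-IP-SC) objective equals $\exls{X \sim \pi}{c(\phi(X))}$ for the mapping read off from an integer solution.
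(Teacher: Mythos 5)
Your proposal is correct and follows exactly the paper's route: encode \US\VC as \US\SC with frequency-$2$ elements, solve (CONF-LP-SC) via the submodular separation oracle of Lemma~\ref{lem:LPsolution}, round deterministically with the frequency-$f$ lemma at $f=2$ to get Theorem~\ref{thm:vc}, and then specialize to the scenario and independent activation models by noting that $g_\pi$ is polynomial-time computable there. The explicit formulas you give for $g_\pi$ in the two models are precisely the justification the paper implicitly relies on, so there is no substantive difference.
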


It turns out that even some exact algorithms can be translated into the universal stochastic paradigm.

\begin{theorem}\label{thm:ec}
\US \textsc{Edge Cover} can be solved exactly in deterministic polynomial time when $g_{\pi}$ can be evaluated in polynomial time.
\end{theorem}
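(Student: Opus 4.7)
The plan is to reduce \US\EC to a maximum-weight matching computation in a general graph, which Edmonds' algorithm solves exactly in polynomial time. The structural observation is that a universal mapping is a function $\phi: V \to E$ with $v \in \phi(v)$, and for each edge $e = uv$ one defines $B_e^\phi := \{w \in e : \phi(w) = e\}$. The \emph{two-way} edges $M_\phi := \{e : B_e^\phi = e\}$ form a matching in $G$, because $\phi$ is single-valued. So every solution is completely described by a matching $M$ (whose edges cost the full $c_e \gpi(\{u,v\})$) together with one-way assignments of the remaining vertices (each $v \notin V(M)$ paying $c_{\phi(v)} \gpi(\{v\})$).

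First I would show that
\[
OPT \;=\; \min_{M \text{ matching in } G} \tilde{C}(M), \qquad \tilde{C}(M) := \sum_{e=uv \in M} c_e \gpi(\{u,v\}) \;+\; \sum_{v \notin V(M)} w_v^\star\, \gpi(\{v\}),
\]
where $w_v^\star := \min_{e \ni v} c_e$. For the lower bound, given any $\phi$ with $M = M_\phi$, each unmatched $v$ maps to an incident edge of cost at least $w_v^\star$, hence $\mathrm{cost}(\phi) \ge \tilde{C}(M_\phi)$. For the upper bound, given $M$ I would construct $\phi$ by sending each $v \in V(M)$ to its $M$-edge and each $v \notin V(M)$ to a cheapest incident edge $e_v^\star$. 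The only subtlety is that two unmatched endpoints of some $e = vv'$ might both pick $e$, in which case the true contribution of $e$ is $c_e \gpi(\{v,v'\}) \le c_e(\gpi(\{v\})+\gpi(\{v'\}))$ by subadditivity of $\gpi$ (which follows from submodularity, Lemma~\ref{lem:sub}, together with $\gpi(\emptyset)=0$); thus $\mathrm{cost}(\phi) \le \tilde{C}(M)$.

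To compute the minimum I would rewrite
\[
\tilde{C}(M) = \sum_{v \in V} w_v^\star\, \gpi(\{v\}) \;-\; \sum_{e=uv \in M} \Bigl( w_u^\star\gpi(\{u\}) + w_v^\star\gpi(\{v\}) - c_e\gpi(\{u,v\}) \Bigr),
\]
reducing the problem to maximum-weight matching with edge weights $W_e := w_u^\star\gpi(\{u\}) + w_v^\star\gpi(\{v\}) - c_e\gpi(\{u,v\})$ (negative-weight edges are simply discarded). Since $\gpi$ is polynomial-time evaluable by hypothesis, the weights $W_e$ are computable in polynomial time, Edmonds' algorithm then yields an optimum matching $M^\star$, and the mapping $\phi^\star$ constructed as above is an exact optimum of \US\EC. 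The delicate point I would write up most carefully is the upper-bound direction: one must verify that if two unmatched vertices accidentally select the same cheapest edge, the true cost is no larger than the value $\tilde{C}(M)$ counted in the matching objective — this is precisely where submodularity of $\gpi$ enters and ensures the reduction is tight, not merely an upper bound on $OPT$.
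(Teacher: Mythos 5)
Your proof is correct, but it follows a genuinely different route from the paper's. The paper produces a direct reduction to a deterministic instance of weighted \textsc{Edge Cover}: each edge $uv$ is reweighted to $c_{uv}\cdot g_\pi(\{u,v\})$, and for each incidence $(v,e)$ a loop at $v$ of cost $c_e\cdot g_\pi(\{v\})$ is added (loops are then eliminated with an auxiliary zero-cost gadget), after which the known polynomial-time algorithm for weighted \textsc{Edge Cover} is invoked as a black box. You instead unfold the classical ``edge cover $\leftrightarrow$ matching'' correspondence inline: you identify the two-way edges $M_\phi$ of an assignment as a matching, prove that $OPT$ equals $\min_M \tilde{C}(M)$, and rewrite $\tilde{C}$ as a constant minus a linear function of $M$ so that Edmonds' maximum-weight matching algorithm applies. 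Your argument requires one extra care point that the paper's avoids, namely the collision case where two unmatched endpoints both select the same cheapest incident edge, which you handle correctly via subadditivity of $g_\pi$ (and you correctly note this needs $g_\pi(\emptyset)=0$ on top of submodularity). The trade-off is that the paper's version is shorter because it delegates the combinatorics to a known solved problem, while yours is more self-contained and makes the matching structure of an optimal universal mapping explicit; both yield exact polynomial-time algorithms.
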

\begin{proof}
Let us consider an edge $e=uv$ with cost $c_e$. Covering both
$u,v$ with $e$ costs $c(e)\cdot g_\pi(\{u,v\}) = c(e)\cdot\mathbb{P}_{X\sim\pi}(u\in X \lor v\in X)$.
Covering only one vertex, e.g. $u$, with $e$ costs $c(e)\cdot\mathbb{P}_{X\sim\pi}(u\in X).$
We construct a new deterministic instance of \textsc{Edge Cover} where each edge $uv$ gets
its cost multiplied by $g_\pi(\{u,v\})$ and we add loops around $u,v$ with costs as above -- all these costs can be
evaluated in polynomial time.

To get rid of the loops we create additional vertices $a,b$, we connect them with a 0-cost
edge, and we transform every loop $vv$ into an edge $va$ with
the same cost.
Covering $a$ and $b$ is free so taking the $va$ edge
is equivalent to covering $v$.
This is a standard instance of \textsc{Edge Cover} which is in P. 
\end{proof}
\begin{corollary}\label{cor:ec}
\US \textsc{Edge Cover} can be solved exactly in deterministic polynomial time in the scenario model and in the independent activation model.
\end{corollary}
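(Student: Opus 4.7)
My plan is to reduce \US\EC to a deterministic minimum-weight \EC instance, which is known to be solvable exactly in polynomial time (for example by the classical reduction to minimum-weight perfect matching in an auxiliary graph).

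First I would note that a universal mapping $\phi$ assigns each vertex $v$ of the graph to an incident edge $\phi(v)$, and the expected cost decomposes on a per-edge basis: for each edge $e=uv$, the preimage $\phi^{-1}(e)$ is one of $\emptyset$, $\{u\}$, $\{v\}$, $\{u,v\}$, and contributes exactly $c(e)\cdot g_\pi(\phi^{-1}(e))$ to $\exls{X\sim\pi}{c(\phi(X))}$. Monotonicity of $g_\pi$ (Lemma~\ref{lem:sub}) lets me assume in some optimum that each vertex is mapped to exactly one edge, so the per-edge choices become independent. Crucially, all four possible contributions involve $g_\pi$ evaluated on a set of size at most $2$, hence are computable in polynomial time by hypothesis.

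Next I would encode these local choices in an auxiliary loop-allowing graph $G'$ on $V(G)$: for every edge $e=uv\in E(G)$, add in $G'$ an edge $uv$ of cost $c(e)\cdot g_\pi(\{u,v\})$ together with two loops, at $u$ and at $v$, of costs $c(e)\cdot g_\pi(\{u\})$ and $c(e)\cdot g_\pi(\{v\})$ respectively. Interpreting a loop at $v$ as ``the corresponding original edge covers only $v$'' and an edge $uv$ as ``it covers both endpoints'', minimum weight edge covers of $G'$ are in bijection with optimum universal mappings. To turn $G'$ into a standard (loopless) \EC instance, I apply the gadget already sketched in the statement: adjoin two new vertices $a,b$ connected by a $0$-cost edge $ab$, and replace every loop at $v$ by a new edge $va$ of the same cost. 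Since $ab$ covers $a$ and $b$ for free in every minimum solution, the only useful role of a $va$ edge is to cover $v$, so the transformation is cost preserving.

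The main thing to check carefully is this last correspondence: I must verify that no spurious solution arises from over-covering a vertex with both a simple edge $uv$ and a loop-edge $va$, nor from a clever use of the auxiliary vertices through many $va$ edges. Both objections reduce to observing that all involved costs are non-negative (so any redundant edge can be removed from a feasible edge cover without increasing cost) and that monotonicity of $g_\pi$ prevents any single original vertex from benefiting from being assigned to more than one edge in $\phi$. Once this is in place, running the standard polynomial-time algorithm for deterministic \EC on the transformed instance returns an exact optimum for \US\EC.
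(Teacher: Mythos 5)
Your reduction is correct and is essentially the same construction as the paper's proof of Theorem~\ref{thm:ec}: multiply each edge cost by $g_\pi$ of its corresponding coverage set, represent single-endpoint choices by loops, and eliminate loops via the two auxiliary vertices $a,b$ joined by a $0$-cost edge. The only (harmless) difference is that you argue directly for the two specific models of the corollary rather than first proving the general statement for any polynomially evaluable $g_\pi$ and then specializing.
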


\section{Non-Metric Facility Location}
\label{sec:nmfl}

\begin{theorem}\label{thm:nmfl}
\US\NMFL admits a Las-Vegas polynomial-time $O(\log |C|)$ approximation algorithm w.r.t. the optimal universal solution when $g_\pi$ can be evaluated in polynomial time.
\end{theorem}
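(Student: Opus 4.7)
The plan is to mirror the approach from Section~\ref{sec:sc:scenario} but applied to (CONF-LP-FL), the configuration LP already introduced for \US\MFL in Section~\ref{sec:fl}. Observe that (CONF-LP-FL) is a valid relaxation of \US\NMFL with no assumption on $d$, because its derivation does not exploit the triangle inequality; in particular its optimum $OPT_{\text{CONF-LP-FL}}$ lower-bounds the optimal universal cost. Thus everything reduces to (i) solving (CONF-LP-FL) in polynomial time and (ii) rounding it to within $O(\log|C|)$ of its fractional value.

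For (i), I would follow the template of Lemma~\ref{lem:LPsolution}. Pass to the dual of (CONF-LP-FL): it has polynomially many variables $\{\alpha_c\}_{c\in C}$ and, for every $f\in F$ and $B\subseteq C$, a constraint
\[
\sum_{c\in B}\alpha_c \;\leq\; o_f\cdot g_\pi(B)\;+\;\sum_{c\in B} g_\pi(c)\cdot d(c,f).
\]
For a fixed $f$, separation reduces to deciding whether
$h_f(B):=o_f\cdot g_\pi(B)+\sum_{c\in B}\bigl(g_\pi(c)\cdot d(c,f)-\alpha_c\bigr)$
is negative for some $B\subseteq C$. By Lemma~\ref{lem:sub}, $g_\pi$ is submodular; since $o_f\geq 0$ and the remainder of $h_f$ is modular in $B$, the function $h_f$ is submodular, so it can be minimized in polynomial time by Theorem~\ref{th:sub-poly}. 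The ellipsoid method then solves the dual, and hence (CONF-LP-FL), in polynomial time.

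For (ii), I would use the randomized rounding of Lemma~\ref{lem:round} applied independently per facility: for each $f$ and $q=2\ln|C|$ independent rounds, sample each $B\subseteq C$ as an independent Bernoulli of parameter $y_B^f$; let the mapping $\phi$ assign a client $c$ to an arbitrary facility $f$ for which $c\in B_i^f$ in some round. By subadditivity and monotonicity of $g_\pi$, the expected opening contribution is $\sum_f o_f\cdot\mathbb{E}[g_\pi(\bigcup_i B_i^f)]\leq q\sum_f o_f\sum_B y_B^f\cdot g_\pi(B)$; by linearity the expected connection contribution is at most $q\sum_c\sum_f g_\pi(c)\cdot d(c,f)\sum_{B\ni c}y_B^f$. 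Both are $q$ times the respective terms of the LP, so the expected total cost is $O(\log|C|)\cdot OPT_{\text{CONF-LP-FL}}$. The same Chernoff-style argument as in Lemma~\ref{lem:round} bounds the probability that some client is uncovered by $|C|\cdot e^{-q}\leq 1/|C|$. Combining this with a Markov bound on the cost and repeating until both events hold yields a Las~Vegas algorithm with expected polynomial running time and approximation factor $O(\log|C|)$.

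The main obstacle I foresee is just bookkeeping: the objective of (CONF-LP-FL) is a sum of two qualitatively different contributions (opening cost in $g_\pi(B)$ and connection cost in $g_\pi(c)$), and one must check that the randomized rounding blows each of them up by the same $O(\log|C|)$ factor without interference. Once the two parts are accounted for separately as above, the argument is a direct adaptation of the set-cover template and no genuinely new ideas are required.
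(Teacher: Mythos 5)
Your proposal is correct and takes essentially the same route as the paper: formulate a configuration LP for \US\NMFL, solve it via the dual and a submodular-minimization separation oracle (the function $h_f$ being a non-negative multiple of $g_\pi$ plus a modular term), and round by the same $O(\log|C|)$-round randomized scheme as Lemma~\ref{lem:round}. The only cosmetic difference is that you reuse (CONF-LP-FL) from Section~\ref{sec:fl} directly, observing correctly that it needs no metric assumption, whereas the paper rewrites the LP afresh in Appendix~\ref{sec:nmfl}; your formulation with the $g_\pi(c)\cdot d(c,f)$ connection coefficient is the right one.
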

\begin{proof}
We consider an integer programming formulation for \US\NMFL that resembles CONF-IP-SC for \US\SC (see Section~\ref{sec:sc}), where
there is an additional additive term in the objective function
responsible for the connection costs.
The corresponding linear relaxation and its dual are as follows.
\begin{align*}
\min & \sum_{f \in \mathcal{F}}c(f)\sum_{B\subseteq C}y_{B}^{f}\cdot \gpi(B) + \sum_{f \in \mathcal{F}}\sum_{u \in C}\br{d(u,f)\sum_{B\ni u}y_{B}^{f}}& \\
\mbox{s.t.} & \sum_{B\ni u}\sum_{f \in \mathcal{F}}y_{B}^{f}\geq1 & \forall_{u \in C}\nonumber\\
 & y_{B}^{f} \ge 0 & \forall_{f \in \mathcal{F}}\forall_{B\subseteq C}\nonumber 
\end{align*}

\begin{align*}
\max & \sum_{u\in C}\alpha_{u} &   \\
\mbox{s.t.} & \sum_{u\in B}\alpha_{u}\leq c(f)\cdot \gpi(B) + \sum_{u\in B}d(u,f)& \forall_{f \in \mathcal{F}}\forall_{B\subseteq C}\nonumber\\
 & \alpha_{u}\geq0 & \forall_{u \in C}\nonumber 
\end{align*}

The dual admits a separating oracle alike in Lemma \ref{lem:LPsolution} with the submodular
function to be minimized $h_f(B) = c(f)\cdot \gpi(B) + \sum_{u\in B}d(u,f) - \sum_{u\in B}\alpha_{u}$.
The rounding procedure is analogous to that from Lemma \ref{lem:round}.
\end{proof}
\begin{corollary}\label{cor:nmfl}
\US\NMFL in the scenario model and in the independent activation model admits a Las-Vegas polynomial-time $O(\log |C|)$ approximation algorithm w.r.t. the optimal universal solution.
\end{corollary}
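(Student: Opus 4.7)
My plan is to mimic the configuration-LP framework developed for \US\SC in Section~\ref{sec:sc}, adding extra terms to the objective to account for the connection costs $d(u,f)$. Concretely, I would use variables $y_B^f$ with the intended meaning ``facility $f$ serves exactly the subset $B$ of clients,'' so that opening $f$ contributes $c(f)\cdot g_\pi(B)$ in expectation and the connection cost contributes $\sum_{u\in B}d(u,f)$ deterministically (since in any realized scenario $X$ only the clients $u\in B\cap X$ are actually connected, but summed against $\pi$ the expected connection cost from $f$ equals $\sum_{u\in B}g_\pi(\{u\})\, d(u,f)$, or in the formulation written above, $\sum_u d(u,f)\sum_{B\ni u}y_B^f$). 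After arguing (as in the \US\SC lemma) that an optimal integral solution can be taken to have disjoint supports $\{B_f\}$ giving a valid mapping $\phi$, the IP is equivalent to \US\NMFL, and its LP relaxation lower-bounds the optimal universal cost.

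The first technical step is to solve the LP via the dual, which has polynomially many variables $\alpha_u$ and exponentially many constraints. Exactly as in Lemma~\ref{lem:LPsolution}, given a tentative $\{\alpha_u\}$ and a facility $f$, I need to test whether
\[
h_f(B) \;=\; c(f)\cdot g_\pi(B) + \sum_{u\in B}\bigl(d(u,f)-\alpha_u\bigr)
\]
can be made negative over $B\subseteq C$. The key observation is that $h_f$ is still submodular: $g_\pi$ is submodular by Lemma~\ref{lem:sub}, $c(f)\ge 0$, and the remaining term is linear. Hence Theorem~\ref{th:sub-poly} yields a polynomial-time separation oracle, and the ellipsoid method solves the LP in polynomial time.

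Once the fractional optimum is in hand, I would run the same randomized rounding as in Lemma~\ref{lem:round}: for each $f$, sample $q = 2\ln|C|$ sets $B_1^f,\dots,B_q^f$ from the distribution $\{y_B^f\}_{B}$, let $B^f = \bigcup_i B_i^f$, and resolve collisions by removing duplicates (this only shrinks sets, hence cannot increase $g_\pi$ by monotonicity, and it strictly reduces the linear connection-cost contribution). The expected opening cost is bounded by $2\ln|C|$ times the fractional opening cost via subadditivity of $g_\pi$, exactly as before. The expected connection cost similarly telescopes: sampling once picks each $(u,f)$ incidence with probability $\sum_{B\ni u}y_B^f$, so over $q$ rounds the expected contribution to $\sum_{f,u\in B^f}d(u,f)$ is $q$ times the fractional connection cost (and de-duplication only decreases it). Markov plus the standard $\prod_{B\ni u}\prod_{S\supseteq B}(1-y_B^S) \le 1/e$ coverage estimate give failure probability $\le 1/|C| + 1/2$, which converts the Monte Carlo scheme into an expected polynomial-time Las Vegas $O(\log|C|)$ approximation.

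I do not expect serious obstacles here: the opening-cost analysis is literally that of Lemma~\ref{lem:round}, and the connection-cost term is linear in $y$, so it behaves even more smoothly under both randomized rounding and duplicate removal. The only place to be careful is verifying that removing a duplicated element $u$ from $B^{S'}$ does not increase the objective --- for the opening part this uses monotonicity of $g_\pi$, and for the connection part this uses non-negativity of $d(u,f)$ --- and that the separation oracle's submodular minimization still runs in polynomial time when $g_\pi$ itself is polynomial-time evaluable, which is exactly the hypothesis of the theorem.
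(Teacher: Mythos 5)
Your proposal is correct and follows essentially the same route as the paper: the paper's proof of Theorem~\ref{thm:nmfl} writes down exactly this configuration LP with the additive connection-cost term, solves the dual via submodular minimization of $h_f(B)=c(f)\cdot g_\pi(B)+\sum_{u\in B}d(u,f)-\sum_{u\in B}\alpha_u$, and invokes the rounding of Lemma~\ref{lem:round}; the corollary then follows because $g_\pi$ is polynomial-time evaluable in both the scenario and independent activation models. Your additional remarks on why the linear connection term survives de-duplication and sampling are consistent with (and slightly more explicit than) the paper's ``analogous to Lemma~\ref{lem:round}'' shorthand.
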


\section{Constrained Set Multicover}
\label{sec:csm}

The \CSM problem is the generalization of \SC where each element $u$ of the universe must be covered by a given positive integer number $r(u)$ of distinct sets. 
In the universal stochastic setting a solution is a mapping
$\phi: U \rightarrow 2^\fs$ such that $|\phi(u)| = r(u)$.
We can express its expected cost as
$\mathbb{E}_{X \sim\pi}\left[\sum_{S\in \phi(X)} c(S)\right]$, where $\phi(X)=\cup_{u\in X}\phi(u)$.

We will adapt the greedy approach for the deterministic \CSM (see Section 13.2.1 in \cite{vazirani}).
For every pair $(B, S),\, S \in \fs,\, B \subseteq S$ we define its \emph{cost-effectiveness}
\begin{eqnarray*}
\hat{c}(B, S)=\frac{c(S)\cdot g_\pi(B)}{|B|}.
\end{eqnarray*}

We are going to maintain a family of sets $(R_{s})_{S\in\fs}$ initiated
with identity $R_{S}=S$.
The set $R_S$ represents those elements that could be covered by $S$ in the future.
As long as there are some not sufficiently covered
elements we choose a pair $(B, S),\, S \in \fs,\, B \subseteq R_S$
minimizing $\hat{c}(B, S)$.
When pair $(B,S)$ is picked, we update $R_S:=R_S \sm B$.
Moreover, when an element $u$ gets covered $r(u)$ times, we erase it from all
$R_{S}$ sets.

Observe that for each $u\in U$ and $S\in\fs$ there is at most
one pair $(B,S)$ that covers $u$ in the solution.
Therefore there is a family of $r(u)$ distinct sets $S$ with this property
and we return it as $\phi(u)$.
If the solution contains multiple pairs $(B_i,S)$ for a single $S$,
then by subadditivity $\gpi(\bigcup_{i=1}^j B_i) \le \sum_{i=1}^j \gpi(B_i)$
and we can replace these sets with their union.

At first we argue that this routine can be implemented in polynomial time.
The only non-trivial part is minimizing the cost-effectiveness.

\begin{lemma}
If we can evaluate $g_{\pi}$ in polynomial time, then we can find a feasible pair minimizing $\hat{c}(B, S)$ in polynomial time.
\end{lemma}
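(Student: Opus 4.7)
The plan is to solve, for each fixed $S\in\fs$ independently, the single-set subproblem
\[
\min_{\emptyset\ne B\subseteq R_S}\hat{c}(B,S) \;=\; \min_{\emptyset\ne B\subseteq R_S}\frac{c(S)\,g_\pi(B)}{|B|},
\]
and then return the best pair across the (polynomially many) choices of $S$. So the lemma reduces to showing that this submodular-to-modular ratio minimization can be carried out in polynomial time.

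For a real parameter $\lambda\ge 0$, introduce $h_{S,\lambda}(B):= c(S)\,g_\pi(B) - \lambda|B|$ for $B\subseteq R_S$. Since $g_\pi$ is submodular by Lemma~\ref{lem:sub}, $c(S)\ge 0$, and $-|B|$ is modular, $h_{S,\lambda}$ is submodular for every $\lambda$; combined with the polynomial-time evaluability of $g_\pi$, Theorem~\ref{th:sub-poly} minimizes it over $B\subseteq R_S$ in polynomial time. Writing $\alpha^\star := \min_{\emptyset\ne B\subseteq R_S} \hat{c}(B,S)$, the usual fractional-programming dictionary gives $\min_B h_{S,\lambda}(B)\ge 0$ iff $\lambda\le\alpha^\star$, and moreover any set realising $\alpha^\star$ is a minimizer of $h_{S,\alpha^\star}$. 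I would then run Dinkelbach's iteration: initialize $B_0:=R_S$ and $\lambda_0:=\hat{c}(R_S,S)$; at step $i$, set $B_{i+1}:=\arg\min_{B\subseteq R_S}h_{S,\lambda_i}(B)$, and if $h_{S,\lambda_i}(B_{i+1})=0$ return the non-empty minimizer, otherwise update $\lambda_{i+1}:=\hat{c}(B_{i+1},S)<\lambda_i$ and iterate.

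The main obstacle is bounding the number of Dinkelbach iterations. The standard way around this is to observe that the parametric family $\{h_{S,\lambda}\}_\lambda$ has minimizers forming a chain $\emptyset=B^{(0)}\subsetneq B^{(1)}\subsetneq\cdots\subsetneq B^{(t)}=R_S$ with $t\le|R_S|$: an exchange argument using the submodularity $h_{S,\lambda}(A)+h_{S,\lambda}(C)\ge h_{S,\lambda}(A\cap C)+h_{S,\lambda}(A\cup C)$, together with the linearity of $h_{S,\lambda}$ in $\lambda$, shows that an optimizer for smaller $\lambda$ can always be chosen inside an optimizer for larger $\lambda$. Dinkelbach then visits a strictly decreasing sub-sequence of this chain and must terminate within $O(|R_S|)$ iterations. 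Alternatively, one may bypass Dinkelbach entirely by running a parametric submodular-minimization procedure that outputs the full chain of breakpoint optima in polynomial time, and then evaluate $\hat{c}(B^{(j)},S)$ for each $j$ and take the minimum.
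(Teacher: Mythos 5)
Your proof is correct, and it takes a genuinely different route than the paper does. After the same reduction to minimizing $\frac{c(S)\,g_\pi(B)}{|B|}$ over $\emptyset\neq B\subseteq R_S$ for each fixed $S$, the paper handles the ratio minimization (in its Lemma~\ref{lem:sub-divided}) by \emph{binary search} on the optimal ratio value: it observes that any two attainable values of $\frac{f(X)}{|X|}$ differ by at least $\frac{1}{n^2}$ when $f$ is integer-valued in $[0,M]$, so $O(\log(Mn))$ membership tests of the form ``is $\min_X\bigl(f(X)-c|X|\bigr)$ negative?'' suffice, each a single submodular-function minimization. You instead run \emph{Dinkelbach's iteration} on the same parametric objective $h_{S,\lambda}(B)=c(S)\,g_\pi(B)-\lambda|B|$, and bound the iteration count combinatorially via the nesting property of parametric submodular minimizers: for $\lambda<\lambda'$ any minimizer of $h_{S,\lambda}$ is contained in any minimizer of $h_{S,\lambda'}$ (the exchange argument you sketch), so the Dinkelbach iterates $B_1\supsetneq B_2\supsetneq\cdots$ form a strictly shrinking chain and terminate within $|R_S|$ steps. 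Both rely on the identical submodular-minimization oracle for $f(B)-\lambda|B|$; the paper's version is more self-contained and elementary (no parametric-chain theorem needed) but requires the explicit precision argument and an integer-valued $f$ for the outer loop, while yours gets a purely cardinality-based $O(n)$ iteration bound that is agnostic to the bit-complexity of the $g_\pi$ values (though the inner submodular-minimization call still needs them to be reasonably encoded to invoke Theorem~\ref{th:sub-poly}). One small point to note explicitly: when $h_{S,\lambda_i}$ reaches minimum value $0$, the empty set is among the minimizers; your ``return the non-empty minimizer'' step is correct but is easiest to implement as ``return the previous iterate $B_i$,'' which is guaranteed non-empty and is itself a minimizer of $h_{S,\lambda_i}$ at that point.
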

\begin{proof}
We iterate through all $S\in\fs$ and for each of them we want to minimize
the function $h_S(B) = \frac{c(S)\cdot g_\pi(B)}{|B|}$ defined over $R_S$.
As the nominator is a submodular function,  the problem reduces to Lemma \ref{lem:sub-divided}.
\end{proof}

\begin{lemma}\label{lem:sub-divided}
Let $f$ be a submodular function over universe $U$ with integer values in $[0,M]$ and $f(\emptyset)\ge 0$.
The minimum of $\hat{f}(X)=\frac{f(X)}{|X|}$ over $\emptyset\neq X\subseteq U$  can be
found in polynomial time in $|U|$ and in $\log_2 M$.
\end{lemma}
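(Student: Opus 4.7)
The plan is to solve this fractional optimization problem by parametric search, reducing it in each round to the submodular minimization routine of Theorem~\ref{th:sub-poly}.

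Let $\lambda^{*} = \min_{\emptyset \neq X \subseteq U} f(X)/|X|$. The key observation is that for every fixed $\lambda \ge 0$, the function $F_\lambda(X) := f(X) - \lambda|X|$ is submodular, because $f$ is submodular and $X \mapsto \lambda|X|$ is modular. Moreover $\lambda^{*} \le \lambda$ iff $\min_{\emptyset \neq X \subseteq U} F_\lambda(X) \le 0$. So we would like to find the smallest $\lambda$ for which the submodular function $F_\lambda$ admits a nonempty minimizer with nonpositive value.

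First I would argue that $\lambda^{*}$ takes one of only polynomially many candidate values when expressed as a rational number: it must be of the form $p/q$ with $p \in \{0,1,\dots,M\}$ and $q \in \{1,\dots,|U|\}$. Therefore any two distinct candidate values differ by at least $1/|U|^{2}$, since
\[
\left|\frac{p_1}{q_1} - \frac{p_2}{q_2}\right| \;=\; \frac{|p_1 q_2 - p_2 q_1|}{q_1 q_2} \;\ge\; \frac{1}{|U|^{2}}
\]
whenever the two ratios are unequal. This means that a binary search on $\lambda$ over $[0,M]$ with precision $1/(2|U|^{2})$ suffices to pinpoint $\lambda^{*}$ exactly (after snapping the final value to the nearest admissible rational with denominator at most $|U|$, which can be done by continued fractions). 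The search takes $O(\log(M\cdot|U|^{2})) = O(\log M + \log|U|)$ iterations.

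In each iteration I need to decide whether $\min_{\emptyset \neq X} F_\lambda(X) \le 0$. To apply Theorem~\ref{th:sub-poly}, which handles integer-valued submodular functions, I scale: for $\lambda = a/b$ with $b \le 2|U|^{2}$, define $\tilde F(X) = b\cdot f(X) - a\cdot |X|$, which is submodular, integer-valued, and bounded in magnitude by $O(|U|^{2} M)$, hence encodable in $O(\log M + \log|U|)$ bits. Minimizing $\tilde F$ over all $X \subseteq U$ runs in time polynomial in $|U|$ and $\log M$. To exclude $X = \emptyset$, I either observe that when $\lambda > 0$ any negative value of $F_\lambda$ is attained on a nonempty set (since $F_\lambda(\emptyset) = f(\emptyset) \ge 0$), or, to also detect ratio $0$, run $|U|$ further minimizations with one element forced into the set. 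This handles the only edge case.

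The main obstacle is the usual subtlety of fractional programming with an oracle: one has to be careful to keep the numbers appearing in intermediate submodular minimizations bounded in bit length, which is why the explicit denominator bound $|U|^{2}$ above is important. Once this is in place, returning the minimizer found in the last successful iteration yields a nonempty $X^{*}$ with $f(X^{*})/|X^{*}| = \lambda^{*}$, completing the proof.
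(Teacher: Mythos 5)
Your proof is correct and follows essentially the same approach as the paper: both exploit that distinct values of $\hat f$ are separated by at least $1/|U|^{2}$ (the paper states this directly; you derive it from the rational form $p/q$), both perform $O(\log M + \log |U|)$ rounds of binary search on the threshold $\lambda$, and both reduce each decision to a submodular minimization of $f(X) - \lambda|X|$ via Theorem~\ref{th:sub-poly}, noting that $f(\emptyset) \ge 0$ makes the empty set harmless. The extra technicalities you spell out --- scaling to integer coefficients $b f(X) - a|X|$, the continued-fraction snap, and the $|U|$ forced-element minimizations to recover a witness --- are implicit in the paper's remark that the scaled function has encoding length $O(\log M + \log |U|)$, so the two proofs are the same argument at different levels of detail.
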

\begin{proof}
We begin with an observation that if two values of $\hat{f}$ differ, then they differ
by at least $\frac{1}{n^2}$.
To see this, consider $1 \le a_1,a_2 \le M,\, 1 \le b_1,b_2 \le n$, such that $\frac{a_1}{b_1} < \frac{a_2}{b_2}$.
Equivalently we can write $a_1b_2 < a_2b_1$.
As the last equations concerns integers we can observe that $a_1b_2 + \frac{b_1b_2}{n^2} \le a_2b_1$,
what implies $\frac{a_1}{b_1} + \frac{1}{n^2} \le \frac{a_2}{b_2}$.
This proves the observation.

We will take advantage of the binary search.
It suffices to check if there is a non-empty set satisfying $\frac{f(X)}{|X|} < c$ for a constant $c$.
It follows from the observation above, that we require at most $\log_2(Mn^2)$ iterations to converge
and we have analogous bound on the binary length of $c$.
To answer the question for a given $c$, we may equivalently ask whether
the minimum of the function $\tilde{f}_c(X) = f(X) - c|X|$ over $X\subseteq U$ is negative
(note that $\tilde{f}_c(\emptyset) = f(\emptyset) \ge 0$ so it does not influence the answer).
The function $\tilde{f}_c$ is submodular and the encoding length of the function value is $O(\log M +\log n)$.
so the question can be answered in polynomial time due to Theorem \ref{th:sub-poly}.
\end{proof}

\begin{theorem}\label{thm:csm}
\US \CSM admits an $H_n$-approximation w.r.t. the universal optimal solution when $g_{\pi}$ can be evaluated in polynomial time.
\end{theorem}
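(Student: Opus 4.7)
The plan is to mirror the classical $H_n$ dual-fitting/charging analysis of the greedy algorithm for weighted set cover, with $g_\pi$ playing the role of a submodular weighting in the universal stochastic setting. First, I would bound the expected cost of the output mapping by $\mathrm{GREEDY} := \sum_{t=1}^{T} c(S_t) g_\pi(B_t)$, where $(B_t, S_t)$ are the pairs selected in order. This relies on the subadditivity of $g_\pi$ (a consequence of submodularity together with $g_\pi(\emptyset)=0$) to justify merging multiple picks of the same set as in the algorithm description, and on monotonicity for the residual updates. Writing $\alpha_t := c(S_t) g_\pi(B_t)/|B_t|$ for the cost-effectiveness of the $t$-th pick and $\alpha_{t(u,k)}$ for the price of $u$'s $k$-th covering, we get $\mathrm{GREEDY} = \sum_u \sum_{k=1}^{r(u)} \alpha_{t(u,k)}$.

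The heart of the argument is to upper bound $\alpha_t$ in terms of $OPT$ and the remaining requirement. Fix an optimal universal mapping $\phi^*$ and set $B^*_S := (\phi^*)^{-1}(S)$, so $OPT = \sum_S c(S) g_\pi(B^*_S)$. Let $r_{t-1}(u)$ be the remaining requirement of $u$ at the start of step $t$, $U_t$ the set of elements still needing coverings, and $\mathrm{Rem}_{t-1} := \sum_u r_{t-1}(u)$. The collection $\{(B^*_S \cap R_{S,t-1} \cap U_t,\, S)\}_{S\in\fs}$ consists of pairs that are feasible candidates for greedy at step $t$. By monotonicity of $g_\pi$, its total cost does not exceed $OPT$, and a counting argument shows that its total size, summed over $S$, is at least $\mathrm{Rem}_{t-1}$: indeed, each $u\in U_t$ is covered by $r(u)$ distinct sets in $\phi^*$, and since greedy assigns $u$ to any given set at most once, at most $r(u)-r_{t-1}(u)$ of those sets can already have removed $u$ from $R_S$. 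Averaging produces a pair of cost-effectiveness at most $OPT/\mathrm{Rem}_{t-1}$, and greedy's pick is no worse, so $\alpha_t \le OPT/\mathrm{Rem}_{t-1}$.

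Combining via the standard harmonic summation, since $\mathrm{Rem}_{t-1} - \mathrm{Rem}_t = |B_t|$ we get
\begin{align*}
\mathrm{GREEDY} \;=\; \sum_t |B_t|\,\alpha_t \;\le\; OPT\sum_t \frac{|B_t|}{\mathrm{Rem}_{t-1}} \;\le\; OPT\cdot H_{\mathrm{Rem}_0},
\end{align*}
which yields the $H_n$ approximation. The sharpening to exactly $H_n$ in the general multicover case uses the fact that the sequence $\alpha_t$ is non-decreasing in $t$: one may charge $\sum_{k} \alpha_{t(u,k)}$ to the largest price $\alpha_{t(u,r(u))}$ and plug this into the dual of the multicover analogue of (CONF-LP-SC), giving a feasible dual of value $\mathrm{GREEDY}/H_n$.

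I expect the main obstacle to be the counting in the key second step: one must carefully verify that greedy's earlier assignments do not destroy too many of the sets $\{S : u\in B^*_S\}$ for an element $u$ that is still unsatisfied. The argument hinges on combining the ``distinct sets'' requirement of CSM with the invariant that $u$ is removed from $R_S$ only at the moment it is assigned to $S$, which together guarantee at least $r_{t-1}(u)$ of $\phi^*$'s sets remain usable for $u$ at time $t$. Beyond this, the proof is a routine adaptation of the greedy-set-cover analysis, and only uses the submodularity, monotonicity, and polynomial evaluability of $g_\pi$ established in Lemma~\ref{lem:sub} and the preceding discussion.
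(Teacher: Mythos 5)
Your direct greedy analysis (bounding the cost-effectiveness $\alpha_t$ of the $t$-th pick by $OPT/\mathrm{Rem}_{t-1}$ and telescoping the harmonic sum) is a sound alternative to the paper's dual-fitting, and your counting argument for why the residual pairs $(B^*_S\cap R_{S,t-1}\cap U_t,\,S)$ have total size at least $\mathrm{Rem}_{t-1}$ is correct: an element $u$ still active at time $t$ has been assigned to only $r(u)-r_{t-1}(u)$ distinct sets so far, so at least $r_{t-1}(u)$ of its $r(u)$ sets under $\phi^*$ still have $u\in R_S$. However, as you anticipate, this argument by itself only yields $H_{\mathrm{Rem}_0}=H_{\sum_u r(u)}$, which strictly exceeds $H_n$ whenever some $r(u)>1$, and so does not prove the theorem's claimed approximation factor.

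The gap is in the ``sharpening to $H_n$'' step. You say one should ``charge $\sum_k\alpha_{t(u,k)}$ to the largest price $\alpha_{t(u,r(u))}$ and plug this into the dual of the multicover analogue of (CONF-LP-SC),'' but this omits two things that are actually the heart of the proof. First, the LP must carry the packing constraint $\sum_{B\subseteq S}y^S_B\le 1$, whose dual variable $z_S$ is what absorbs the slack $\sum_u\bigl(price(u,r(u))-price(u,j^S_u)\bigr)$; without that variable, charging everything to the largest price leaves a dual constraint $\sum_{u\in B}\alpha_u\le c(S)\gpi(B)H_n$ that simply fails. Second, and more importantly, the feasibility verification reduces to showing $\sum_{u\in B}price(u,j^S_u)\le c(S)\gpi(B)\,H_{|B|}$ for each fixed $(S,B)$; this is the step that replaces a sum of length $\sum_u r(u)$ by one of length $|B|\le n$, and it requires ordering the elements of $B$ by the time they were removed from $R_S$ and invoking monotonicity of $\gpi$ to bound $price(u_i,j^S_{u_i})\le c(S)\gpi(B)/(|B|-i+1)$. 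None of this appears in your sketch, so the claimed $H_n$ bound is not actually established. A smaller omitted point: you also need to argue that the greedy step of minimizing $\hat c(B,S)=c(S)\gpi(B)/|B|$ over $B\subseteq R_S$ can be implemented in polynomial time, which in the paper is done via binary search reducing to minimization of the submodular function $g_\pi(B)-c|B|$.
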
 
\begin{proof}
Consider the following linear relaxation of the problem:
\begin{align}
\min & \sum_{S \in\fs}c(S)\sum_{B\subseteq S}y_{B}^{S}\cdot \gpi(B) & (\text{CONF-LP-CSM})\nonumber\\
\mbox{s.t.} & \sum_{B\subseteq S:u\in B}y_{B}^{S}\ge r(u) & \forall_{u \in U}\nonumber\\
 & \sum_{B\subseteq S}y_{B}^{S}\le1 & \forall_{S\in\fs}\nonumber\\
 & y_{B}^{S}\ge0 & \forall_{S\in\fs,\,B\subseteq S}\nonumber.
\end{align}
Here the variables $y^S_B$ have the usual interpretation. Note that the second constraint is now needed to avoid that a set is used to cover multiple times the same element. The reader might easily check that integral solutions to (CONF-LP-CSM) are in one-to-one correspondence with feasible solutions to the original problem.

Next consider the dual of (CONF-LP-CSM):
\begin{align}
\max & \sum_{e}r(u)w_{e}-\sum_{S}z_{S} & (\text{DP-CSM})\nonumber\\
\mbox{s.t.} & \sum_{e\in B}w_{e}-z_{S}\le c(S)\cdot\gpi(B) & \forall_{S\in\fs,\,B\subseteq S}\nonumber\\
 & w_{u}\ge0,z_{S}\ge0 & \forall_{u\in U}\forall_{S\in \fs}\nonumber
\end{align}

It is convenient to imagine each element $u$ as a set of copies $u(1),\ldots,u(r(u))$ which are covered by distinct sets, where $u(i)$ is the $i$-th copy of $u$ to be covered by the greedy algorithm. Let us define $price(u,i)$ to be the cost-effectiveness of the pair $(B,S)$ that
covered $u(i)$.
For $u \in U,\, S \in \fs$ we define $j_u^{S}$ to be the number
of the copy of $u$ covered by a pair $(B,S)$ (for some $B \subseteq S$) or $r(u)$ if $u$ has not
been covered by any such pair. Recall that there can be at most one pair satisfying this condition.
We also define
\begin{eqnarray*}
\alpha_{u} & = & price(u,r(u))\\
\beta_{S} & = & \sum_{u \in U}price(u,r(u))-price(u,j_{u}^{S})
\end{eqnarray*}

We have $price(u,i)\le price(u,i+1)$ so $price(u,j_{u}^{S}) \le price(u,r(u))$
and $\beta_{S}\ge0$.
Observe that the total cost we pay in the algorithm equals
\begin{eqnarray*}
\sum_{u\in U}\sum_{i=1}^{r(u)}price(u,i)=\sum_{u\in U}r(u)\alpha_{u}-\sum_{S}\beta_{S},
\end{eqnarray*}
that is the objective function of the dual linear program (DP-CSM) for
variables $(\alpha, \beta)$.
In order to show that the obtained solution is a $H_n$-approximation,
we need to prove that $\left(\frac{\alpha}{H_{n}}, \frac{\beta}{H_{n}}\right)$ is a feasible
solution to (DP-CSM),
which is equivalent to proving that $\forall_{S\in\fs,\,B\subseteq S}$ it holds
\begin{eqnarray*}
\sum_{u\in B}price(u,j_{u}^{S}) \le c(S)\cdot \gpi(B)\cdot H_{n}.
\end{eqnarray*}

We fix a pair $S\in\fs,\,B\subseteq S$.
The summand $price(u,j_{u}^{S})$ is the cost-effectiveness of the
set covering $u$ in the moment it got removed from $R_{S}$.
Let us order the elements of $B$ in the order they were removed from $R_{S}$:
$u_{1},u_{2},\dots,u_{k}$.
Observe that $u_{i}$ could be covered at that moment by $(B_i, S)$ where
$B_{i}={u_{i},\dots,u_{k}}$, so
\begin{eqnarray*}
& price(u_i,j_{e_i}^S) & \le \frac{c(S)\cdot\gpi(B_{i})}{|B_{i}|} \le \frac{c(S)\cdot\gpi(B)}{k-i+1}, \\
\sum_{u\in B} & price(u,j_{u}^{S}) & \le c(S)\cdot\gpi(B)\cdot \sum_{i=1}^{|B|} \frac{1}{k-i+1} \le c(S)\cdot\gpi(B)\cdot H_n.
\end{eqnarray*}

This shows that $(\frac{\alpha}{H_{n}},\frac{\beta}{H_{n}})$
is a feasible solution to linear program (DP-CSM).
Therefore the cost of the solution is no greater than $H_n$ times the optimum
of (DP-CSM) which equals the optimum of (CONF-LP-CSM).
The claim follows.
\end{proof}
\begin{corollary}\label{cor:csm}
\US \CSM admits an $H_n$-approximation w.r.t. the universal optimal solution in the scenario model and in the independent activation model.
\end{corollary}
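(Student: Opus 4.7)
The plan is to reduce the corollary to Theorem \ref{thm:csm} by verifying its sole non-trivial hypothesis, namely that the function $g_\pi$ can be evaluated in polynomial time, in each of the two probability models of interest. Once this is established, the $H_n$-approximation algorithm of Theorem \ref{thm:csm} applies verbatim, together with the polynomial-time guarantee on the greedy selection step provided by Lemma \ref{lem:sub-divided}.

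For the scenario model, the distribution $\pi$ is presented explicitly as a list of scenarios $X_1, \dots, X_N \subseteq U$ together with their probabilities $\pi(X_i)$, where $N$ is bounded by the input size. Using the identity
\begin{equation*}
g_\pi(B) \;=\; \sum_{i=1}^{N} \pi(X_i) \cdot \ind{X_i \cap B \neq \emptyset},
\end{equation*}
which I would justify by a single-line conditioning argument on the realized scenario, one can evaluate $g_\pi(B)$ in time $O(N \cdot |U|)$ by scanning each scenario and checking intersection with $B$. This is clearly polynomial.

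For the independent activation model, the event $B \cap X = \emptyset$ factorizes over the elements of $B$ since the samplings are independent, so
\begin{equation*}
g_\pi(B) \;=\; 1 - \prod_{u \in B} (1 - p_u),
\end{equation*}
which can be evaluated in $O(|B|)$ arithmetic operations given the probabilities $p_u$.

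I do not foresee a real obstacle here: both formulas are standard and directly verifiable from the definition of $g_\pi$, and the corollary then follows as an immediate specialization of Theorem \ref{thm:csm}. The only thing worth being slightly careful about is that evaluating $g_\pi$ for the greedy procedure takes place inside the binary-search-based submodular minimization of Lemma \ref{lem:sub-divided}, but since $g_\pi$ is submodular (Lemma \ref{lem:sub}) with values in $[0,1]$, the overall running time stays polynomial in the natural input parameters of each model.
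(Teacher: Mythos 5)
Your proposal is correct and matches the paper's intended reasoning: the corollary is stated as an immediate consequence of Theorem~\ref{thm:csm}, justified by the observation (made explicitly earlier in Section~\ref{sec:sc:scenario} for the set-cover case) that $g_\pi$ is polynomial-time evaluable in both the scenario model, via $g_\pi(B) = \sum_i \pi(X_i)\cdot \ind{X_i \cap B \neq \emptyset}$, and the independent activation model, via $g_\pi(B) = 1 - \prod_{u\in B}(1-p_u)$. Your additional remark about the binary search in Lemma~\ref{lem:sub-divided} is accurate but not something the paper dwells on either.
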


\section{Multicut: proofs of Lemmas \ref{lem:mc1} and \ref{lem:mc2}}
\label{app:mc}

\begin{proof}[Proof of Lemma \ref{lem:mc1}]

We want to transform the following LP
\begin{align}
\min & \sum_{e\in E}c_e\sum_{B\subseteq C}y_{B}^{e}\cdot  \gpi\br B &  & \label{eq:mc-lp8} \\
\mbox{s.t.} & \sum_{e\in P}\sum_{B\ni c}y_{B}^{e}\geq1 &  & \forall_{c\in C}\forall_{P\in \mathcal{P}_c}. \nonumber \\
 & y_{B}^{e}\geq0. &  & \forall_{e\in E}\forall_{B\subseteq C}. \nonumber
\end{align}
\noindent Similarly to Section \ref{sec:fl} let us inject the bound from~\cite{KM00}:
\[
 \gpi\br B=1-\prod_{j\in B}(1-p_{j})\geq\br{1-\frac{1}{e}}\min\br{1,\sum_{j\in B}p_{j}}
\]

\noindent into the linear program (\ref{eq:mc-lp8}).
Here is what we obtain:
\begin{align}
\min & \sum_{e\in E}c_e\sum_{B\subseteq C}y_{B}^{e}\cdot \min\br{1,\sum_{j\in B}p_{j}} &  & \label{eq:mc-lp2} \\
\mbox{s.t.} & \sum_{e\in P}\sum_{B\ni c}y_{B}^{e}\geq1 &  & \forall_{c\in C}\forall_{P\in \mathcal{P}_c}. \nonumber \\
 & y_{B}^{e}\geq0. &  & \forall_{e\in E}\forall_{B\subseteq C}. \nonumber
\end{align}

 Observe that the value of the optimal solution to (\ref{eq:mc-lp2}) is at most $\frac{e}{e-1}$
times larger than the optimum of (\ref{eq:mc-lp8}).
On the other hand every integral solution to (\ref{eq:mc-lp2}) translates
into an integral solution to (\ref{eq:mc-lp8}) of the same cost
because $\min\br{1,\sum_{j\in B}p_{j}} \ge \gpi(B)$.

Let $Big$ be a collection of all the sets $B$ such that $\sum_{j\in B}p_j>1$, and 
let $Sml$ be a collection of all the sets $B$ such that $\sum_{j\in B}p_j\leq 1$.
We can rewrite the objective function of (\ref{eq:mc-lp2}) as 
\begin{align}
 & \sum_{e\in E}c_e\sum_{B\subseteq C}y_{B}^{e}\cdot \min\br{1,\sum_{j\in B}p_{j}} \nonumber \\
= & \sum_{e\in E}c_e\br{\sum_{B\in Big}y_{B}^{e} + \sum_{B\in Sml}y_{B}^{e}\br{\sum_{j\in B}p_{j}}} \nonumber \\
= & \sum_{e\in E}c_e\br{\sum_{B\in Big}y_{B}^{e}} + \sum_{e\in E}c_e\sum_{c\in C}p_c\br{\sum_{B\in Sml: c\in B}y_{B}^{e}} \label{eq:mc-lp3}
\end{align}
\noindent Let $x_{c}^{e}=\sum_{B\in Big: c\in B}y_{B}^{e}$
and $\bar{x}_{c}^{e}=\sum_{B\in Sml: c\in B}y_{B}^{e}$. Also, we have
an obvious inequality:
\begin{align*}
\forall c\in C:\qquad\sum_{B\in Big}y_{B}^{e} & \geq\sum_{B\in Big: c\in B}y_{B}^{e}=x_{c}^{e},
\end{align*}
and so~(\ref{eq:mc-lp3}) is greater than
\begin{align*}
\sum_{e\in E}c_e\cdot\max_{c\in C}\br{x_{c}^{e}} + \sum_{e\in E}c_e\sum_{c\in C}p_c\cdot\bar{x}_{c}^{e}.
\end{align*}

\noindent The condition $\sum_{e\in P}\sum_{B\ni c}y_{B}^{e}\geq 1$ translates
into $\sum_{e\in P}(x_c^{e}+\bar{x}_{c}^{e})\geq 1$.
Observe that for each edge $e$ we can replace all variables $x_c^e$
with $x^e = \max_{c\in C}\br{x_{c}^{e}}$: the objective does not change
and the conditions remain satisfied.
Finally, we obtain a new LP with the value of the optimal solution at most $\frac{e}{e-1}$
times the optimum of (\ref{eq:mc-lp8}), which is what we have claimed.
\begin{align}
\min & \sum_{e\in E}c_{e}\cdot x^e + \sum_{e\in E}c_e\sum_{c\in B}p_c\cdot\bar{x}_{c}^{e} &  & \label{eq:mc-lp5} \\
\mbox{s.t.} & \sum_{e\in P}(x^{e}+\bar{x}_{c}^{e})\geq 1 &  & \forall_{c\in C}\forall_{P\in \mathcal{P}_c} \nonumber \\
 & x^{e},\bar{x}_{c}^{e}\geq0 &  & \forall_{e\in E}\forall_{c\in C}.\nonumber
\end{align}
\end{proof}
\begin{proof}[Proof of Lemma \ref{lem:mc2}]

Let us consider a restricted version of \US\MC with an additional assumption
that the graph $G$ is a tree.
In this case each family $\mathcal{P}_c$ consists of only one path -- let us denote it by $P_c$.
We will take advantage of that fact in order to round the linear program (\ref{eq:mc-lp5}).
At first, note that it has a polynomial number of variables and constraints
so it can be solved in polynomial time.
Define $\br{x^e,\bar{x}_{c}^{e}}_{c\in C,e\in E}$ to be
the optimal solution and denote its value as $OPT_{LP}$. We split $C$ into two
groups:
\[
C_{\bigs}:=\setst{c\in C}{\sum_{e\in P_c}x^e \geq\frac{2}{3}}\mbox{ and }C_{\smalls}=\setst{c\in C}{\sum_{e \in P_c}\bar{x}_{c}^{e}>\frac{1}{3}}.
\]

Observe that $\br{\frac{3}{2}\cdot x^e}_{e\in E}$ is a feasible solution to the linear relaxation
of the standard \MC problem on trees with the cut constraints given by $C_{\bigs}$.
This problem admits a 2-approximation with respect to the LP optimum~\cite{garg1997primal}.
Therefore, we can construct an integral solution $\br{z^e}_{e\in E}$ for (\ref{eq:mc-lp5}) satisfying constraints from $C_{\bigs}$
of cost not exceeding $2\cdot\frac{3}{2}\sum_{e\in E}c_{e}\cdot x^e$.

On the other hand $\br{3\cdot\bar{x}_{c}^{e}}_{c\in C_{\smalls},e\in E}$ forms
a feasible solution to the following LP.
\begin{align*}
\min & \sum_{e\in E}c_e\sum_{c\in C}p_c\cdot\bar{x}_{c}^{e} &  & \\
\mbox{s.t.} & \sum_{e\in P_c}\bar{x}_{c}^{e}\geq1 &  & \forall_{c\in C_{\smalls}}\\
 & \bar{x}_{c}^{e}\geq0 &  & \forall_{e\in E}\,\forall_{c\in C_{\smalls}}.
\end{align*}

 It is easy to see that the optimal solution is integral and can be obtained
by assigning each client $c$ to the cheapest edge along the path $P_c$
--- denote this solution as $\br{\bar{z}_{c}^{e}}_{c\in C,\,e\in E}$ where
$\bar{z}_{c}^{e} = 0$ for $c \in C_{\bigs}$.
The total cost of the constructed solution is
\begin{equation*}
 \sum_{e\in E}c_{e}\cdot z^e + \sum_{e\in E}c_e\sum_{c\in C}p_c\cdot\bar{z}_{c}^{e} \le 2\cdot\sum_{e\in E}c_{e}\cdot \br{\frac{3}{2}x^{e}} + \sum_{e\in E}c_e\sum_{c\in C}p_c\cdot\br{3\cdot\bar{x}_{c}^{e}} = 3\cdot OPT_{LP}.
\end{equation*}
\end{proof}

Corollary \ref{cor:mct} directly follows from the above two lemmas.

\end{document}